\newcommand{\tAppA}{0}
\newcommand{\tAppB}{0}
\newif\ifnotes
\newcommand{\pair}[2]{\langle#1,#2\rangle}
\newcommand{\mkmsg}[2]{\langle#1,#2\rangle}
\newcommand{\mkmsgv}[3]{\langle#1,#2,#3\rangle}
\newcommand{\mkmsgvs}[4]{\langle#1,#2,#3,#4\rangle}
\newcommand{\META}[1]{{\mathit{#1}}}
\newcommand{\roundbased}{{\mathcal{R}}\mathit{ound}{\mathcal{B}}\mathit{ased}\xspace}
\newcommand{\duration}{\mathbb{T}}
\newcommand{\hb}[2]{\mbox{\texttt{HB}}\left(#1,#2\right)}
\newcommand{\rhb}{\mathcal{R}_{\mathit{HB}}}
\newcommand{\sq}{\META{sq}}
\newcommand{\bcastOpSYMB}{\mbox{\texttt{RTBRB-broadcast}}}
\newcommand{\bcastOpM}{\bcastOpSYMB()}
\newcommand{\bcastOp}[3]{\bcastOpSYMB(#1,#2,#3)}
\newcommand{\delivSYMB}{\mbox{\texttt{Deliver}}}
\newcommand{\deliv}[2]{\delivSYMB\left(#1,#2\right)}
\newcommand{\delivOpSYMB}{\mbox{\texttt{RTBRB-deliver}}}
\newcommand{\delivOpM}{\delivOpSYMB()}
\newcommand{\delivOp}[3]{\delivOpSYMB(#1,#2,#3)}
\newcommand{\rdelivSYMB}{\mathcal{R}_{\mathit{deliver}}}
\newcommand{\rdeliv}[3]{\rdelivSYMB(#1,#2,#3)}
\newcommand{\delivMode}{\texttt{deliver}}
\newcommand{\echoSYMB}{\mbox{\texttt{Echo}}}
\newcommand{\echoM}{\echoSYMB\left(\right)}
\newcommand{\echo}[2]{\echoSYMB\left(#1,#2\right)}
\newcommand{\rechoSYMB}{\mathcal{R}_{\mathit{echo}}}
\newcommand{\recho}[3]{\rechoSYMB(#1,#2,#3)}
\newcommand{\echoMode}{\texttt{echo}}
\newcommand{\deliverMsgSYMB}{\texttt{deliver-msg}}
\newcommand{\deliverMsg}[4]{\deliverMsgSYMB(#1,#2,#3,#4)}
\newcommand{\deliverMsgAt}[5]{\deliverMsgSYMB_{#1}(#2,#3,#4,#5)}
\newcommand{\tDiffuseSYMB}{\texttt{b-diffuse}}
\newcommand{\tDiffuseF}{\tDiffuseSYMB()}
\newcommand{\tDiffuse}[3]{\tDiffuseSYMB(#1,#2,#3)}
\newcommand{\initializationSYMB}{\texttt{initialization}}
\newcommand{\initialization}{\initializationSYMB()}
\newcommand{\checkConnectivitySYMB}{\texttt{check-connectivity}}
\newcommand{\checkConnectivity}{\checkConnectivitySYMB()}
\newcommand{\timeOutSYMB}{\texttt{Timeout}}
\newcommand{\timeOut}[3]{\timeOutSYMB(#1,#2,#3)}
\newcommand{\expiredTimeoutSYMB}{\texttt{Expired-Timer}}
\newcommand{\expiredTimeout}[3]{\expiredTimeoutSYMB(#1,#2,#3)}
\newcommand{\timeOutConnSYMB}{\texttt{Timeout}}
\newcommand{\timeOutConn}[2]{\timeOutConnSYMB(#1,#2)}
\newcommand{\expiredTimeoutConnSYMB}{\texttt{Expired-Timer}}
\newcommand{\expiredTimeoutConn}[2]{\expiredTimeoutConnSYMB(#1,#2)}
\newcommand{\rtbcProposeSYMB}{\mbox{\texttt{RTBC-propose}}}
\newcommand{\rtbcPropose}[3]{\rtbcProposeSYMB(#1,#2,#3)}
\newcommand{\rtbcDecideSYMB}{\mbox{\texttt{RTBC-decide}}}
\newcommand{\rtbcDecide}[3]{\rtbcDecideSYMB(#1,#2,#3)}
\newcommand{\rtbcInitSYMB}{\mbox{\texttt{RTBC-init}}}
\newcommand{\rtbcInit}[1]{\rtbcInitSYMB(#1)}
\newcommand{\rtbabBcastSYMB}{\mbox{\texttt{RTBAB-broadcast}}}
\newcommand{\rtbabBcastM}{\rtbabBcastSYMB()}
\newcommand{\rtbabBcast}[2]{\rtbabBcastSYMB(#1,#2)}
\newcommand{\rtbabDelivSYMB}{\mbox{\texttt{RTBAB-deliver}}}
\newcommand{\rtbabDelivM}{\rtbabDelivSYMB()}
\newcommand{\rtbabDeliv}[2]{\rtbabDelivSYMB(#1,#2)}
\newcommand{\rtbabInitSYMB}{\mbox{\texttt{RTBAB-init}}}
\newcommand{\rtbabInit}[1]{\rtbabInitSYMB(#1)}
\newcommand{\RTBABunordered}{\META{unordered}}
\newcommand{\RTBABdelivered}{\META{delivered}}
\newcommand{\RTBABnext}{\META{next}}
\newcommand{\RTBABseq}{\META{seq}}
\newcommand{\RTBABbusy}{\META{busy}}
\newcommand{\RTBABinst}{\META{inst}}
\newcommand{\wait}[1]{\texttt{wait}(#1)}
\newcommand{\FALSE}{\texttt{False}}
\newcommand{\TRUE}{\texttt{True}}
\newcommand{\leaderSYMB}{\texttt{leader}}
\newcommand{\leader}[1]{\leaderSYMB(#1)}
\newcommand{\nIter}[2]{\lceil\frac{#1}{#2}\rceil}
\newcommand{\comalgo}[1]{\textit{\textcolor{Maroon}{#1}}}
\newcommand{\deltaRTBC}{\Delta_{\mathtt{C}}}
\newcommand{\deltaRTBRB}{\Delta_{\mathtt{R}}}
\newcommand{\deltaRTBAB}{\Delta_{\mathtt{A}}}
\newcommand{\deltaWait}{\Delta_{\mathtt{W}}}
\newcommand{\deltaB}{\Delta_{\mathtt{B}}}
\newcommand{\deltaP}{\Delta_{\mathtt{P}}}
\newcommand{\deltaD}{\Delta_{\mathtt{D}}}
\newcommand{\fanout}{X}
\newcommand{\total}{N}
\newcommand{\trace}{\tau}
\newcommand{\intitle}[1]{{\smallskip\noindent\textbf{#1}}}
\definecolor{dgreen}{RGB}{0,100,0}
\definecolor{dred}{RGB}{139,0,0}
\definecolor{vrpink}{RGB}{255,0,127}
\definecolor{vrpinkb}{RGB}{255,00,200}
\definecolor{vrblue}{RGB}{30,144,255}
\definecolor{vrolive}{RGB}{85,107,47}
\definecolor{vrroyalblue}{RGB}{65,105,225}
\definecolor{vrlpink}{RGB}{255,192,203}
\newcommand{\newtext}[1]{\textcolor{black}{#1}}
\algnewcommand\algorithmicswitch{\textbf{switch}}
\algnewcommand\algorithmiccase{\textbf{case}}
\algnewcommand\algorithmicassert{\texttt{assert}}
\algnewcommand\Assert[1]{\State \algorithmicassert(#1)}%
\begin{document}

\title{PISTIS: An Event-Triggered Real-Time Byzantine-Resilient Protocol Suite\\ \large{(Extended Version)}}

\author{\IEEEauthorblockN{David Kozhaya$^{1}$, J\'er\'emie Decouchant$^{2,*}$, Vincent Rahli$^{3,\dagger}$, and Paulo Esteves-Verissimo$^{4,*}$}
  
  \IEEEauthorblockA{
    $^{1}$ABB Research Switzerland; 
    $^{2}$TU Delft; 
    $^{3}$University of Birmingham;
    $^{4}$KAUST - RC3
      \thanks{$^{*}$Work partly performed while these authors were with the University of Luxembourg.}
      \thanks{$^{\dagger}$Rahli was partially supported by the National Cyber Security Centre (NCSC) project: Aion: Verification of Critical Components’ Timely Behavior in Probabilistic Environments.}
  \vspace*{-20pt}}
}

\maketitle
\thispagestyle{plain}
\pagestyle{plain}
\begin{abstract}
The accelerated digitalisation of society along with technological
evolution have extended the geographical span of cyber-physical
systems.
Two main threats have made the reliable and real-time control of these systems challenging: (i) uncertainty in the communication infrastructure induced by scale,
 and heterogeneity of the environment and devices; and (ii)
targeted attacks maliciously worsening the impact of the
above-mentioned communication uncertainties, disrupting the
correctness of real-time applications.  

This paper addresses those challenges by showing how to
build distributed protocols that provide both real-time with practical
performance, and scalability in the presence of network faults and
attacks, \newtext{in probabilistic synchronous environments}.
We provide a suite of real-time Byzantine protocols, which we
prove correct, starting from a reliable broadcast protocol, called
\emph{PISTIS}, up to atomic broadcast and consensus.
This suite simplifies the construction of powerful distributed and decentralized
monitoring and control applications, including state-machine
replication.
Extensive empirical
\newtext{simulations}
 showcase PISTIS's robustness, latency, and scalability.
For example, PISTIS can withstand message loss (and delay) rates up to
50$\%$
in systems with 49 nodes and provides bounded delivery latencies in the order of a few
milliseconds.
\end{abstract}

\begin{IEEEkeywords}
  real-time distributed systems, probabilistic losses, consensus, atomic broadcast, Byzantine resilience, intrusion tolerance.
\end{IEEEkeywords}

\IEEEpeerreviewmaketitle

\newtheorem{theorem}{Theorem}
\newtheorem{example}{Example}
\newtheorem{corollary}{Corollary}
\newtheorem{lemma}{Lemma}
\newtheorem{remark}{Remark}
\newtheorem{definition}{Definition}
\newtheorem{conjecture}{Conjecture}
\newtheorem{observation}{Observation}
\newtheorem{conclusion}{Conclusion}
\newtheorem{assumption}{Assumption}
\newtheorem{property}{Property}
\newtheorem{sketch}{Proof Sketch}

\definecolor{orange}{rgb}{1,0.5,0}

\vspace*{-10pt}\section{Introduction}
\label{intro}

The accelerated digitalisation of society has significantly shifted
the way that physical infrastructures---including large continuous
process plants, manufacturing shop-floors, power grid installations,
and even ecosystems of connected cars---are operated nowadays.
Technological evolution has made it possible to orchestrate a higher
and finer degree of automation, through the proliferation of multiple
sensing, computing, and communication devices that monitor and
control such infrastructures.
These monitoring and control devices are distributed by nature of the
geographical separation of the physical processes they are concerned
with.
The overall systems, i.e., the physical infrastructures with their
monitoring and control apparatus, are generally known as
\textit{cyber-physical systems} (CPS)~\cite{controlarea}.
However, transposing the monitoring and control functionality normally
available in classical, %
real-time (\newtext{i.e., adhering to given time bounds}) and
embedded systems, to the distributed CPS scenarios mentioned above,
is a very challenging task, due to two main reasons.

First, the scale of the systems
 as well as
the heterogeneity of devices (sensors, actuators and gateways), induce
uncertainty in the communication infrastructure interconnecting them,
itself often diverse too, e.g., Bluetooth, Wireless IEEE 802.11, or
Fiber~\cite{endtoendreal-time,controlloss1,eors,D1.1}.
These communication uncertainties become
evident~\cite{controlloss1,eors,D1.1}, namely in the form of
link faults and message delays, which hamper the necessary reliability
and synchronism needed to realize real-time operations, be it when
fetching monitoring data or when pushing decisions to controllers.

Second, security vulnerabilities of many integrated devices, as well as the criticality of the managed physical structures, increase the likelihood of targeted
attacks~\cite{sensorsecurity,sensorsecurity1}. Such attacks can aim to inflict
inconsistencies across system components or to disrupt the timeliness
and correctness of real-time applications. %
The consequences of such attacks can range from loss
of availability to severe physical damage~\cite{blackout}.

This paper addresses the challenges above, %
which render traditional approaches for building real-time
communications, ineffective in wide-scale, uncertain, and vulnerable settings.
We investigate, in particular, how to build large-scale distributed
protocols that can provide real-time communication guarantees and can
tolerate network faults and attacks, \newtext{in probabilistic synchronous environments}.
These protocols simplify the
construction of powerful distributed monitoring and
control applications, including state-machine replication for fault tolerance.  To our knowledge, literature, with the exception of~\cite{flaviu,RT-ByzCast}, has
targeted achieving either real-time guarantees or
Byzantine-resilience with network uncertainties, but not both.

To bridge this gap, we present a protocol suite of real-time Byzantine
protocols, providing several message delivery semantics, from reliable
broadcast (\emph{PISTIS}\footnote{PISTIS was a Greek goddess who
  represented the personified spirit (daimona) of trust, honesty and
  good faith.}), through consensus (\emph{PISTIS-CS}), to atomic
broadcast (\emph{PISTIS-AT}).
PISTIS is capable of: (i) delivering real-time practical performance
\newtext{(i.e., correct nodes provide guarantees within given time bounds)} 
in the presence of aggressive faults and
attacks \newtext{(i.e., one third of the nodes being Byzantine, and
  high message loss rates)}; and (ii) scaling with increasing system size.

The main idea underlying PISTIS is an event-triggered signature based approach
to constantly
monitor the network connectivity among processes. %
Connectivity is measured thanks to the
broadcast messages: processes embed signed monitoring
information within the messages of the broadcast protocol and exclude
themselves from the protocol when they are a threat to
timeliness. Hence, PISTIS does not modularly build on
membership/failure detector oracles (like in traditional distributed
computing) but rather directly incorporates such functionalities within. In
fact, modularity in this sense was proven to be impossible for
algorithms implementing PISTIS-like guarantees~\cite{RT-ByzCast}.  In
order to mask network uncertainties in a scalable manner, PISTIS uses
a temporal and spatial gossip-style message diffusion with fast
signature verification schemes.

We empirically show that PISTIS is robust.
For example PISTIS can tolerate message loss rates of up to
40$\%$, 50$\%$, 60$\%$, and 70$\%$
in systems with 25, 49, 73, and 300 nodes respectively:
PISTIS has a negligible probability of being unavailable under such
losses.  We also show that PISTIS can meet the strict timing constraints of a
large class of typical CPS applications, mainly in \newtext{Supervisory Control
And Data Acquisition (SCADA) and Internet of Things (IoT)}
areas, e.g., (1) fast automatic interactions
($\leq{20}\mbox{ms}$) for systems with up to 200 nodes, (2) power systems and substation
automation
applications ($\leq{100}\mbox{ms}$) for systems with up to 1000 nodes, and (3) slow speed
auto-control functions ($\leq{500}\mbox{ms}$), continuous control
applications ($\leq{1}\mbox{s}$) as well as operator commands of SCADA
applications ($\leq{2}\mbox{s}$) for systems with 1000 nodes or more. Such SCADA and IoT applications could include up to hundreds of devices where reliable and timely communication is required.

By using PISTIS as the baseline real-time Byzantine reliable broadcast
protocol, we prove that (and show how) higher-level real-time
Byzantine resilient abstractions can be modularly implemented, namely,
consensus and atomic broadcast.
Interestingly, we prove that this can be realized with negligible
effort: (1)~we exhibit classes of algorithms which are amenable to
real-time operations by re-using existing synchronous algorithms from
the literature; and (2)~we rely on PISTIS, which addresses and
tolerates the most relevant problems posed by the communication
environment, including the impossibility of modularly handling
membership/failure detection~\cite{RT-ByzCast}.

In short, %
our contributions are:
\begin{itemize}
\item The PISTIS protocol suite\newtext{, which is to the best of our
  knowledge the first generic and modular protocol suite that
  provides message delivery guarantees for protocols ranging from Byzantine
  reliable broadcast to Byzantine atomic broadcast.}
  PISTIS itself is an event-triggered real-time Byzantine
  reliable broadcast algorithm that has higher scalability and faster
  message delivery than conventional time-triggered real-time
  algorithms, in the presence of randomized and unbounded network
  disruptions. Building on top of PISTIS, we present classes of
  algorithms, PISTIS-CS and PISTIS-AT, that implement
  real-time Byzantine consensus and atomic broadcast,  respectively.
\item Correctness proofs of the PISTIS protocol suite.
 We provide the main proof results in this paper (exhaustive proofs are deferred
  \ifx\tAppA\tAppB%
  to Appx.~\ref{appx:correctness-pistis}).
  \else%
  to~\cite[Appx.B]{Kozhaya+Decouchant+Rahli+Verissimo:pistis:long:2019}).
  \fi
\item Extensive empirical
\newtext{simulations} using  Omnet++~\cite{omnet}
 that showcase PISTIS's robustness,
  latency, and scalability. %
\end{itemize}

\intitle{Roadmap.}
The rest of the paper is organized as follows. Sec.~\ref{related work} discusses
related work.  Sec.~\ref{Sysmodel}
details our system model.  Sec.~\ref{sec:rtbrb} recalls the
properties of a real-time Byzantine reliable broadcast, and presents
our algorithm, PISTIS, in details.  Sec.~\ref{sec: system
  transformation} shows and proves how real-time Byzantine atomic broadcast and consensus can be realized on top of PISTIS's guarantees
using classes of existing algorithms.  Sec.~\ref{sec: evaluation}
evaluates the performance and reliability of PISTIS.  Finally, Sec.~\ref{conclusion} concludes the paper. For space limitations, proofs
and additional material are deferred \ifx\tAppA\tAppB%
to Appendices.
\else%
to a dedicated report~\cite{Kozhaya+Decouchant+Rahli+Verissimo:pistis:long:2019}.
\fi

\section{Related Work}
\label{related work}

\newtext{Reliable broadcast is a standard abstraction to ensure that
  the (correct) nodes of a distributed system agree on the delivery
  of messages even in the presence of faulty nodes. Byzantine
  reliable broadcast in particular guarantees that (correct) nodes
  agree even in the presence of arbitrary faults. It is a key building
  block of reliable distributed systems such as Byzantine
  Fault-Tolerant State Machine Replication protocols, which are
  nowadays primarily used in blockchain systems. Pioneered by the work
  of Dolev~\cite{Dolev:focs:1981} and Bracha~\cite{Bracha:1987},
  many protocols have been proposed since then that are intended to
  work in various environments. The focus of our paper is on novel
  Byzantine broadcast primitives and protocols that achieve timeliness
  guarantees.}

This paper has evolved from, and improved over, a research line paved
by~\cite{flaviu,AMP,RT-ByzCast} on timing aspects of reliable
broadcast and Byzantine algorithms. Besides these works, the
literature on broadcast primitives, to the best of our knowledge,
either does not take into account timeliness and maliciousness or
addresses them separately.

Cristian et al.~\cite{flaviu} assumed that all correct processes
remain synchronously connected, regardless of process and network
failures.
This strong network assumption is too optimistic, both in terms of scale and
timing behaviour, which in practice leads to poor
performance (latency of approximately $2.4$ seconds with 25 processes---see Table~\ref{fig:worst-case-latencies} in
Sec.~\ref{sec:PISTIS-overhead} for more details).  Moreover,
Cristian et al.'s system model does not allow processes that malfunction (e.g., by violating timing assumptions) to know that they
are treated as faulty by the model. Our algorithm, in
comparison, provides latencies in the range of few
milliseconds and our model makes processes aware of their untimeliness. %

Verissimo et al.~\cite{AMP} addressed the timeliness problem by
\emph{weak-fail-silence}: despite the capability of the transmission
medium to deliver messages reliably and in real-time, the protocol
should not be agnostic of potential timing or omission faults (even if
sporadic).
The bounded omissions assumption (pre-defined maximum number of
omissions) of~\cite{AMP} could not be taken as is, if
we were to tolerate higher and more uncertain faults (as we consider
in this paper): it could easily lead to system unavailability in
faulty periods. Hence we operate with much higher uncertainty levels
(faults and attacks).

Kozhaya et al.~\cite{RT-ByzCast} devised a Byzantine-resilient
algorithm that provides an upper bound on the delivery latency of messages.
This algorithm is time-triggered and relies on an all-to-all
communication that limits the algorithm's scalability. Our work
improves over~\cite{RT-ByzCast} on several points: (i) we reduce the
delivery latency (few milliseconds as shown in
Fig.~\ref{fig:duration1ms} and Fig.~\ref{fig:duration5ms} compared to a few hundred as shown in~\cite[Fig.~8]{RT-ByzCast}---see also
Table~\ref{fig:worst-case-latencies} for a comparison of worst case
latencies) by adopting an event-triggered approach instead of a
round-based one; (ii) we improve the system's scalability (at least 5
times less bandwidth consumption) by adopting a gossip-based
dissemination instead of an all-to-all communication; and (iii) we
show how real-time broadcast primitives can be modularly used to build
real-time Byzantine-resilient high-level abstractions like consensus
and atomic broadcast.

Guerraoui et al.~\cite{Guerraoui+al:disc:2019} designed a scalable
reliable broadcast abstraction that can also be used in a
probabilistic setting where each of its properties can be violated
with low probability. They achieve a scalable solution by relying on
stochastic samples instead of quorums, where samples can be much
smaller than quorums. As opposed to this work, our goal is to design a
deterministic abstraction where the property are never violated:
\newtext{the real-time Byzantine-resilient reliable broadcast
  primitive discussed in Sec.\ref{sec:rtbrb}} is deterministic because
late processes become passive, and therefore count as being faulty.

In~\cite{Babay+al:dsn:2019,Babay+al:dsn:spire:2018}, the authors
present a Byzantine fault-tolerant SCADA system that relies on the
Prime~\cite{Amir+Coan+Kirsch+Lane:dsn:2008,Amir+Coan+Kirsch+Lane:tdsc:prime:2011}
\newtext{Byzantine Fault Tolerant State Machine
  Replication~\cite{Schneider:csur:1990,Castro+Liskov:osdi:1999}
  (BFT-SMR) protocol} protocol to ensure both safety and latency
guarantees. As opposed to PISTIS, Prime relies on an asynchronous
primary-based BFT-SMR. As opposed to Prime, PISTIS-CS and PISTIS-AT
algorithms are designed modularly from a timely reliable broadcast
primitive; and PISTIS allows slow connections between any processes in
a probabilistic synchronous environment, while Prime relies on the
existence of a ``stable'' timely set of processes.

\section{System and Threat Model}
\label{Sysmodel}

\subsection{System Model}\label{system model}

\textbf{Processes.}
We consider a distributed system consisting of a set
$\mathit{\Pi}=\{p_0, p_1, ..., p_{\total-1}\}$ of $\total>1$ processes.  We
assume that processes are uniquely identifiable
and can use digital signatures to verify the authenticity of messages and enforce their integrity. We denote by $\sigma_i(v)$ the signature of
value $v$ by process $p_i$.  We often write $\sigma_i$, when the
payload is clear from the context.
Processes are synchronous, i.e., the delay for performing a local step
has a fixed known bound (note that this does not apply to faulty
processes---see below).

\medskip
\textbf{Clocks.}
Processes have access to local clocks with a bounded and negligible rate drift to real time. These clocks do not need to be synchronized. 

\medskip
\textbf{Communication.}
Every pair of processes is connected by two logical uni-directional
links, e.g., $p_i$ and $p_j$ are connected by links
$l_{ij}$ and $l_{ji}$. Links can abstract a physical bus or a
dedicated network link.
We assume a \emph{probabilistic synchronous communication model}. %
This means that in any transmission attempt to send a message
over on link $l_{ij}$ (with $i \neq j$) at some time $t$, there is a probability $P_{ij}(t)$ that the message reaches its
destination and within a maximum delay $d$ (known to the processes). $d$ is the upper time bound on non-lossy message delivery and $\epsilon_{1}<1-P_{ij}(t)<\epsilon_{2} \ll 1$ where $\epsilon_{1}$ and $\epsilon_{2}$ are small strictly positive values. Such violations exist in networks, as arguably all communication is
prone to unpredictable disturbances, e.g., bandwidth limitation, bad
channel quality, interference, collisions, and stack
overflows~\cite{eors}.
Our probabilistic synchronous
communication has been shown to be weaker, in some
sense~\cite{neverSayNever:ipdps}, than partial
synchrony~\cite{Dwork:1987}. We further discuss and compare our model
to existing traditional ones
\ifx\tAppA\tAppB%
in Appx.~\ref{app:partial-sync}.
\else%
in~\cite[Appx.~A]{Kozhaya+Decouchant+Rahli+Verissimo:pistis:long:2019}.
\fi
We do not model correlated losses explicitly, as previous works	like~\cite{RT-ByzCast} have shown that such bursts can be mitigated and we leave it up to the applications to define how to deal with late messages (i.e., violating the $d$ delay assumption).%

\subsection{Threat Model}
\label{threat model}

\textbf{Processes.}
We assume that some processes can exhibit arbitrary,
a.k.a. \textit{Byzantine}, behavior. Byzantine nodes can abstract
processes that have been compromised by attackers, or are executing
the algorithm incorrectly, e.g., as a result of some fault (software
or hardware).  A Byzantine process can behave arbitrarily, e.g., it
may crash, fail to send or receive messages, delay messages, send
arbitrary~messages,~etc.

We assume that at most $f=\lfloor\frac{\total-1}{3}\rfloor$ processes can
be Byzantine.  This formula was proved to be an upper bound for
solving many forms of agreement in a variety of models such as in non-synchronous
models~\cite{Dolev:1981:BGS:891722,Fischer+Lynch+Merritt:dc:1986}.

We allow nodes to become \emph{passive} in
case they fail to execute in a timely fashion.
\newtext{As explained in Sec.~\ref{sec:et-rtbyzcast}, passive nodes
  stop executing key events to guarantee timeliness.}
A process that exhibits a Byzantine behavior or that enters the
passive mode (see Sec.~\ref{sec:et-rtbyzcast}) is termed
\textit{faulty}. Otherwise, the process is said to be
\textit{correct}.
Note that passive nodes are considered faulty (at least) during the
time they are passive, but are not counted against the $f$ Byzantine
faults.
Therefore, more than $f$ nodes could be faulty in a system over the
full lifespan of a system (up to $f$ nodes could be Byzantine, and up to $N$ processes could be momentarily passive).\\

\medskip
\vspace*{-10pt}\textbf{Clocks.}
The bounded and negligible rate drift assumption in
Sec.~\ref{system model}
has to hold only on a per protocol execution basis, easily met by
current technology (such as techniques relying on GPS~\cite{clock}
or trusted components~\cite{tcb}). Hence the clock of a
correct
process always behaves as described in Sec.~\ref{system model}.

\medskip
\textbf{Communication.}
We assume that Byzantine processes or network adversaries cannot modify the
content of messages sent on a link connecting correct processes
(implemented by authentication through unforgeable signatures~\cite{unforgsig}).

\section{Real-Time Byzantine Reliable Broadcast}
\label{sec:rtbrb}

We now present our solution to guarantee that correct
  nodes reliably deliver broadcast messages in a timely fashion,
  despite Byzantine nodes, and communication disruptions.
  Sec.~\ref{sec:bcast-abstraction} recalls the properties
  of the real-time Byzantine-resilient reliable broadcast (RTBRB)
  primitive~\cite{RT-ByzCast}. Then,
  Sec.~\ref{sec:pistis-overview} presents a high-level overview of the PISTIS
  event-triggered algorithm, which implements the RTBRB primitive,
  while Sec.~\ref{sec:et-rtbyzcast} provides a detailed presentation
  of PISTIS. Finally, Sec.~\ref{sec:recovery} explains how passive
  nodes can recover and become active again to ensure the liveness of
  the system.

\subsection{Real-time Byzantine Reliable Broadcast Abstraction}
\label{sec:bcast-abstraction}

\begin{definition}[RTBRB]
\label{def:rtbrb}
The real-time Byzantine reliable broadcast (RTBRB) primitive
guarantees the following properties~\cite{RT-ByzCast},
assuming every message is uniquely identified (e.g., using the pair of
a sequence number and a process id---the broadcaster's
id).\footnote{RTBRB's properties are equivalent to the ones of the
  Byzantine reliable broadcast abstraction defined
  in~\cite[Module~3.12,p.117]{Guerraoui:2006:IRD:1137759}, excluding
  \emph{Timeliness}.} In this abstraction, a process broadcasts a message by invoking
$\bcastOpM$.  Similarly, a process delivers a message by
invoking $\delivOpM$.
\begin{itemize}
\item \textbf{\textit{RTBRB-Validity:}} If a correct process $p$
  broadcasts $m$, then some correct process eventually delivers
  $m$.

\item \textbf{\textit{RTBRB-No duplication:}} No correct process
  delivers message $m$ more than once.

\item \textbf{\textit{RTBRB-Integrity:}} If some correct process
  delivers a message $m$ with sender $p_i$ and process $p_i$ is
  correct, then $m$ was previously broadcast by $p_i$.

\item \textbf{\textit{RTBRB-Agreement:}} If some correct process
  delivers $m$, then every correct process eventually
  delivers~$m$.

\item \textbf{\textit{RTBRB-Timeliness:}} There exists a known
  $\deltaRTBRB$ such that if a correct process broadcasts $m$ at
  real-time $t$, no correct process delivers $m$ after
  real~time~$t+\deltaRTBRB$.
\end{itemize}
\end{definition}

It is important to note that the above abstraction does not enforce
ordering on the delivery of messages sent. We elaborate more on that
and how to achieve order in Sec.~\ref{sec: system transformation}.
Note also that in a system consisting of correct and faulty
nodes, these properties ensure that correct nodes deliver broadcast messages
within a bounded delay, while no such guarantee is (and can be) provided about faulty nodes.

\subsection{Overview of PISTIS}
\label{sec:pistis-overview}

This section presents a high-level description of \emph{PISTIS}. %
For simplicity, we assume the total number of processes to
be $\total=3f+1$, in which case a Byzantine quorum has a size of
$2f+1$.  PISTIS guarantees RTBRB properties deterministically despite
the probabilistic lossy network. However, this comes at the price of
PISTIS triggering an entire system fail-safe (shutdown) and a
reinitialization of system state when violating RTBRB-Timeliness is
inevitable. We show later in Sec.~\ref{sec: evaluation} that the
probability of PISTIS causing such system fail-safe (and hence
violating an RTBRB property if fail-safe was not triggered) is
negligible.

\medskip
\textbf{System Awareness.}

Given that broadcasts can be invoked at unknown times, there might
exist a correct process in $\mathit{\Pi}\setminus\{p_i\}$ that is
unaware of $p_i$'s broadcast for an unbounded amount of time after it
was issued, since all links can lose an unbounded number of messages.
The occurrence of such scenarios may hinder the system's ability of
delivering real-time guarantees.
To this end, we require that every process~$p_j$ constantly exchanges
messages with the rest of the system.  This regular message exchange
aims at capturing how well $p_j$ is connected to other processes, and
hence to what extent $p_j$ is up-to-date with what is going on in the
system (and to what extent the system knows about $p_j$'s state).  We
achieve this constant periodic message exchange via a function, which
we call \textit{proof-of-connectivity}.\footnote{Periodic message exchange (heartbeats) has been used to discover the
  network state in many monitoring algorithms~\cite{Aguilera:comm,Dav}}
It requires each process to diffuse heartbeats to the rest of the
system in overlapping rounds: a new round is started every~$d$ time
units, and each round is of a fixed duration~$\mathbb{T}$, where
$d<\mathbb{T}$.
\newtext{(Sec.~\ref{sec: evaluation} shows that $\mathbb{T}=8d$ is a
  reasonably good value, while Sec.~\ref{sec:pistis-props}
  highlights the need for overlapping rounds.)}
A round consists in repeatedly (every $d$ units of time) diffusing a
signed heartbeat message to $\fanout$ other processes. $\fanout$
stands for the number of processes to which a process sends a
message in a communication step. The value of $\fanout$ is fixed at
deployment time (i.e., does not change over the execution of a system)
and can range between $0$ and $\total-1$. It is used to avoid network
congestions by enforcing that processes selectively send their
messages to an arbitrary subset of the system.
\newtext{Each round consists then in repeatedly sending
  $\nIter{\mathbb{T}}{d}$ times a message, each time to $\fanout$
  other nodes.
Note that even though the value of $\fanout$ is fixed, in any given
round the set of $\fanout$ processes to which the message is sent in
every repetition can change such that the union of processes to which
the message is sent in all $\nIter{\mathbb{T}}{d}$ repetitions in that
round covers all processes in the system. This is possible when
$N\leq\fanout\times\nIter{\mathbb{T}}{d}$, which we always guarantee
in practice.}
Heartbeat messages are
uniquely identified by sequence numbers, which are incremented prior
to each round.
On receipt of a heartbeat message, a correct process appends its own
signature to it as well as all other seen signatures relative to that
heartbeat; and sends it to $\fanout$ other processes.
At the end of each round, if a process does not receive at least $2f+1$
signatures (including its own) on its own heartbeat, it enters the passive mode.

\begin{figure}[!t]
  \begin{center}
    \includegraphics[width=0.36\textwidth]{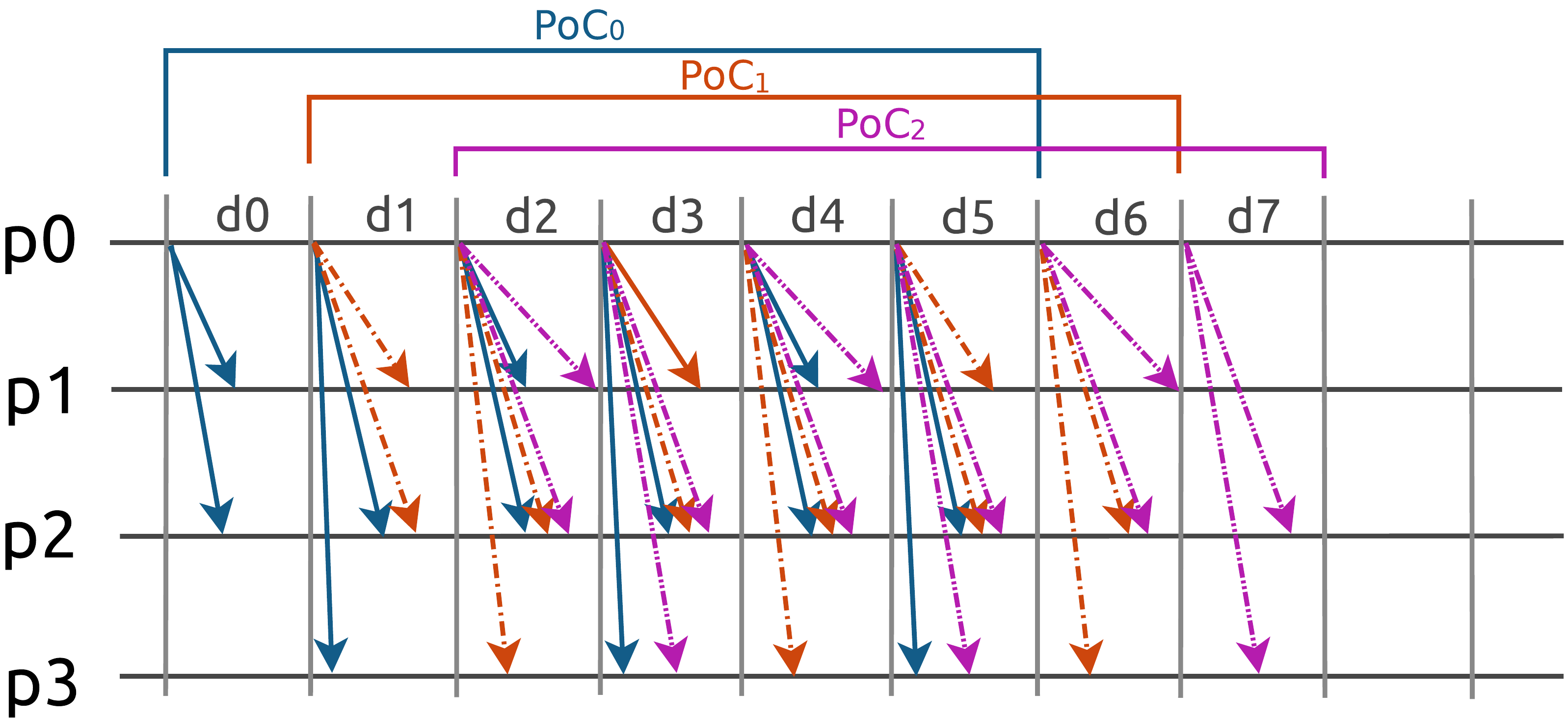}
  \end{center}
  \vspace*{-10pt}
  \caption{Example of a proof-of-connectivity run, where $\fanout=2f+1$,
    and where 2 repetitions allow covering all nodes}
  \label{fig:msd-proof-of-connectivity}
\end{figure}

Fig.~\ref{fig:msd-proof-of-connectivity} provides an example of a run
of the proof-of-connectivity protocol, depicted as a message sequence
diagram, in a system composed of 4 processes. This figure depicts part
of the three first rounds of proof-of-connectivity initiated by $p_0$
(we only show the messages sent by $p_0$ to avoid cluttering the
picture), namely $\META{PoC}_0$ in blue, $\META{PoC}_1$ in orange, and
$\META{PoC}_2$ in purple. In addition, in that case, each proof of
connectivity round is of length $\mathbb{T}=6d$. Therefore, the blue
$\META{PoC}_0$ heartbeats are sent 6 times between $d_0$ and~$d_5$,
the orange $\META{PoC}_1$ heartbeats are sent 6 times between $d_1$
and~$d_6$, and the purple $\META{PoC}_2$ heartbeats are sent 6 times
between $d_2$ and $d_7$. If by the end of $\META{PoC}_0$, $p_0$ has
not received $2f$ replies to its heartbeats, it will become passive.

\medskip
\textbf{Diffusing Broadcasts.} %

PISTIS relies on two types of messages ($\echoSYMB$ and $\delivSYMB$
messages) to ensure that broadcast values are delivered in a timely
fashion.
Processes exchange $\echoSYMB$ messages either to start broadcasting
new values, or in response to received $\echoSYMB$ messages.
$\echoSYMB$ messages help processes gather a valid quorum (a Byzantine
write quorum~\cite{Malkhi+Reiter:stoc:1997} of size $2f+1$) of
signatures on a single value $v$ relative to a broadcast instance. A
broadcast instance is identified by the id of the process broadcasting
$v$ and a sequence number.  $\echoSYMB$ messages help prevent system
inconsistencies when malicious nodes send different values with the
same sequence number (same broadcast instance) to different recipients.
However, additional messages, namely $\delivSYMB$ messages, are needed
to help achieve delivery within a bounded time after the
broadcast.

When a process $p_i$ receives a value $v$ through an $\echoSYMB$
message, it appends its signature to the message as well as all
other signatures it has received relative to $v$; and sends it
to $\fanout$ other processes.
In addition, when $p_i$ receives a value for the first time, it
triggers a local timer of duration~$\mathbb{T}$.
Upon receiving a value signed by more than $2f$
processes, a process delivers that value.  However, a process that does not
receive more than $2f$ signatures on time (i.e., before the timer
expires) enters the passive mode.
In case multiple values are heard relative to a single process and
sequence number (equivocation), then the first heard value is the one
to be echoed.
Note that processes continue executing the proof-of-connectivity
function during the \emph{echo} and \emph{deliver} phases however by
piggybacking heartbeats to echo/deliver messages.

As opposed to $\echoSYMB$ messages that are diffused (i.e.,
re-transmitted temporally and sporadically) for a duration
$\mathbb{T}$, $\delivSYMB$ messages are diffused for $2\mathbb{T}$.
This is needed to ensure that if some correct processes start
diffusing a message between some time $t$ and
$t+\mathbb{T}$, possibly at
different times, then there must be a $\mathbb{T}$-long period of time
where all of them are diffusing the message
\ifx\tAppA\tAppB%
(see Lemma~\ref{lem:intersecting-delivery} in
Appx.~\ref{appx:correctness-pistis}
for more details).
\else%
(see~\cite[Appx.~B, Lem.~4]{Kozhaya+Decouchant+Rahli+Verissimo:pistis:long:2019}
for more details).
\fi
Given a large
enough collection of such processes ($f+1$ correct processes), this
allows other processes to learn about delivered values in a timely
fashion.

\begin{figure}[!t]
  \begin{center}
    \includegraphics[width=0.49\textwidth]{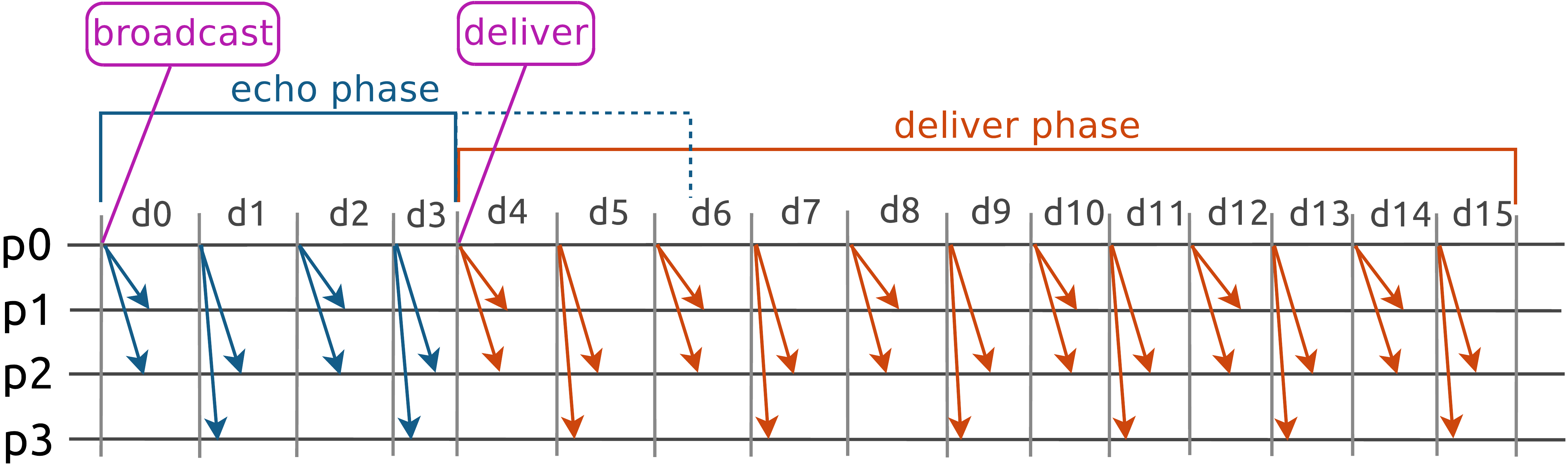}
  \end{center}
  \vspace*{-10pt}
  \caption{Example of a PISTIS run where $\fanout=2f+1$,
    and where 2 repetitions allow covering all nodes}
  \label{fig:msd-pistis}
\end{figure}

Fig.~\ref{fig:msd-pistis} provides an example of a run of PISTIS,
depicted as a message sequence diagram. The system is composed of 4
processes. This figure depicts part of the echo (in blue) and deliver
(in orange) phases of one broadcast initiated by $p_0$ (for the
purpose of this illustration, only the messages sent by $p_0$ are
shown).
The purple ``broadcast'' and ``deliver'' tags indicate the times at which 
$p_0$ initiated its broadcast, and delivered it.
In this example, the echo phase is initially meant to last for a duration of $\mathbb{T}=6d$.
However, it happens here that $p_0$ received $2f$ echo messages for
its broadcast by $3d+k$, where $0<k<d$, which is why $d_3$ is shorter
than the other intervals. Therefore, $p_0$ stops its echo phase and
starts its deliver phase at $3d+k$. As mentioned above, the deliver
phase lasts for $2\mathbb{T}$. If $p_0$ has not received $2f$ deliver
messages in return by the end of that deliver phase, then it becomes
passive.

\floatname{algorithm}{Algorithm}
\begin{algorithm}[!t]
  \caption{\textit{proof-of-connectivity($\mathbb{T}$)} @ process $p_i$}\label{alg:proof-of-life}
  \begin{algorithmic}[1]
    \footnotesize	
    \State $seq = [0]^n$; \comalgo{// stores smallest valid sequence number per process.}
    \State $sq=0$; \comalgo{// local sequence number.}
    \State $\rhb=[\emptyset]^n$; \comalgo{// stores signatures on last $\nIter{\mathbb{T}}{d}$ heartbeats of processes.}
    \State

    \Event{$\initialization\vee\checkConnectivity$}
    \State \textbf{trigger} $\timeOutConn{\META{msg}}{\mathbb{T}}$;
    \State \textbf{Execute} \texttt{h-diffuse$\left(\mkmsg{p_i}{sq},\{\sigma_{i}\}\right)$};\label{alg:line:hdiffuseo}
    \State $\rhb[p_i].add(\mkmsg{p_i}{sq};\{\sigma_{i}\})$; $sleep(d)$; $sq$++;
    \If{$sq-seq[p_i]>\nIter{\mathbb{T}}{d}$} \label{alg:incrementstart}
 $seq[p_i]$++;
    \EndIf\label{alg:incrementend}
    \State  \textbf{trigger} $\checkConnectivity$;
    \EndEvent

    \Event{$\expiredTimeoutConn{\mkmsg{p_i}{sq'}}{\META{timeout}}$}\label{alg:timeoutstart}
    \If{$|\rhb[p_i].getsig(sq')|\leq~2f$}
    \State \comalgo{// gets signatures on message with sequence number $sq'$}
    \State \textbf{Initiate} \textit{passive mode};
    \Else~$\rhb[p_i].remove(sq')$; \comalgo{// remove entry with seq. num. $sq'$}
    \EndIf
    \EndEvent \label{alg:timeoutend}

    \Event {receive $\hb{\mkmsg{p_j}{sq'}}{\Sigma}$} \label{alg:rcvheartbeatstart}
    \If{$(sq'\geq{seq[p_j]})$}
    \State $\rhb[p_j].setsig(sq', \rhb[p_j].getsig(sq') \cup \Sigma \cup \{\sigma_{i}\})$; \label{alg:updatesigs}
    \If{$j\neq i \land sq'\neq seq[p_j]$}
    \State \textbf{Execute} \texttt{h-diffuse$\left(\mkmsg{p_j}{sq'},\rhb[p_j].getsig(sq')\right)$}; \label{alg:diffusehb}
    \EndIf
    \EndIf
    \If{$sq'>(seq[p_j]+\nIter{\mathbb{T}}{d}) \land j\neq{i}$}\label{alg:removehbstart}
    \State $seq[p_j]=sq'-\nIter{\mathbb{T}}{d}$;\label{alg:updlwm}
    \State $\rhb[p_j].remove(sq'')$, $\forall sq''<seq[p_j]$; \label{alg:line:removehba}
    \EndIf \label{alg:removehbend}

    \EndEvent \label{alg:rcvheartbeatend}
    \Function{\texttt{h-diffuse}$\left(\META{msg},\META{\Sigma}\right)$}\label{alg:line:hdiffuseb}%
    \For{$(\texttt{int}\ i=0$; $i\leq\nIter{\mathbb{T}}{d}$; $i$++)}
    \State \textbf{send} $\hb{\META{msg}}{\Sigma}$ to $\fanout$ other processes;
    \State $sleep(d);$
    \EndFor \label{alg:line:hdiffusee}
    \EndFunction
  \end{algorithmic}
\end{algorithm}

\subsection{Detailed Presentation of PISTIS}
\label{sec:et-rtbyzcast}

\floatname{algorithm}{Algorithm}
\begin{algorithm}%
  \caption{PISTIS @ process $p_i$}
  \label{main-alg}
 \begin{algorithmic}[1]
    \footnotesize
    \State \textbf{Execute} \textit{proof-of-connectivity}($\mathbb{T}$);
    \State

    \Event {$\bcastOp{p_i}{\sq}{v}$}\label{alg:bcastA} \label{bcast}
    \State \textbf{Execute} \textit{proof-of-connectivity} in piggyback mode;
    \State \textbf{Initialize} $\recho{p_i}{\sq}{v}=\{\sigma_{i}\}$;\label{alg:send-bcast-init-sign}
    \State \textbf{Execute} $\tDiffuse{\mkmsgv{p_i}{\sq}{v}}{\mathbb{T}}{\echoMode}$;\label{alg:diffuse-broadcast}
    \EndEvent\label{alg:bcastB}

    \Event{receive $\echo{\mkmsgv{p_j}{\sq}{v}}{\Sigma}$}\label{alg:echostart}\label{alg:echo1A}
    \If{$\nexists \recho{p_j}{\sq}{...}$}
    \State \textbf{Initialize} $\recho{p_j}{\sq}{v}=\{\sigma_{i}\}\cup\Sigma$;\label{alg:aggr-new-sign}%
    \State \textbf{Execute} \textit{proof-of-connectivity} in piggyback mode;
    \If {$|\recho{p_j}{\sq}{v}|\leq 2f$}%
    \State \textbf{Execute} $\tDiffuse{\mkmsgv{p_j}{\sq}{v}}{\mathbb{T}}{\echoMode}$;
    \Else~\textbf{Execute} $\deliverMsg{p_j}{\sq}{v}{\recho{p_j}{\sq}{v}}$;\label{alg:line:delmsga}
    \EndIf\label{alg:echo1B}

    \ElsIf{$\exists \recho{p_j}{\sq}{v}$}\label{alg:echo2A}
    \State $\recho{p_j}{\sq}{v}=\recho{p_j}{\sq}{v}\cup\Sigma$;%
    \If {$|\recho{p_j}{\sq}{v}|>2f$ (for the first time)}
    \State \textbf{Execute} $\deliverMsg{p_j}{\sq}{v}{\recho{p_j}{\sq}{v}}$;\label{alg:line:delmsgb}
    \EndIf\label{alg:echo2B}

    \ElsIf{$\exists \recho{p_j}{\sq}{v'\neq{v}}$}\label{alg:echo3A}
    \State \comalgo{// $p_j$ has lied about message with $\sq$}%
    \If {$|\Sigma|> 2f$}
    \State \textbf{remove} $\recho{p_j}{\sq}{v'}$;
    \State $\recho{p_j}{\sq}{v}=\Sigma$;
    \State \textbf{Execute} $\deliverMsg{p_j}{\sq}{v}{\Sigma}$;\label{alg:line:delmsgc}
    \EndIf\label{alg:echo3B}
    \EndIf
    \EndEvent~\label{alg:echoend}

    \Event {receive $\deliv{\mkmsgvs{p_j}{\sq}{v}{\Sigma}}{\Sigma'}$}\label{alg:receive-deliver}
    \If{$\nexists\rdeliv{p_j}{\sq}{v}$}
    \State
    $\recho{p_j}{\sq}{v}=\recho{p_j}{\sq}{v}\cup\Sigma$;
    \State \textbf{Execute}
    $\deliverMsg{p_j}{\sq}{v}{\Sigma}$;\label{alg:line:delmsgd}
    \EndIf
    \State ${\rdeliv{p_j}{\sq}{v}=\rdeliv{p_j}{\sq}{v}\cup\Sigma'}$; \label{alg:receive-deliver1}
    \EndEvent

    \Event {$\expiredTimeout{\META{msg}}{\META{timeout}}{\META{mode}}$}
    \If{$\exists \mathcal{R}_{\META{mode}}(\META{msg})\wedge|\mathcal{R}_{\META{mode}}(\META{msg})| \leq 2f$}
    \Switch{$\META{mode}$}
    \Case{$\echoMode$}
    \If{no lie is discovered on $\META{msg}$}%
    \State \textbf{Initiate} \textit{passive mode};%
    \EndIf
    \EndCase
    \Case{$\delivMode$}
    \State \textbf{Initiate} \textit{passive mode};%
    \EndCase
    \EndSwitch
    \EndIf
    \EndEvent
     \Function{$\tDiffuse{\META{msg}}{\META{timeout}}{\META{mode}}$}
    \State \textbf{trigger} $\timeOut{\META{msg}}{\META{timeout}}{\META{mode}}$;\label{alg:timeout}
    \For{$(\texttt{int}\ i=0$; $i\leq\nIter{\META{timeout}}{d}$; $i$++)}
    \State $\Sigma=\mathcal{R}_{\META{mode}}(\META{msg})$;
    \Switch {$\META{mode}$}
    \Case $\echoMode$
    \State \textbf{send} $\echo{\META{msg}}{\Sigma}$ to $\fanout$ random processes;
    \EndCase
    \Case $\delivMode$
    \State \textbf{send} $\deliv{\META{msg}}{\Sigma}$ to $\fanout$  random processes;
    \EndCase
    \EndSwitch
    \State \texttt{sleep($d$)};
    \EndFor
    \EndFunction
    
    \Function{$\deliverMsgAt{p_i}{p_j}{\sq}{v}{\Sigma}$}\label{alg:line:delmsgfun}
    \If{$\nexists\rdeliv{p_j}{\sq}{v}$}
    \State \textbf{Execute} \textit{proof-of-connectivity} in piggyback mode;
    \State \textbf{trigger }$\delivOp{p_j}{\sq}{v}$;\label{deliver}
    \State \textbf{Initialize} $\rdeliv{p_j}{\sq}{v} = \{\sigma_{i}\}$;
    \State \textbf{Stop} sending any $\echoM$ %
    \EndIf
    \State
    \textbf{Execute} $\tDiffuse{\mkmsgvs{p_j}{\sq}{v}{\Sigma}}{2\mathbb{T}}{\delivMode}$;\label{alg:line:delnewsign}
    \EndFunction
  \end{algorithmic}
\end{algorithm}

We now discuss PISTIS (Algorithm~\ref{main-alg}) in more
details. Note that all functions presented in
  Algorithms~\ref{alg:proof-of-life} and~\ref{main-alg} are
  non-blocking.
PISTIS's proof of correctness can be found
\ifx\tAppA\tAppB%
in Appx.~\ref{appx:correctness-pistis}.
\else%
in~\cite[Appx.B]{Kozhaya+Decouchant+Rahli+Verissimo:pistis:long:2019}.
\fi

\medskip
\textbf{Process states.}
Processes can become passive
 under certain
scenarios by calling ``\textbf{Initiate} \textit{passive mode}''.
\newtext{A passive node stops broadcasting and delivering messages to
  guarantee timeliness but otherwise keeps on replying to messages to
  help other processes.}
Processes that were behaving correctly thus far, are considered faulty
when they initiate a passive mode and can notify the application above
of this fact. Later in this section, we show how processes in the
passive mode can come back to normal operation by calling
``\textbf{Initiate} \textit{active mode}''.

\medskip
\textbf{Ensuring sufficient connectivity.}
In PISTIS every process executes the \textit{proof-of-connectivity}
Algorithm~\ref{alg:proof-of-life}.  Namely, a process~$p_i$ forms a
heartbeat $\hb{\mkmsg{p_i}{sq}}{\{\sigma_{i}\}}$, where $sq$ is
$p_i$'s current heartbeat sequence number and $\sigma_{i}$ is $p_i$'s
signature on $\mkmsg{p_i}{sq}$. Process $p_i$ also stores (in array
$\rhb$) for every process (including itself) all signatures it
receives on heartbeats with a \textit{valid sequence
  number}.
\newtext{A valid heartbeat sequence number for some process $p_j$ is a
  sequence number $\geq{seq[p_j]}$. Heartbeats with lower sequence
  numbers are simply ignored. To avoid receiving heartbeats from older
  rounds, we update $seq[p_j]$ every time a heartbeat with a sequence
  number over $seq[p_j]+\nIter{\mathbb{T}}{d}$ is receiver
  (lines~\ref{alg:removehbstart}--\ref{alg:updlwm}).}
After forming its heartbeat, $p_i$ sets a timeout of duration
$\mathbb{T}$, and sends this heartbeat to $\fanout>f$ random processes
$\nIter{\mathbb{T}}{d}$ times
(lines~\ref{alg:line:hdiffuseb}--\ref{alg:line:hdiffusee}).  Process
$p_i$ increments its heartbeat sequence number and repeats this whole
procedure every $d<\mathbb{T}$. Upon incrementing its heartbeat
sequence number, $p_i$ updates its own valid heartbeat sequence
numbers (lines~\ref{alg:incrementstart}--\ref{alg:incrementend}).

A process $p_i$ receiving $\hb{\mkmsg{p_j}{sq'}}{\Sigma}$ ignores this
heartbeat if $sq'$ is smaller than the smallest valid heartbeat
sequence number known for $p_j$. Otherwise, $p_i$ updates $p_j$'s
valid heartbeat sequence numbers
(lines~\ref{alg:removehbstart}--\ref{alg:removehbend}) and the list of
all seen signatures on these valid heartbeats
(line~\ref{alg:updatesigs}). Then, $p_i$ diffuses the heartbeat with
the updated list of seen signatures to $\fanout$ random processes
(line~\ref{alg:diffusehb}).

When a timer expires, $p_i$ checks $\rhb[p_i]$ for the number of
accumulated signatures on its corresponding heartbeat. If that number
is $\leq 2f$, $p_i$ enters the passive mode; otherwise it removes the
corresponding entry from $\rhb[p_i]$
(lines~\ref{alg:timeoutstart}--\ref{alg:timeoutend}).

\medskip
\textbf{Broadcasting a message.}
A process $p_i$ that wishes to broadcast a value $v$, calls
$\bcastOp{p_i}{\sq}{v}$ from Algorithm~\ref{main-alg}
(lines~\ref{alg:bcastA}--\ref{alg:bcastB}), where $\sq$ is a sequence
number that uniquely identifies this broadcast instance.
Given such an event, $p_i$ produces a signature $\sigma_{i}$ for the
payload $\mkmsgv{p_i}{\sq}{v}$.  It then triggers a timeout of duration
$\mathbb{T}$ and sends an
$\echo{\mkmsgv{p_i}{\sq}{v}}{\{\sigma_{i}\}}$ message
$\nIter{\mathbb{T}}{d}$ times to $\fanout$ other random
processes. \textit{Proof-of-connectivity} information from $p_i$ is
now piggybacked on these messages, as on all other $\echoSYMB$ and
$\delivSYMB$ messages.

\medskip
\textbf{Sending and Receiving Echoes.}
When $p_i$ receives an $\echo{\mkmsgv{p_j}{\sq}{v}}{\Sigma}$, $p_i$
reacts differently depending on whether it is not already echoing for
this instance (lines~\ref{alg:echo1A}--\ref{alg:echo1B}), already
echoing $v$ (lines~\ref{alg:echo2A}--\ref{alg:echo2B}), or already
echoing a different value (lines~\ref{alg:echo3A}--\ref{alg:echo3B}).
In all three cases, $p_i$ starts delivering a message (and stops
sending echoes) as soon as at least $2f+1$ distinct signatures have
been collected for that message.

\medskip
\textbf{Sending and Receiving Deliver Messages.}
When $p_i$ receives $\deliv{\mkmsgvs{p_j}{\sq}{v}{\Sigma}}{\Sigma'}$
for the first time
(lines~\ref{alg:line:delmsgfun}--\ref{alg:line:delnewsign}),
it delivers $\mkmsgvs{p_j}{\sq}{v}{\Sigma}$, and sends
$\deliv{\mkmsgvs{p_j}{\sq}{v}{\Sigma}}{\rdeliv{p_j}{\sq}{v}}$ using
$\tDiffuseF$.
In case that \emph{deliver} message is not the first one received
(lines~\ref{alg:receive-deliver}--\ref{alg:receive-deliver1}), $p_i$
aggregates all seen signatures for $\mkmsgv{p_j}{\sq}{v}$ in
$\rdeliv{p_j}{\sq}{v}$ (all functions that use $\rdeliv{p_j}{\sq}{v}$
now use the new updated value).

\medskip
\textbf{Process Passive Mode.}
When a timeout set by process $p_i$ with parameters
$(\META{msg},\META{timeout},\META{mode})$ expires, $p_i$ enters the passive mode if the set $\mathcal{R}_{\META{mode}}$ has less than $2f+1$
distinct signatures, for $\META{mode}=\delivMode$.  For
$\META{mode}=\echoMode$, $p_i$ enters passive mode if in addition to
$\mathcal{R}_{\META{mode}}$ not having $2f+1$ signatures, $p_i$ did
not discover a lie for that broadcast instance.
\begin{remark}
	\label{rem:valid-message}
	Any message of the form
	$\echo{\mkmsgv{p_j}{\sq}{v}}{\Sigma_1}$ or
	$\deliv{\mkmsgvs{p_j}{\sq}{v}{\Sigma_2}}{\Sigma_3}$ is termed
	\textit{invalid} if: (1)~$\Sigma_1$ contains an incorrect signature,
	and similarly for $\Sigma_2$ and $\Sigma_3$; or (2)~$\Sigma_1$
	does not contain a signature from $p_j$, and similarly for
	$\Sigma_2$; or (3)~$\Sigma_2$ has less than $2f+1$ signatures.
	Invalid messages are simply discarded.
\end{remark}

\begin{remark}
We assume that processes sign payloads of the form
$(p_i,\sq,v,\mathsf{E})$ for echo messages and of the form
$(p_i,\sq,v,\mathsf{D})$ for deliver messages. We use the $\mathsf{E}$
and $\mathsf{D}$ tags to distinguish echo and deliver payloads,
thereby ensuring that an attacker cannot use echo signatures as
deliver signatures. Note that echo signatures are sent as part of
deliver messages as a proof that a quorum of processes echoed a
certain value.
\end{remark}

\subsection{PISTIS' properties}
\label{sec:pistis-props}

As mentioned at the beginning of this section, PISTIS is correct in
the sense that it satisfies all five properties of the RTBRB primitive
presented in Sec.~\ref{sec:bcast-abstraction}:
\begin{theorem}[Correctness of PISTIS]\label{theorem:correctness-pistis}
  Under the model presented in Sec.~\ref{Sysmodel}, the PISTIS
  algorithm presented in Fig.~\ref{main-alg} implements the RTBRB
  primitive.
\end{theorem}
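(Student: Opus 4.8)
The plan is to verify each of the five RTBRB properties from Definition~\ref{def:rtbrb} in turn, leveraging the structure of the $\echoSYMB$/$\delivSYMB$ phases and the proof-of-connectivity mechanism. First I would dispatch the three ``easy'' safety properties. \emph{RTBRB-No duplication} follows directly from the guard $\nexists\rdeliv{p_j}{\sq}{v}$ in \deliverMsgSYMB{} (Algorithm~\ref{main-alg}, line~\ref{alg:line:delmsgfun}): the \delivOpSYMB{} event is triggered exactly once per broadcast instance. \emph{RTBRB-Integrity} follows from the signature discipline: a correct process only delivers a value $v$ for instance $(p_j,\sq)$ if it has collected $2f+1$ signatures including one from $p_j$ (enforced by the validity condition in Remark~\ref{rem:valid-message}), so if $p_j$ is correct it must have signed and hence broadcast $v$. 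For \emph{RTBRB-Agreement}, the key is the Byzantine write quorum of size $2f+1$: any two quorums intersect in at least $f+1$ processes, hence in at least one correct process, which prevents two correct processes from delivering distinct values $v\neq v'$ for the same instance; combined with the diffusion argument it yields agreement on a single value.

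The heart of the argument lies in \emph{RTBRB-Validity}, \emph{RTBRB-Agreement}'s liveness component, and above all \emph{RTBRB-Timeliness}. Here the proof-of-connectivity invariant does the work: a correct process that fails to gather $2f+1$ signatures within $\mathbb{T}$ becomes passive and is thereafter counted as faulty, so by construction every process that remains correct has stayed sufficiently connected. I would argue that when a correct process broadcasts $m$ at time $t$, within the echo window $[t,t+\mathbb{T}]$ it either collects $2f+1$ echo signatures (and starts delivering) or becomes passive; since it is correct, it does collect the quorum, and hence at least $f+1$ correct processes start diffusing a \delivSYMB{} message during $[t,t+\mathbb{T}]$. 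The crucial timing step is the overlapping-window lemma alluded to in the overview (the $2\mathbb{T}$ diffusion duration for \delivSYMB{} messages versus $\mathbb{T}$ for echoes): because deliver messages are re-diffused for $2\mathbb{T}$ while the $f+1$ correct deliverers may have started at different times within a $\mathbb{T}$-wide band, there is guaranteed to be a common $\mathbb{T}$-long sub-interval during which all $f+1$ of them are simultaneously diffusing. Over such an interval, with $\fanout>f$ and the probabilistic-synchrony bound $1-P_{ij}(t)<\epsilon_2\ll1$, every correct process receives at least one deliver message and thus delivers $m$ by $t+\deltaRTBRB$, giving both Agreement (liveness) and Timeliness for an explicit $\deltaRTBRB$ (on the order of $3\mathbb{T}$ plus link delays).

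The main obstacle is making the timeliness argument genuinely \emph{deterministic} despite the probabilistic network. The subtlety is that a single transmission can be lost, so ``receives at least one deliver message'' is only probabilistic over a single link at a single step. The resolution I would pursue is the one the paper foreshadows: PISTIS converts probabilistic failure into a deterministic passive-mode transition. Concretely, the guarantees are stated only over correct processes, and a process that would otherwise cause a timeliness violation (because its connectivity fell short during some window) has by definition already entered the passive mode via the \expiredTimeoutSYMB{} event and is therefore excluded from the correct set. Thus the bound $t+\deltaRTBRB$ holds deterministically for the processes that remain correct, while the probabilistic losses are absorbed into the (provably negligible, as argued in Sec.~\ref{sec: evaluation}) event of a global fail-safe. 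I would structure the formal argument as a sequence of lemmas---a quorum-intersection lemma for safety, the overlapping-diffusion-window lemma for the timing band, and a ``correct-implies-connected'' lemma tying passivity to the connectivity threshold---and then assemble the five properties from these, taking care that the constant $\deltaRTBRB$ extracted from the overlapping-window lemma is uniform across all broadcast instances.
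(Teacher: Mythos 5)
Your proposal is correct and follows essentially the same route as the paper's own proof: the paper likewise dispatches No duplication and Integrity via the $\rdelivSYMB$ guard and signature discipline, proves an \emph{Intersecting delivery} lemma (your overlapping-diffusion-window lemma, resting on the $2\mathbb{T}$ vs.\ $\mathbb{T}$ diffusion durations), and then combines it with the proof-of-connectivity quorum intersection ($B\cap C(\mathit{pc})$ containing a correct process) to obtain a \emph{Timely agreement} lemma with bound $3\mathbb{T}$, from which Agreement and Timeliness follow as corollaries. Your self-correction---replacing the probabilistic ``receives at least one deliver message'' step by the deterministic correct-implies-connected argument, so that any process whose connectivity falls short is passive and hence not correct---is exactly the mechanism the paper uses.
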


A proof of this theorem can be found
\ifx\tAppA\tAppB%
in Appx.~\ref{appx:correctness-pistis}.
\else%
in~\cite[Appx.B]{Kozhaya+Decouchant+Rahli+Verissimo:pistis:long:2019}.
\fi
Let us point out here that the $\deltaRTBRB$ bound of the
RTBRB-Timeliness property turns out to be $3\mathbb{T}$.

\newtext{Let us also highlight the crux of this proof here.
\begin{center}
\includegraphics[width=0.35\textwidth]{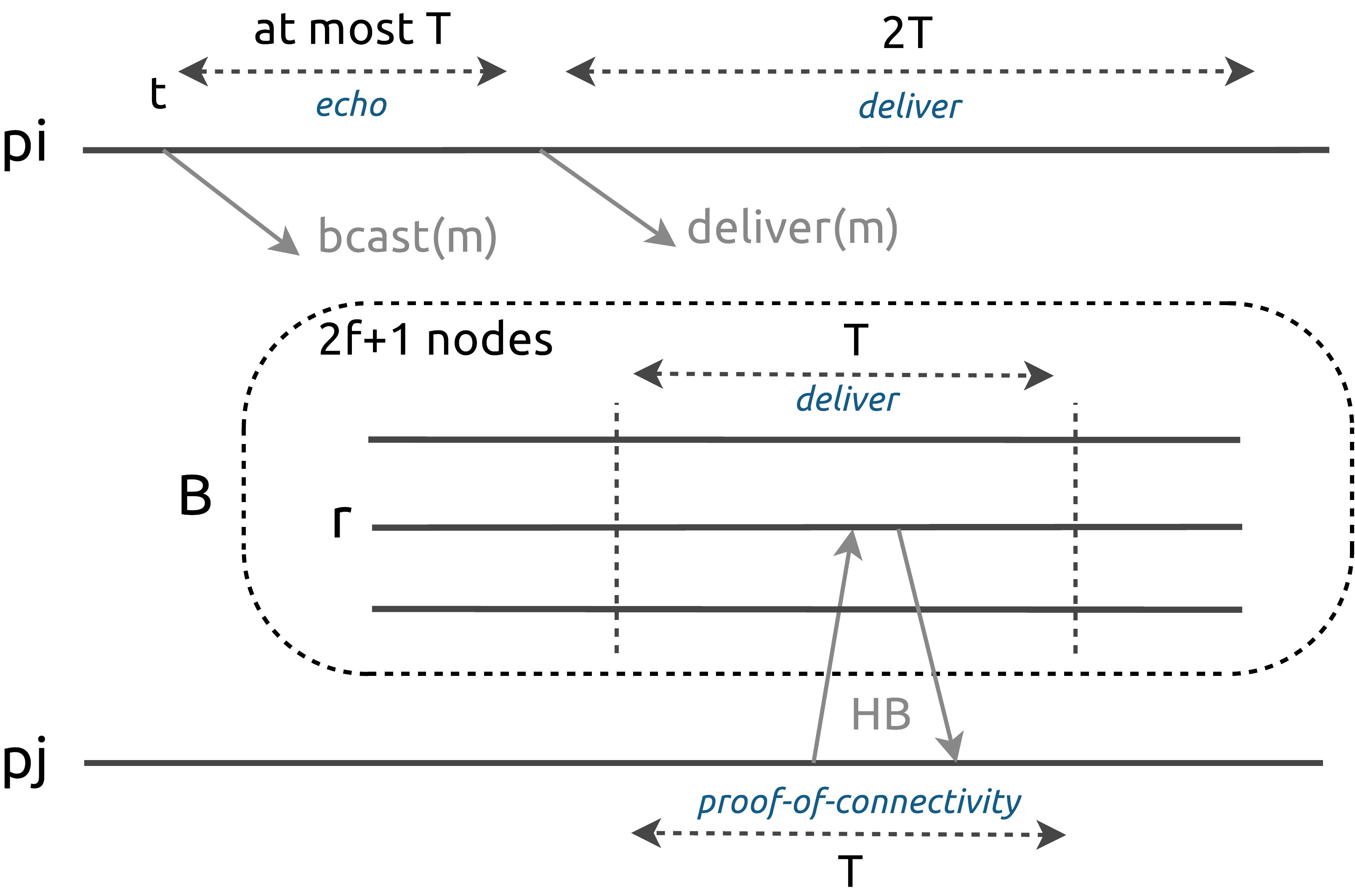}
\end{center}
As illustrated above, a correct node $p_i$ that
broadcasts a message $m$ a time $t$ is guaranteed to start delivering
$m$ by $t_d=t+\duration$. In addition thanks to the $2\duration$
delivery period, we are also guaranteed that a collection, called $B$,
of $2f+1$ nodes, will only deliver $m$ for a $\duration$-long period that
starts before $t_d+\duration$. PISTIS's proof-of-connectivity (PoC)
mechanism then ensures that any other correct node $p_j$ will execute
a PoC round during which a correct node $r\in{B}$ delivers $m$ to
$p_j$, piggybacked to a heartbeat, thereby guaranteeing that $p_j$
delivers $m$ timely.}

\newtext{In particular, overlapping PoC rounds allow for all correct
  nodes to have a PoC round that coincide with that $\duration$-long
  period (called $D$ here), during which the correct nodes in $B$
  deliver $m$, thereby allowing all correct nodes to deliver $m$. If
  PoC rounds were consecutive and not overlapping, a correct
  node could miss the deliver message (piggybacked with PoC messages)
  sent during $D$ if it were to receive PoC messages for a round
  (i.e., sequence number) $s$ sent before $D$, and for round $s+1$
  sent after $D$, thereby staying active while not~delivering.}

\subsection{Byzantine-Resilient Recovery}
\label{sec:recovery}

If process $p_i$ detects that it is executing under bad network
conditions, it enters the passive mode and signals the upper
application. As a result, $p_i$ stops broadcasting and delivering
broadcast messages (by not executing line~\ref{bcast} and
line~\ref{deliver}) to avoid violating RTBRB-Timelines. However, $p_i$
continues participating in the dissemination of the broadcast and
proof-of-connectivity messages to avoid having too many nodes not collecting enough messages and hence becoming passive.

Once the network conditions are acceptable again, $p_i$ can recover
and resume delivering broadcast messages.
More precisely, a process $p_i$ that enters passive mode at time $t$
can operate normally again
if the interval $[t,t+\deltaRTBRB]$ is free of any passive mode
initiations.
 This $\deltaRTBRB$ duration ensures that the messages
delivered by a recovered process $p_i$ do not violate any RTBRB
properties.
After a delay $\deltaRTBRB$, nodes will resume their full
participation in the protocol, and either deliver messages or stay on
hold.

Note that in case of multiple broadcast instances, passive nodes that become active again should learn the latest sequence number of broadcasts for other nodes. Otherwise Byzantine nodes can exploit this to hinder the liveness of the system.

\begin{remark}
Given that processes can now shift between passive and active modes,
we specify our notion of correct processes as follows. A system run is
modeled by a trace of events happening during that run. An event has a
timestamp and a node associated with it. Moreover, an event can either
be a correct event or a Byzantine event.
Given an algorithm~$A$, a process $p$ is deemed correct w.r.t.\ $A$
and a trace $\trace$, if: (1)~it follows its specification from $e_1$,
the first correct $A$-related event (i.e., an event of algorithm~$A$)
happening in $\trace$, to $e_2$, the last correct $A$-related event
happening in $\trace$; (2)~$p$'s events between $e_1$ and $e_2$ must
all be correct; (3)~$p$ must also have followed its specification
since it last started; and (4)~$p$ must never have lost its keys (so
that no other node can impersonate $p$ when $p$ follows its
specification).
The results presented below also hold for this definition of
correctness, because correct processes are required to be active
through the entire broadcast instance.
\end{remark}

This recovery mechanism improves the overall resilience of the system.
Indeed, having all processes in passive mode can occur if $2f+1$ nodes
are passive, which is now harder to achieve if nodes can recover
sufficiently fast enough.

\section{Beyond a Reliable Broadcast}
\label{sec: system transformation}

Unlike liveness in asynchronous reliable broadcast, the
RTBRB-Timeliness property (a safety property) introduces a scent of
physical ordering.  This ordering is due to the fact that timeliness
stipulates, for each execution, a termination event to occur ``at or
before'' some $\deltaRTBRB$ on the time-line.  This said, the reader
may wonder to
what extent does the real-time Byzantine-resilient reliable broadcast
(of Sec.~\ref{sec:bcast-abstraction}) help in establishing total
order?

The answer to this question lies in examining what happens to multiple
broadcasts issued by the same or by different nodes.  When multiple
broadcasts interleave, e.g., when they are issued within a period
shorter than $\deltaRTBRB$ (the upper time bound on delivering a message),
messages might be delivered to different processes in different
orders. The timeliness property of the real-time Byzantine-resilient
reliable broadcast only ensures that a message $m$ that is broadcast
at time $t$ is delivered at any time in $[t,t+\deltaRTBRB]$. Thus, to
ensure total order on all system events, e.g., for implementing
\textit{State Machine Replication}, additional abstractions need to be
built on top of the real-time Byzantine-resilient reliable broadcast
primitive that we have developed so far.

In this section, we investigate how to modularly obtain such an order
on system events while still preserving real-time and
Byzantine-resilience.
We define two build blocks that build on top of RTBRB, namely the RTBC
real-time Byzantine consensus abstraction (Def.~\ref{def:rtbc})---a
fundamental building block for state machine replication, atomic
broadcast and leader
election~\cite{Guerraoui:2006:IRD:1137759}; and the RTBAB
real-time atomic broadcast abstraction (Def.~\ref{def:rtbab})---to
establish total order on system events.
We then provide characterizations of classes of algorithms that
implement these abstractions: Thm.~\ref{theorem:RTBC-implemented}
provides a characterization of the PISTIS-CS class of algorithms that
implement RTBC, while Thm.~\ref{theorem:RTBAB} provides a
characterization of the PISTIS-AT class of algorithms that implement
RTBAB.
Finally, we provided examples of algorithms that belong to these
classes (see Examples~\ref{ex:rtbc} and~\ref{ex:rtbab}).

We start with the following assumption
that constrains the ways processes
can communicate.

\begin{assumption}
  \label{assump:network-access}
  Correct processes access the network only via the RTBRB primitive,
  namely using the two operations: $\bcastOpM$ and $\delivOpM$.
\end{assumption}

From Assumption~\ref{assump:network-access}, a correct process $p_i$ that receives a message from an
operation other than $\delivOpM$ simply ignores that
message by dropping it.

\subsection{Real-Time Byzantine Consensus}
\label{sec: RT-consensus}

Roughly speaking,
solving the \textit{Byzantine consensus} problem requires the agreement of distributed processes on a given value, even though some of the processes may fail
arbitrarily. Byzantine consensus was first identified by Pease et
al.~\cite{Pease:1980}, and formalized as the \textit{interactive
  consistency} problem.  An algorithm achieves interactive consistency
if it allows the non-faulty processes to come to a consistent view of
the initial values of all the processes, including the faulty ones.
Once interactive consistency has been reached, the non-faulty
processes can reach consensus by applying a deterministic averaging or
filtering function on the values of their view. We
apply the following assumption to reach consensus.
\begin{assumption}
  \label{assump:consensus-function}
  Once interactive consistency terminates, every correct process scans
  the obtained vector and decides on the value that appears at least
  $2f+1$ times. If no such value exists, then the process decides
  $\bot$, a distinguished element that indicates that no value has
  been decided.
\end{assumption}

\begin{definition}[RTBC]
\label{def:rtbc}
The real-time Byzantine consensus (RTBC) abstraction is expressed by
the following properties:\footnote{The properties of RTBC are the same
  as the ones of the traditional (strong) Byzantine consensus defined
  in~\cite{Dwork:1987} (see
  also~\cite[Module~5.11,p.246]{Guerraoui:2006:IRD:1137759}),
  excluding the \emph{Timeliness} property.}
\begin{itemize}
\item \textit{\textbf{RTBC-Validity:}} If all correct processes
  propose the same value $v$, then any correct process that decides,
  decides~$v$.  Otherwise, a correct process may only decide a value
  that was proposed by some correct process or
  $\bot$.
\item \textit{\textbf{RTBC-Agreement:}} No two correct processes
  decide differently.
\item \textit{\textbf{RTBC-Termination:}} Correct processes
  eventually decide.
\item \textit{\textbf{RTBC-Timeliness:}} If a correct process $p_i$
  proposes a value to consensus at time $t$, then no correct process
  decides after $t+\deltaRTBC$.
\end{itemize}
\end{definition}

In RTBC a process $p_i$ can propose a value $v$ to consensus by
invoking $\rtbcPropose{p_i}{\META{inst}}{v}$, where $\META{inst}$ is a
sequence number that uniquely identifies a RTBC instance.  Similarly,
a process $p_i$ decides on a value $v$ by invoking
$\rtbcDecide{p_i}{\META{inst}}{v}$.
In addition $\rtbcInit{\META{inst}}$ instantiate a new instance of
RTBC with id $\META{inst}$, i.e., for sequence number $\META{inst}$.

\begin{definition}
	\label{def:bounded}
An algorithm is said to be \emph{bounded} if it only uses a known
	bounded number of communication rounds.
\end{definition}

\begin{theorem}[Characterization of the PISTIS-CS class]
  \label{theorem:RTBC-implemented}
  Let PISTIS-CS be the class of \emph{bounded}
  (Def.~\ref{def:bounded}) algorithms that implements interactive
  consistency under Assumptions~\ref{assump:network-access}
  and~\ref{assump:consensus-function}. Then, PISTIS-CS algorithms also
  implement RTBC in our model (described in Sec.~\ref{Sysmodel}).
\end{theorem}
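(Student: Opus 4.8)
The plan is to verify the four RTBC properties of Def.~\ref{def:rtbc} one at a time, treating the interactive-consistency (IC) guarantees of any PISTIS-CS algorithm---agreement of all correct processes on a common output vector whose $i$-th entry is the proposal of $p_i$ whenever $p_i$ is correct---as given, and reducing each RTBC property to either these guarantees, the deterministic decision rule of Assumption~\ref{assump:consensus-function}, or the RTBRB guarantees of Def.~\ref{def:rtbrb}. Since Assumption~\ref{assump:network-access} forces all communication through $\bcastOpM$/$\delivOpM$, every message exchanged by the IC algorithm inherits the RTBRB properties, so I may invoke RTBRB-Validity, -Agreement, and -Timeliness freely.

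Agreement and Termination are the quickest. For RTBC-Agreement, IC agreement yields a single common vector for all correct processes, and since Assumption~\ref{assump:consensus-function} specifies a deterministic function of that vector, every correct process that decides outputs the same value. For RTBC-Termination, boundedness (Def.~\ref{def:bounded}) means the IC algorithm uses a known finite number $R$ of rounds; RTBRB-Validity together with RTBRB-Agreement guarantee that each round's broadcasts are eventually delivered to every correct process, so every correct process completes all $R$ rounds, obtains the vector, applies the decision rule, and hence eventually decides.

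RTBC-Validity is a counting argument using $\total=3f+1$ and the threshold $2f+1$. If all correct processes propose the same $v$, then at least $2f+1$ entries of the common vector equal $v$ (the correct processes' own entries, preserved by IC), while any other value can occupy at most the $f$ faulty entries; thus $v$ is the unique value occurring $\geq 2f+1$ times and Assumption~\ref{assump:consensus-function} decides $v$. If the proposals differ and the rule returns some $w\neq\bot$, then $w$ occupies $\geq 2f+1$ entries, of which at most $f$ are faulty, leaving $\geq f+1\geq 1$ correct entries equal to $w$; hence $w$ was proposed by a correct process. Either way the decided value is a correct process's proposal or $\bot$.

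The crux---and the step I expect to be the main obstacle---is RTBC-Timeliness, where I must convert ``bounded number of rounds'' into a wall-clock bound $\deltaRTBC$. The key lever is that each round's communication is a set of RTBRB broadcasts, so by RTBRB-Timeliness every round-$r$ message broadcast at time $\tau$ is delivered to all correct processes by $\tau+\deltaRTBRB$ (with $\deltaRTBRB=\RBduration$). I would argue by induction on the round index that if every correct process has entered round $r$ by some time $T_r$, then all round-$r$ messages are delivered, and hence round $r+1$ is entered, by $T_r+\deltaRTBRB$ plus a bounded local-step allowance (justified by process synchrony and negligible clock drift from Sec.~\ref{Sysmodel}). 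Starting from the proposal time $t$ of $p_i$ and the bounded skew with which other correct processes begin participating, this telescopes to a decision time no later than $t+R\cdot\deltaRTBRB$ plus lower-order terms, which defines the known constant $\deltaRTBC$. The delicate points to handle carefully are (i) that correct processes may start the instance at slightly different times, so the induction must track the latest starter, and (ii) that a round is only ``complete'' once the finitely many messages it awaits have all been delivered---both of which are controlled precisely because RTBRB bounds every delivery uniformly by $\deltaRTBRB$ and Assumption~\ref{assump:network-access} rules out any non-RTBRB channel that could escape this bound.
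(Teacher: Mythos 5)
Your proposal is correct and follows essentially the same route as the paper's proof: each of the four RTBC properties is reduced to the interactive-consistency guarantees (IC.1/IC.2), the deterministic $2f+1$-threshold decision rule of Assumption~\ref{assump:consensus-function}, and the RTBRB properties enforced by Assumption~\ref{assump:network-access}, with Timeliness obtained by multiplying the known round bound by the per-round RTBRB delivery bound $\deltaRTBRB$. The only difference is presentational---your round-by-round induction with staggered start times is a more explicit rendering of the paper's terse ``each round takes a bounded multiple of $\deltaRTBRB$'' step---so no substantive gap or divergence exists.
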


\ifx\tAppA\tAppB%
See Appx.~\ref{appx:proofRTBC}
\else%
See~\cite[Appx.~C]{Kozhaya+Decouchant+Rahli+Verissimo:pistis:long:2019}
\fi
for a proof of this result.

\begin{example}[Examples of PISTIS-CS algorithms]
\label{ex:rtbc}
Because the interactive consistency problem has been solved using
different algorithms that satisfy Def.~\ref{def:bounded}, our result
applies to various existing algorithms, such
as~\cite{Pease:1980,authenticatedByzagree,boundsbyzagree,byzantine-generals-problem}.
\end{example}

\subsection{Real-Time Byzantine-Resilient Atomic Broadcast}
\label{sec:atomic-bcast}

\begin{definition}[RTBAB]
\label{def:rtbab}
A real-time Byzantine-resilient atomic broadcast (RTBAB) has the same
properties as RTBRB (with a different timeliness bound) plus an
additional ordering property (therefore, we only present the
properties that differ from RTBRB's):
\begin{itemize}
\item \textbf{\textit{RTBAB-Timeliness:}} There exists a known
  $\deltaRTBAB$ such that if a correct process broadcasts $m$ at time
  $t$, no correct process delivers $m$ after
  real~time~$t+\deltaRTBAB$.
\item \textbf{\textit{RTBAB-Total order:}} Let $m_1$ and $m_2$ be any
  two messages and suppose that $p_i$ and $p_j$ are any two correct
  processes that deliver $m_1$ and $m_2$. If $p_i$ delivers $m_1$
  before $m_2$, then $p_j$ delivers $m_1$ before~$m_2$.
\end{itemize}
\end{definition}

We now define the class of algorithms (called $\roundbased$), through
the properties listed below, that modularly implement RTBAB
properties. $\roundbased$ algorithms make use of a single RTBRB
instance and multiple instances of RTBC. We first constrain a
$\roundbased$ algorithm to start an RTBRB instance within a bounded
amount of time for any broadcast call.

\begin{property}\
	\label{prop:rtbab-bcast-rtbrb}
	If a correct process $p_i$ RTBAB-broadcasts a message $m$ at time
	$t$, then it also RTBRB-broadcasts $m$ by time $t+\deltaB$, for
	some bounded $\deltaB$.
\end{property}

We then require a $\roundbased$ algorithm to start (or end in
  case this has already been done before) an RTBC instance, within a
  bounded amount of time, every time the RTBRB instance delivers.

\begin{property}
	\label{prop:rtbrb-deliv-rtbc-prop}
	If a correct process RTBRB-delivers a message $m$ at time $t$, such
	that $m$'s broadcaster is also correct, then it either RTBC-proposes
	or RTBC-decides $m$ by $t+\deltaP$, for some bounded
	$\deltaP$.
\end{property}

In addition, the next property constrains the values that can be
proposed at each RTBC instance, namely that at most one non-$\bot$
value can be proposed at each instance.

\begin{property}
	\label{prop:rtbc-prop-val-or-bot}
	Given an RTBC instance $\RTBABinst$, there exists a value $v$, such
	that each correct process either RTBC-propose $v$ or $\bot$ at
	$\RTBABinst$.
\end{property}

Next, we require a $\roundbased$ algorithm to deliver a RTBC-decided
value within a bounded amount of time
(Property~\ref{prop:rtbc-deliv-rtbab}) and to ensure that
non-RTBC-decided values are re-proposed in later RTBC rounds
(Property~\ref{prop:rtbc-repropose}).

\begin{property}
  \label{prop:rtbc-deliv-rtbab}
  If a correct process RTBC-decides a message $m$ at time $t$, then it
  also RTBAB-delivers $m$ by time $t+\deltaD$, for some bounded
  $\deltaD$.
\end{property}

\begin{property}
	\label{prop:rtbc-repropose}
	A correct process $p_i$ that proposes a value $v$ at a given time
	$t$, using a given RTBC instance $\RTBABinst$, and such that this
	instance does not decide $v$, also RTBC-propose $v$ at some instance
	$\RTBABinst+k$, where $0<k$.
	Moreover, $p_i$ RTBC-proposes $v$ at the smallest instance between
	$\RTBABinst+1$ and $\RTBABinst+k$ where $m$ is proposed by some
	process.
\end{property}

Finally, we require that nodes participate in all successive RTBC instances in a monotonic fashion.

\begin{property}
	\label{prop:rtbc-unique}
	Correct processes RTBC-propose exactly one value per RTBC instance;
	propose values in all RTBC instances (i.e., for all instances
	$\RTBABinst\in\mathbb{N}$); in increasing order w.r.t.\ the instance
	numbers of the RTBC instances (i.e., if $p_i$ proposes values at
	times $t_1$ and $t_2$ using the RTBC instances $\RTBABinst_1$ and
	$\RTBABinst_2$, respectively, and $t_1<t_2$, then
	$\RTBABinst_1<\RTBABinst_2$); and not in parallel (i.e., if $p_i$
	proposes a value at time $t$ using an RTBC instance $\RTBABinst$,
	and that this RTBC instance has not decided by time $t'>t$, then
	$p_i$ does not propose any other value between $t$ and $t'$).
\end{property}

\begin{definition}
  \label{def:round-based}
  Let $\roundbased$ be the class of \emph{round-based}
  algorithms that satisfy the
  properties~\ref{prop:rtbab-bcast-rtbrb},
  \ref{prop:rtbrb-deliv-rtbc-prop},
  \ref{prop:rtbc-prop-val-or-bot},
  \ref{prop:rtbc-deliv-rtbab},
  \ref{prop:rtbc-repropose},
  and~\ref{prop:rtbc-unique}.
\end{definition}

\begin{theorem}[Characterization of the PISTIS-AT class]
  \label{theorem:RTBAB}
  Let PISTIS-AT be the class of $\roundbased$ algorithms that
  implement the traditional Byzantine total-order broadcast under
  Assumption~\ref{assump:network-access}. Then, PISTIS-AT algorithms
  also implement RTBAB in our system (described in
  Sec.~\ref{Sysmodel}).
\end{theorem}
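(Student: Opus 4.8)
The plan is to show that any $\roundbased$ algorithm that implements traditional Byzantine total-order broadcast satisfies the four RTBAB properties that differ from RTBRB (Validity, No-duplication, Integrity, Agreement are inherited from the underlying Byzantine total-order broadcast), plus the two genuinely new obligations: RTBAB-Total-order and RTBAB-Timeliness. Total-order comes essentially for free from the assumption that the algorithm implements \emph{traditional} Byzantine total-order broadcast together with Property~\ref{prop:rtbc-unique}, which forces every correct process to go through the same sequence of RTBC instances in increasing order, proposing exactly one value each; since RTBC-Agreement (from Thm.~\ref{theorem:RTBC-implemented}, as PISTIS-CS algorithms implement RTBC) guarantees no two correct processes decide differently in a given instance, all correct processes see the same decided value at each instance. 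Combined with Property~\ref{prop:rtbc-deliv-rtbab}, which ties delivery order to RTBC-decision order, the delivery sequences of any two correct processes are prefixes of a common sequence, giving total order.

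The real work is establishing RTBAB-Timeliness, i.e., exhibiting a concrete bound $\deltaRTBAB$ and proving that a message broadcast at time $t$ is never delivered by a correct process after $t+\deltaRTBAB$. First I would trace the lifecycle of a message $m$ RTBAB-broadcast by a correct process $p_i$ at time $t$. By Property~\ref{prop:rtbab-bcast-rtbrb}, $p_i$ RTBRB-broadcasts $m$ by $t+\deltaB$; by RTBRB-Timeliness (with bound $\deltaRTBRB=3\mathbb{T}$ from Thm.~\ref{theorem:correctness-pistis}), every correct process RTBRB-delivers $m$ by $t+\deltaB+\deltaRTBRB$. Then Property~\ref{prop:rtbrb-deliv-rtbc-prop} ensures each correct process RTBC-proposes (or has already RTBC-decided) $m$ within a further $\deltaP$. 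Here I must count how many RTBC instances can elapse before $m$ is actually decided, because Property~\ref{prop:rtbc-repropose} only guarantees re-proposal in a \emph{later} instance, not immediate success. The key is to bound the number of RTBC instances between the first proposal of $m$ and its decision: since the algorithm is $\roundbased$ and each RTBC instance is itself bounded (Def.~\ref{def:bounded}) with termination time governed by $\deltaRTBC$, and since Property~\ref{prop:rtbc-prop-val-or-bot} limits each instance to a single non-$\bot$ value, I would argue that the number of distinct non-$\bot$ values competing for decision slots is finite and that each correct process's monotone progression through instances (Property~\ref{prop:rtbc-unique}) forces $m$ to be decided within a bounded number $K$ of instances. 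The accumulated bound is then $\deltaRTBAB = \deltaB + \deltaRTBRB + \deltaP + K\cdot\deltaRTBC + \deltaD$, invoking Property~\ref{prop:rtbc-deliv-rtbab} for the final delivery step.

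The main obstacle is precisely this counting argument for $K$: I need to show that a value proposed by a correct process cannot be starved indefinitely by other (possibly Byzantine-originated) values occupying earlier instances. The subtlety is that Property~\ref{prop:rtbc-repropose} guarantees re-proposal at the \emph{smallest} instance where $m$ is proposed by some process, which prevents unbounded delay only if I can show the set of "pending" values that can preempt $m$ is bounded at the moment $m$ is first proposed. I expect to argue that because all correct processes RTBRB-deliver the same set of messages within the same $\deltaRTBRB$ window and propose in lockstep instance order, the backlog of undecided values any correct process faces is bounded by the number of messages broadcast within a fixed-length time window, which the bounded-round structure and the finite proposal rate make finite. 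Establishing this bound rigorously—rather than merely asserting eventual decision—is where the argument will be most delicate, and I would isolate it as a separate lemma before assembling the final timeliness bound.
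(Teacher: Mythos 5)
There is a genuine gap, and it sits exactly at the step you flagged as delicate. Your setup matches the paper's: since a PISTIS-AT algorithm implements traditional Byzantine total-order broadcast by hypothesis, total order and the non-timing properties are inherited, and only RTBAB-Timeliness needs proof; your opening chain (Property~\ref{prop:rtbab-bcast-rtbrb}, then RTBRB-Timeliness/Agreement, then Property~\ref{prop:rtbrb-deliv-rtbc-prop}) is also the paper's. But your plan then hinges on exhibiting a bound $K$ on the number of RTBC instances that can elapse before $m$ is decided, and your proposed justification --- that the backlog of competing values is ``bounded by the number of messages broadcast within a fixed-length time window'' given a ``finite proposal rate'' --- appeals to an assumption the theorem does not make: nothing in Assumption~\ref{assump:network-access} or Properties~\ref{prop:rtbab-bcast-rtbrb}--\ref{prop:rtbc-unique} bounds how many RTBAB-broadcasts may be issued concurrently, so the backlog (hence $K$) is execution-dependent. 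Even if finite in every execution, an execution-dependent $K$ does not yield the \emph{known} constant $\deltaRTBAB$ that RTBAB-Timeliness demands. (Tellingly, when the paper analyzes the concrete Algorithm~\ref{alg:RTBAB}, a backlog-style count only yields a variable-dependent bound, and an extra rate-limiting assumption has to be introduced to remove the dependence; no such assumption is available in the class theorem you are proving.)

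The paper's proof avoids instance counting altogether, and this is the idea your proposal is missing. Since every correct process RTBC-proposes $m$ by real time $T=t+\deltaB+\deltaRTBRB+\deltaP$, let $\RTBABinst_g$ be the \emph{greatest} RTBC instance at which some correct process $p_g$ proposes $m$ within that window. If $m$ was already decided at an earlier instance, Properties~\ref{prop:rtbc-unique} and~\ref{prop:rtbc-deliv-rtbab} finish the argument. Otherwise, one shows by contradiction that no correct process can propose $\bot$ at $\RTBABinst_g$: such a process must (by maximality of $\RTBABinst_g$) have proposed $m$ at an earlier instance, that instance decided $\bot$ (Property~\ref{prop:rtbc-prop-val-or-bot} plus RTBC-Validity), and then Property~\ref{prop:rtbc-repropose} --- crucially its ``smallest instance'' clause --- and induction over the instances between the two force the process to re-propose $m$ at $\RTBABinst_g$ itself, a contradiction. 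Hence all correct processes propose $m$ at $\RTBABinst_g$, RTBC-Validity forces $\RTBABinst_g$ to decide $m$, and RTBC-Timeliness anchored to $p_g$'s proposal time ($\leq T$) bounds the decision by $T+\deltaRTBC$, with delivery by $T+\deltaRTBC+\deltaD$ via Property~\ref{prop:rtbc-deliv-rtbab}. The point is that the bound comes from \emph{real-time anchoring} of the instance at which unanimity holds, not from how many instances precede it; the instance's number is irrelevant, which is why $\deltaRTBAB=\deltaB+\deltaRTBRB+\deltaP+\deltaRTBC+\deltaD$ contains no factor like your $K$. To repair your proof you would need to replace the counting lemma with this maximal-instance/unanimity argument (or something equivalent), since the counting route cannot produce an execution-independent bound under the stated hypotheses.
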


To prove Theorem~\ref{theorem:RTBAB}, it is sufficient to prove that a
RTBAB-broadcasted value $m$ is always RTBAB-delivered within a bounded
amount of time. Because of the round-based property, $m$ must be
RTBRB-proposed and RTBRB-decided within a bounded amount of time.
Consequently there is (within a bounded amount of time) an RTBC
instance where ``enough'' correct nodes RTBC-propose $m$, so that $m$ gets
RTBC-decided upon and RTBAB-delivered within a bounded amount of time.
The proof of Theorem~\ref{theorem:RTBAB} is detailed
\ifx\tAppA\tAppB%
in Appx.~\ref{sec:pistis-to}.
\else%
in~\cite[Appx.~D]{Kozhaya+Decouchant+Rahli+Verissimo:pistis:long:2019}.
\fi

\newtext{We have introduced bounds for each of the
  operations executing in bounded time, namely $\deltaRTBRB$
  (Def.~\ref{def:rtbrb}), $\deltaRTBC$ (Def.~\ref{def:rtbc}),
  $\deltaWait$ (Alg.~\ref{alg:RTBAB}), $\deltaB$
  (Prop.~\ref{prop:rtbab-bcast-rtbrb}), $\deltaP$
  (Prop.~\ref{prop:rtbrb-deliv-rtbc-prop}), $\deltaD$
  (Prop.~\ref{prop:rtbc-deliv-rtbab}), and $\deltaRTBAB$
  (Def.~\ref{def:rtbab}). Those bounds are not assumed to be related
  to each other. However, the bound for $\deltaRTBAB$ we exhibit in
  Theorem~\ref{theorem:RTBAB}'s proof is a combination of all the
  other bounds discussed above.}

\begin{example}[Example of a PISTIS-AT algorithm]
\label{ex:rtbab}
Finally, algorithm~\ref{alg:RTBAB} provides an example of a PISTIS-AT
algorithm that implements RTBAB modularly,
which we adapted from~\cite[Alg.6.2,p.290]{Guerraoui:2006:IRD:1137759} to guarantee
timeliness.
\end{example}

\floatname{algorithm}{Algorithm}
\begin{algorithm}[!t]
	\caption{Example of a \textit{PISTIS-AT} algorithm @process $p_i$}
	\label{alg:RTBAB}
	\begin{algorithmic}[1]
		\footnotesize 

		\Event{$\rtbabInit{\mbox{rtbab}}$}
		\State $\RTBABunordered = []^n$; $\RTBABnext = [0]^n$; $\RTBABseq = 0$;	
		\State $\RTBABdelivered=\emptyset$; $\RTBABbusy = \FALSE$; $\RTBABinst = 0$;
		\EndEvent

		\Event{$\rtbabBcast{p_i}{m}$}\label{pistis-to-event}
		\State \textbf{trigger} $\bcastOp{p_i}{\RTBABseq}{m}$;\label{pistis-to-rtbrb-bcast}
                \State $\RTBABseq$++;
		\EndEvent

		\Event{$\delivOp{p_j}{\META{num}}{m}$}\label{alg:rtbrbDelA}
		\If{$\META{num}=next[p_j]$}
		\State $\RTBABnext[p_j] = \RTBABnext[p_j]+1$;
		\If{$m\notin\RTBABdelivered$}
		\State $\RTBABunordered[p_j] = \RTBABunordered[p_j]\textbf{.append}(\pair{p_j}{m})$;\label{alg:append-new-message}
		\EndIf
		\Else~$\{\wait{\deltaWait};~\text{\textbf{trigger}}~\delivOp{p_j}{\META{num}}{m};\}$\label{alg:wait-deliv}
		\EndIf
		\EndEvent\label{alg:rtbrbDelB}

		\Event{$\exists p_j:\RTBABunordered[p_j]\neq[] \wedge \RTBABbusy=\FALSE$}\label{alg:exists-unordered}
		\State $\RTBABbusy=\TRUE$;
                \State \textbf{trigger} $\rtbcInit{\RTBABinst}$;
		\State \comalgo{// initiate a new real-time Byzantine consensus instance}
		\If{$\RTBABunordered[leader(\RTBABinst)] \neq []$}
		\State $m=\RTBABunordered[\leader{\RTBABinst}]\textbf{.head}()$;
		\Else~$\{m=\bot$;\}
		\EndIf

		\State \textbf{trigger} $\rtbcPropose{p_i}{\META{inst}}{m}$;
		\EndEvent

		\Event{$\rtbcDecide{p_i}{\RTBABinst'}{\META{decided}}$}\label{alg:rtbc-decide}
		\If{$\RTBABinst'=\RTBABinst$}
		\If{$\META{decided}\notin\RTBABdelivered \wedge \META{decided}\neq\bot$}\label{alg:check-if-delivered}
		\State $\RTBABdelivered = \RTBABdelivered \cup \{\META{decided}\}$;\label{alg:add-to-delivered}
		\State \textbf{trigger} $\rtbabDeliv{\leader{\RTBABinst}}{\META{decided}}$;
		\EndIf
		\State $\RTBABunordered[\leader{\RTBABinst}]\textbf{.remove}(\META{decided})$;
		\State $\RTBABinst$++; $\RTBABbusy=\FALSE$;
		\Else~$\{\wait{\deltaWait};~\mbox{\textbf{trigger}}\ \rtbcDecide{p_i}{\RTBABinst'}{\META{decided}}; \}$\label{alg:wait-decide}
		\EndIf
		\EndEvent
		\Function {$\leader{\META{instance}}$}~$\{ \text{\textbf{return}} (\META{instance}\mod{n});\}$ \EndFunction%
	\end{algorithmic}
\end{algorithm}

\section{Evaluation and Comparison}\label{sec: evaluation}

In this section, we evaluate PISTIS's reliability, latency, and
incurred overhead on network bandwidth.

\subsection{PISTIS's latency vs.\ related systems' latency}

We begin with a latency
comparison between PISTIS and other related works based on the worst case incurred delay. We compute worst case delays from the bounds established for each algorithm (a direct
experimental evaluation would not be fair, since not all previous
work~\cite{flaviu} consider probabilistic synchronous networks).
\newtext{Later sections provide an experimental
  comparison with RT-ByzCast~\cite{RT-ByzCast}, the system most
  related to ours.}
We elaborate in what follows on the computation of the worst case delays. First we refine the definition of~$d$ introduced in
Sec.~\ref{system model}. Let $d_n$ be the maximum network delay, and
$d_p$ be the maximum local processing time, which includes the cryptographic
operations overhead, such that $d$ can be decomposed as $d_p+d_n$.
Christian et al.~\cite{flaviu} compute the worst case delay as
$10*(f+2)*(n-1)*d_n$ where $f$ is the maximum number of faulty
processes, $n$ the total number of processes, and $d_n$ the network
delay. In this work, $d_p$ is equal to $10$.
Kozhaya et. al~\cite{RT-ByzCast} compute the worst-case delay as
$3*R*d$, where $R$ is the number of consecutive synchronous
communication rounds the same message gets disseminated
(time-triggered re-transmissions). PISTIS's worst case delay is proved
to be $3*\mathbb{T}$.
To ensure fairness and consistency with the latency experiments
presented below, we set $R=8$ and $\mathbb{T}=8d$.
However, due to PISTIS's signature management (see, for
  example, the optimizations described in
  Sec.~\ref{sec:optimizations}), PISTIS's worst case delay can be
  alternatively computed as $(3*8*d_n)+(2*N*d_p)$.  This is in part due
  to the fact that in PISTIS nodes
  avoid re-verifying already verified signatures.

Our results, shown in Table~\ref{fig:worst-case-latencies}, show that
PISTIS has the best worst case latencies of
all algorithms for $d_n=1ms$ (as mentioned above,
  in the first column $d_p=10$, while in the last two columns $d_p$ is such that
  $1<d_p<10$, and can be derived from the numbers provided in the
  table).
\begin{table}[!h]
  \normalsize
  \begin{center}
    \begin{tabular}{ l|c|c|c }
      & \cite{flaviu} & \cite{RT-ByzCast} & PISTIS \\\hline
      $N = 25$,  $f = 8$  & 2,400 ms  & 26 ms  & 25.6 ms\\ \hline
      $N = 50$,  $f = 16$ & 8,640 ms  & 70 ms  & 27 ms\\ \hline
      $N = 100$, $f = 33$ & 34,650 ms & 150 ms & 30 ms\\
    \end{tabular}
  \end{center}
  \caption{Worst case latencies}
  \label{fig:worst-case-latencies}
\end{table}

Two main observations can be made:
(1)~compared to the other protocols, PISTIS has superior performance
due to the fact that PISTIS is event triggered, utilizes fast
signature schemes, reduces the number of signatures created and
verified, sends fewer messages (which increase individual message
failures) and allows processes for fast detection of their tardiness;
and (2) PISTIS's expected performance in practice (see
Fig.~\ref{fig:duration1ms}) is significantly better than the worst
case delay bound reported in the table.

\subsection{Implementation Optimizations}
\label{sec:optimizations}

We implemented three optimizations to improve the performance of PISTIS (as described in
  Sec.~\ref{sec:et-rtbyzcast}).
(1)~If a process $p_i$ knows that some process $p_j$ has already
received $2f+1$ \emph{echo} signatures for some message~$m$, $p_i$
stops sending echoes related to $m$ to $p_j$.  Every process
implements this optimization by maintaining a list, say~$\mathcal{L}$,
that contains all the processes from which it has heard $2f+1$
signatures for a given message.
During a broadcast, a process diffuses a message to $\fanout$
processes at random among $\mathit{\Pi}\setminus\mathcal{L}$.
Processes do the same for \emph{deliver} messages.
(2)~Processes do not verify signatures that
they have already received.
(3)~Processes skip messages that only contain signatures
that were already received.

\subsection{Implementation Configuration and Settings}

We implemented PISTIS in C++ on the Omnet++ 5.4.1 network
simulator~\cite{omnet}. In order to accurately measure PISTIS's
communication overhead, we configure network links to have a
non-limiting 1Gbps throughput, and a communication latency of either
1ms or 5ms. We evaluated PISTIS's performance using two signature
schemes of similar security guarantees, and available in the OpenSSL
library~\cite{openssl}: RSA-2048 (i.e., 256 bytes long signatures) and
\newtext{ECDSA} with prime256v1 curves (i.e., 71 bytes long
signatures). We use broadcast messages of sizes equal to 1B and
1KB.

We run our simulations for systems with $\total\in\{25,49,73, 300\}$
processes \newtext{in fully connected networks}, and for several
values of $\fanout$, which is the number of processes each process
forwards a message $m$ to during diffusion.
We consider the probability of losing/omitting a message sent at any
point in time to be $i/10$, where $0\leq{i}\leq{9}$.

\subsection{PISTIS's Reliability}
\label{sec: process and system shutdown}

To assess PISTIS's reliability, we evaluate the probability
that a correct process enters the passive mode.  Such probability is a crucial measure: a process becoming passive may lead the system to shutdown and hence to stop delivering messages.  Namely, when $\total=3f+1$, a single correct process staying passive for long-enough can,
in the worst case (when $f$ Byzantine processes are not sending messages), %
leave $2f$ correct
processes, which would not be enough to gather quorums of size $2f+1$,
leading those $2f$ processes to also become passive.

For a given value of $\total$ and $p$, we invoke a broadcast at one of
the processes and record any non-Byzantine process that
crashed itself during broadcast.
We obtain our results by repeating each experiment $10^5$~times, and
we report the probability that a process crashes itself as:
$$(\text{num. of experiments with self-crashed processes})/{10^5}$$

We study the impact of several parameters, including
$\mathbb{T}$, $\total$, $\fanout$, $f$, and $p$, on PISTIS's
reliability, and determine which values should be used to 
enforce an intended system reliability.

Fig.~\ref{pcrashchangingf68} shows that the system's reliability increases
with its size
and $\duration$'s value for large enough values of $\mathbb{T}/d$.
For example, when $\mathbb{T}=8d$, a system with 25
(resp.\ 49) processes operates with high reliability (i.e., there
is a negligible probability that a process becomes passive) under message
loss rates reaching up to
 $40\%$ (resp.\ $50\%$).

Fig.~\ref{pcrashchangingf2} shows that the
actual number of Byzantine processes, which varies between $0$ and $f$
(the maximum number of tolerable Byzantine nodes), influences the
system's resiliency.
As expected, with fewer processes being Byzantine, higher message
loss rates are tolerated without any process shutdown.

\medskip
\textbf{Impact of the diffusion fanout}.
In the results presented so far, processes forward each message
to $\fanout=f+1$ other random processes.  We now study the effect of
$\fanout$ by measuring PISTIS's reliability when it
varies.
Fig.~\ref{fig:changingX} shows that increasing $\fanout$ helps increase
the overall system reliability. As expected increasing the fanout (value of $\fanout$) reduces the probability of having a non-Byzantine node becoming passive.

	\begin{figure}[!t] %
		\begin{center}
			\includegraphics[width=.8\columnwidth]{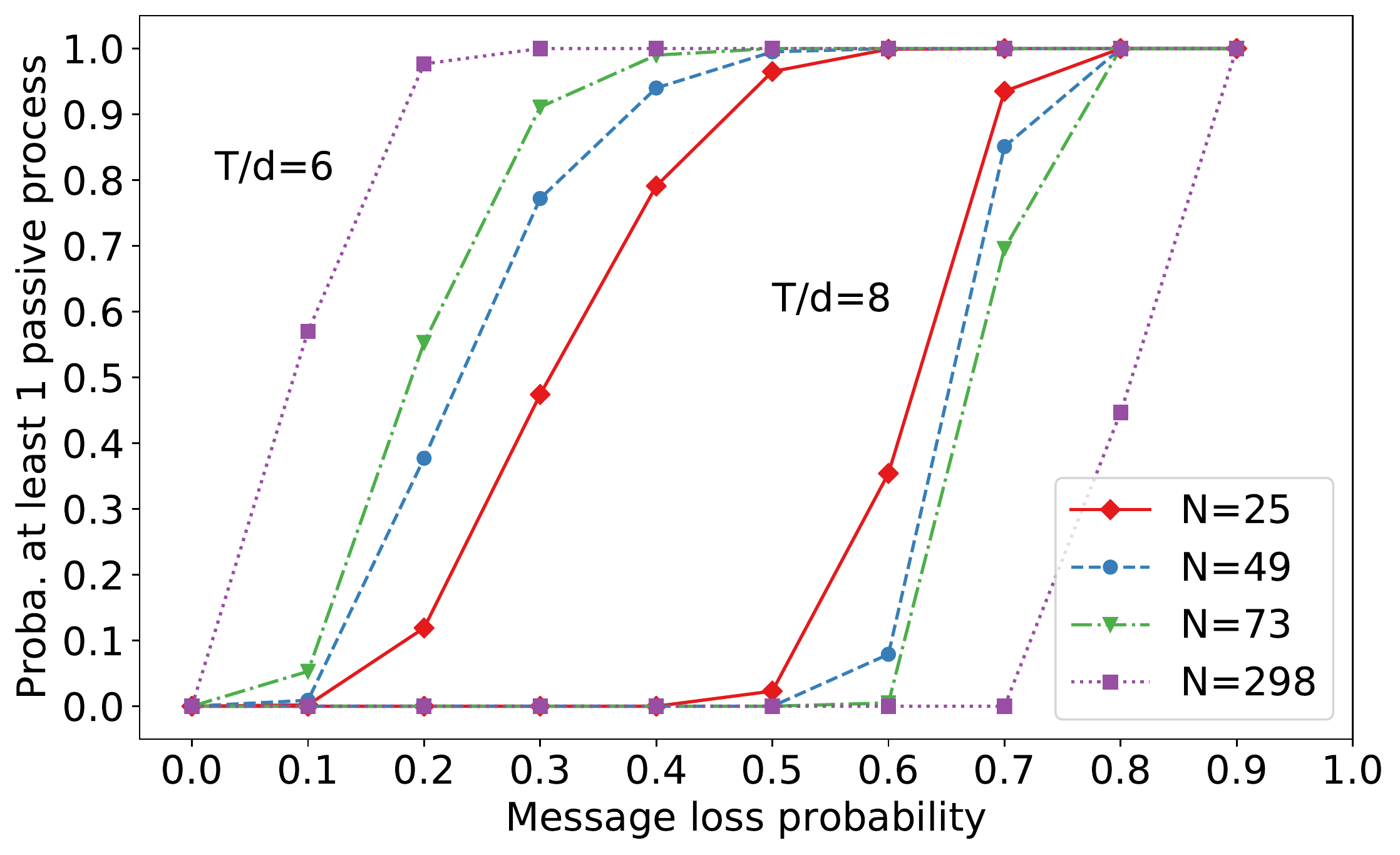}
			\vspace*{-10pt}
			\caption{Probability of a correct process
                          becoming passive when $\mathbb{T}=6d$ or
                          $\mathbb{T}=8d$, and $\fanout=f+1$ (without recovery)}\label{pcrashchangingf68}
			\vspace*{-10pt}	
		\end{center}
	\end{figure}

	\begin{figure}[!t] %
		\begin{center}
			\includegraphics[width=.8\columnwidth]{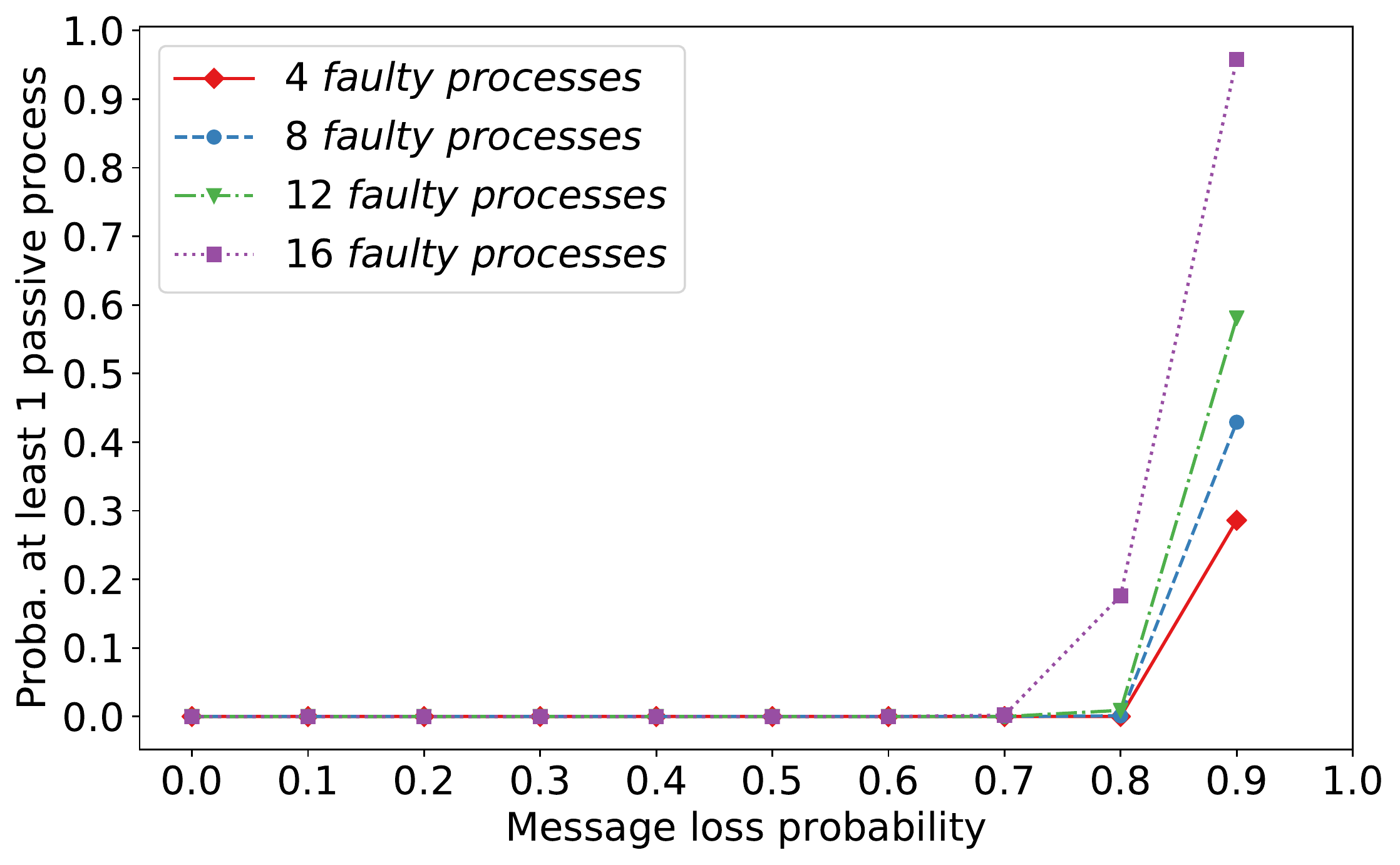}
			\vspace*{-10pt}
			\caption{Probability of a correct process becoming passive in a system of $49$ processes (i.e., $f=16$) using $\mathbb{T}=8d$ and $X=17$, when 0, 4, 8, 12 or 16 processes are faulty (without recovery)}\label{pcrashchangingf2}
			\vspace*{-15pt}
		\end{center}
	\end{figure}
	\begin{figure}[!t] %
		\begin{center}
			\includegraphics[width=.8\columnwidth]{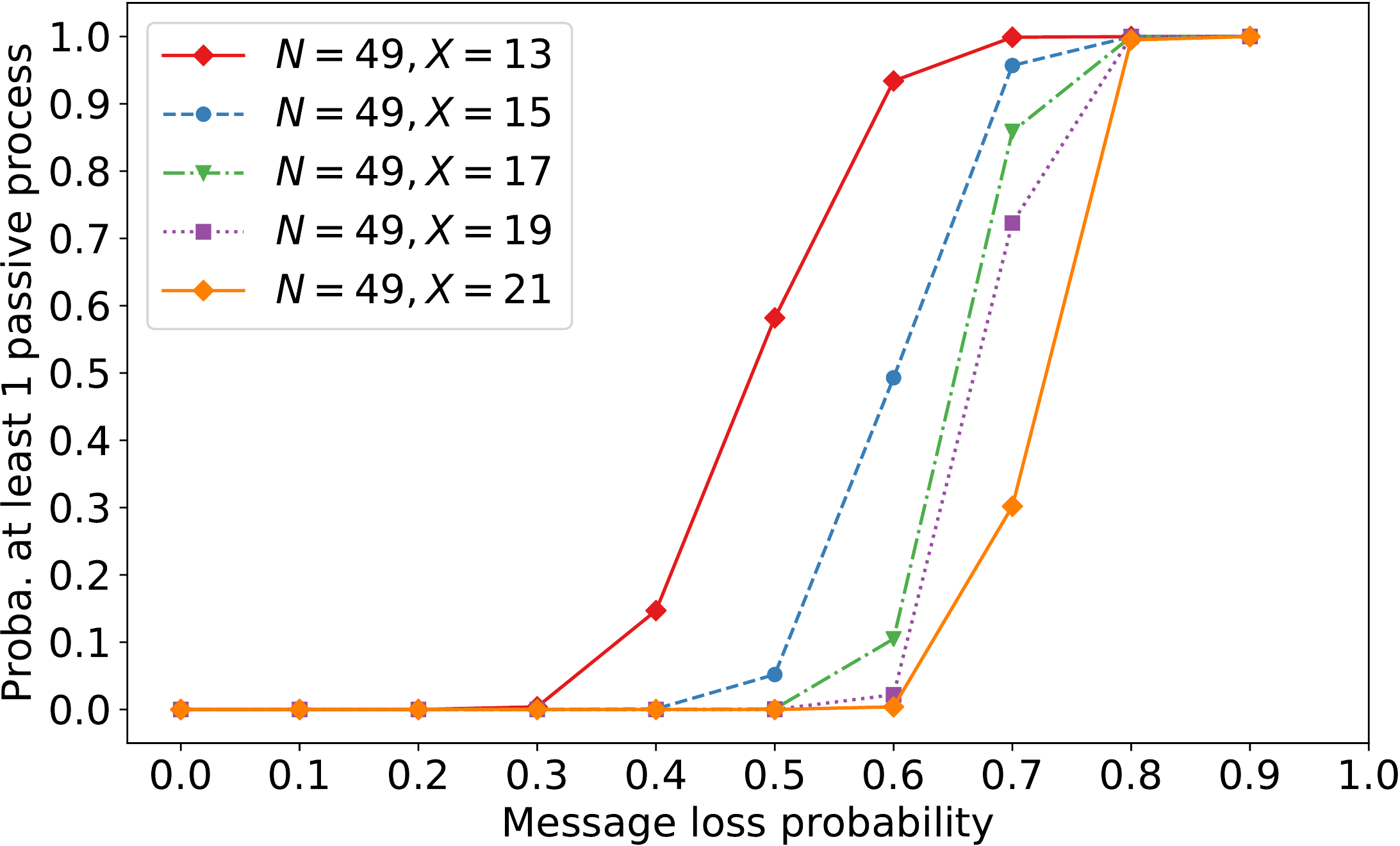}
			\vspace*{-10pt}
			\caption{Probability of a correct process becoming passive in a system of $49$ processes using $\mathbb{T}=8d$, and where $\fanout$ varies (without recovery)}\label{fig:changingX}
			\vspace*{-15pt}
		\end{center}
	\end{figure}

\medskip
\textbf{Recovery.}
Fig.~\ref{fig:recovery} details the probability that no Byzantine
quorum remains active after a broadcast instance when the message loss
probability increases. First, one can observe that the recovery
mechanisms improve the resiliency of the system. For example, with
$N=49$, PISTIS can tolerate a 70\% message loss rate without
system-wide crashes thanks to the recovery mechanisms, improving over
the value of 50\% obtained without recovery. Second, we show that one
can further improve the system's tolerance to message losses by
overprovisioning the system. By using three more nodes, i.e., 52 in
total, the system can tolerate $f=16$ Byzantine nodes and now tolerate
up to 80\% of message losses.

\begin{figure}[!t]%
	\begin{center}
		\includegraphics[width=.8\columnwidth]{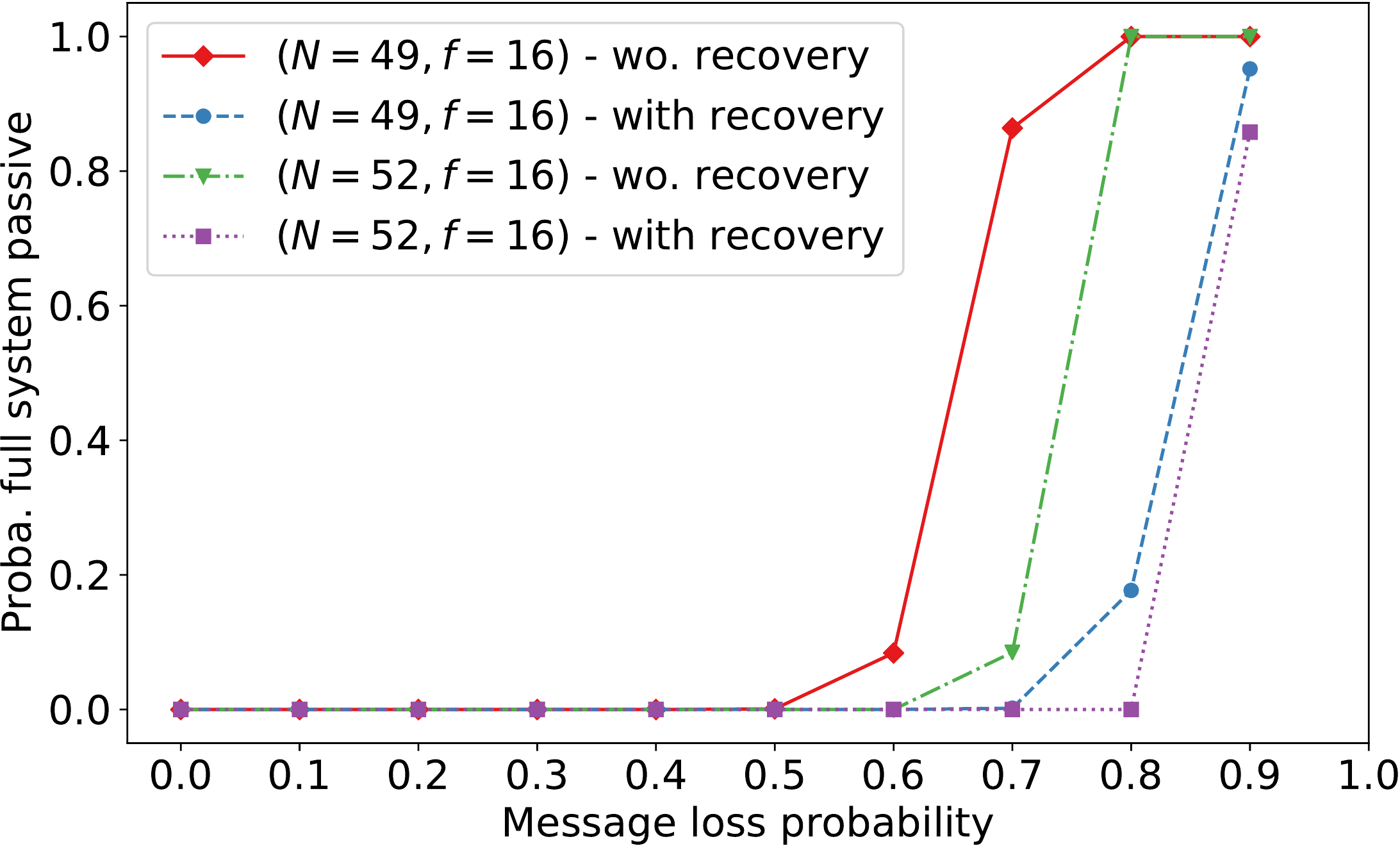}
                \vspace*{-10pt}
                \caption{Probability that no Byzantine quorum remains active in systems of 49 or 52 processes, when $\mathbb{T}=8d$, $\fanout = 17$, and $f=16$ processes are Byzantine.}
                \label{fig:recovery}
                \vspace*{-15pt}
	\end{center}
\end{figure}

\subsection{PISTIS latency and bandwidth consumption}
\label{sec:PISTIS-overhead}

Next, we evaluate PISTIS's incurred bandwidth and latency. For these experiments,
we average results over 1,000 runs. %
We use $\mathbb{T}=8d$, since our reliability results
show it allows a very large number of
message losses to be tolerated. However, we now run our experiments without any
message losses to measure the worst case bandwidth consumption. We
measure both the protocol latency and bandwidth consumption depending
on the value of $\fanout$ that the processes use.
\newtext{We also compare the average latency and bandwidth consumption
  of PISTIS with that of RT-ByzCast~\cite{RT-ByzCast}. Note that RT-ByzCast~\cite{RT-ByzCast} uses ECDSA signatures and all-to-all communication ($\fanout=N$).} %

\medskip
\textbf{Latency.} Fig.~\ref{fig:duration1ms} and~\ref{fig:duration5ms} detail the
latency for a broadcast message to be delivered by all correct
processes in systems of size 25, 49, and 73 \newtext{(i.e., where $f \in \{8, 16, 24\}$)}:
PISTIS delivers with latencies within $[3\text{ms},60\text{ms}]$ depending on the network delay $d$ and signature scheme used RSA vs. ECDSA.  The
latency increases when $\total$ increases, and decreases when
$\fanout$ increases.
We draw the following conclusions:
(1)~PISTIS is slower than
RT-ByzCast for $\fanout<f$. For $\fanout \geq f$ PISTIS is on a par with RT-ByzCast until some $\fanout \leq 3f$ ($\fanout\leq 2f$ for systems with up to 400 nodes, see Table II)  after which PISTIS is faster;
(2)~PISTIS's absolute improvement over RT-Byzcast becomes more significant with
increased link delay;
(3)~When delivering latencies on par with or better than RT-ByzCast, PISTIS can do so
with a lower network overhead as presented next (see Fig.~\ref{fig:bdw1ms}
and~\ref{fig:bdw5ms}).

\begin{figure}[!t] %
	\begin{center}
		\includegraphics[width=.8\columnwidth]{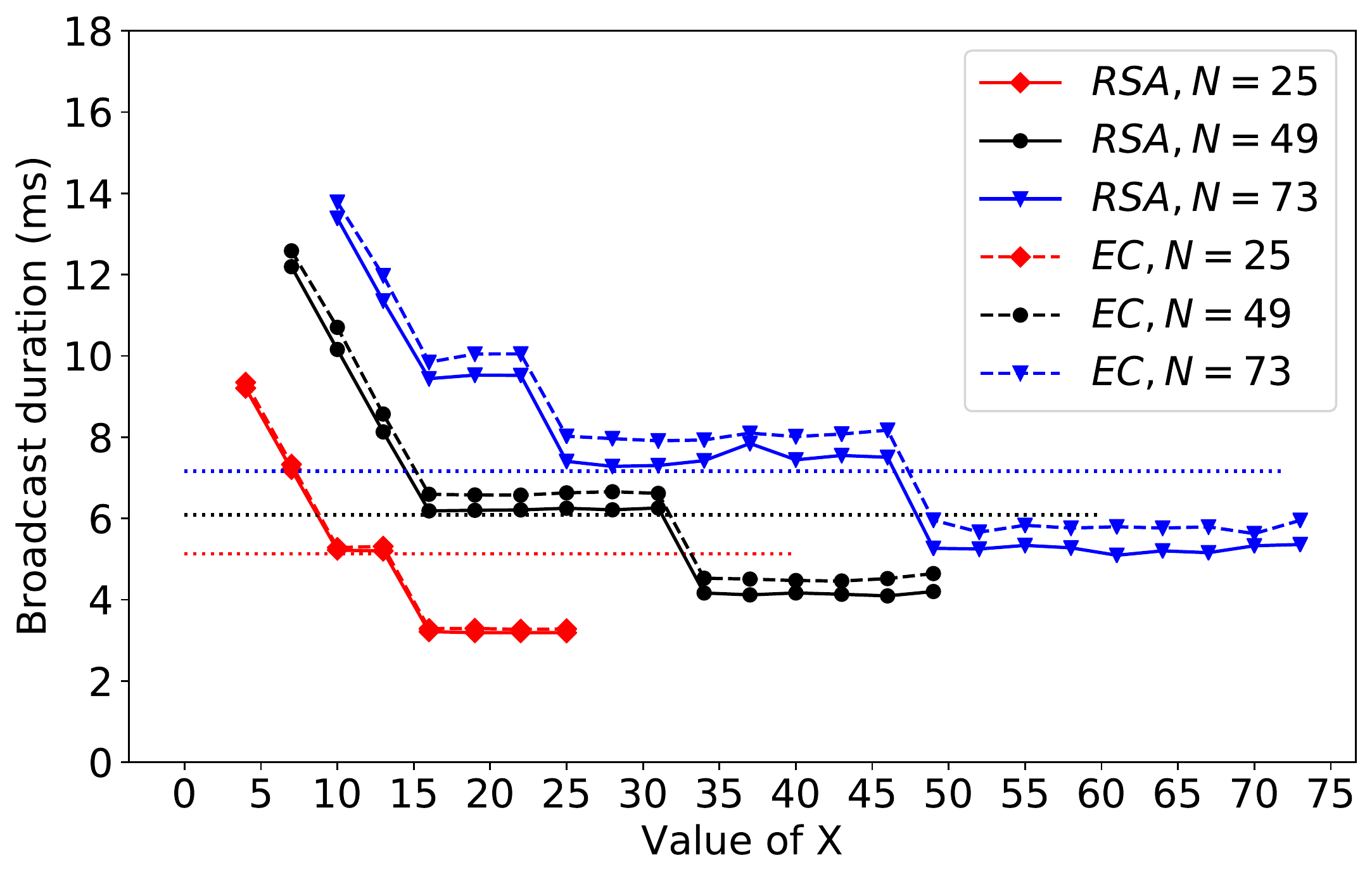}
                \vspace*{-10pt}
		\caption{Average latency with a 1ms link latency with $\mathbb{T}=8d$ and without message losses. The dotted lines indicate RT-ByzCast's values~\cite{RT-ByzCast}.}
                \label{fig:duration1ms}
		\vspace*{-15pt}
	\end{center}
\end{figure}

\begin{figure}[!t] %
	\begin{center}
		\includegraphics[width=.8\columnwidth]{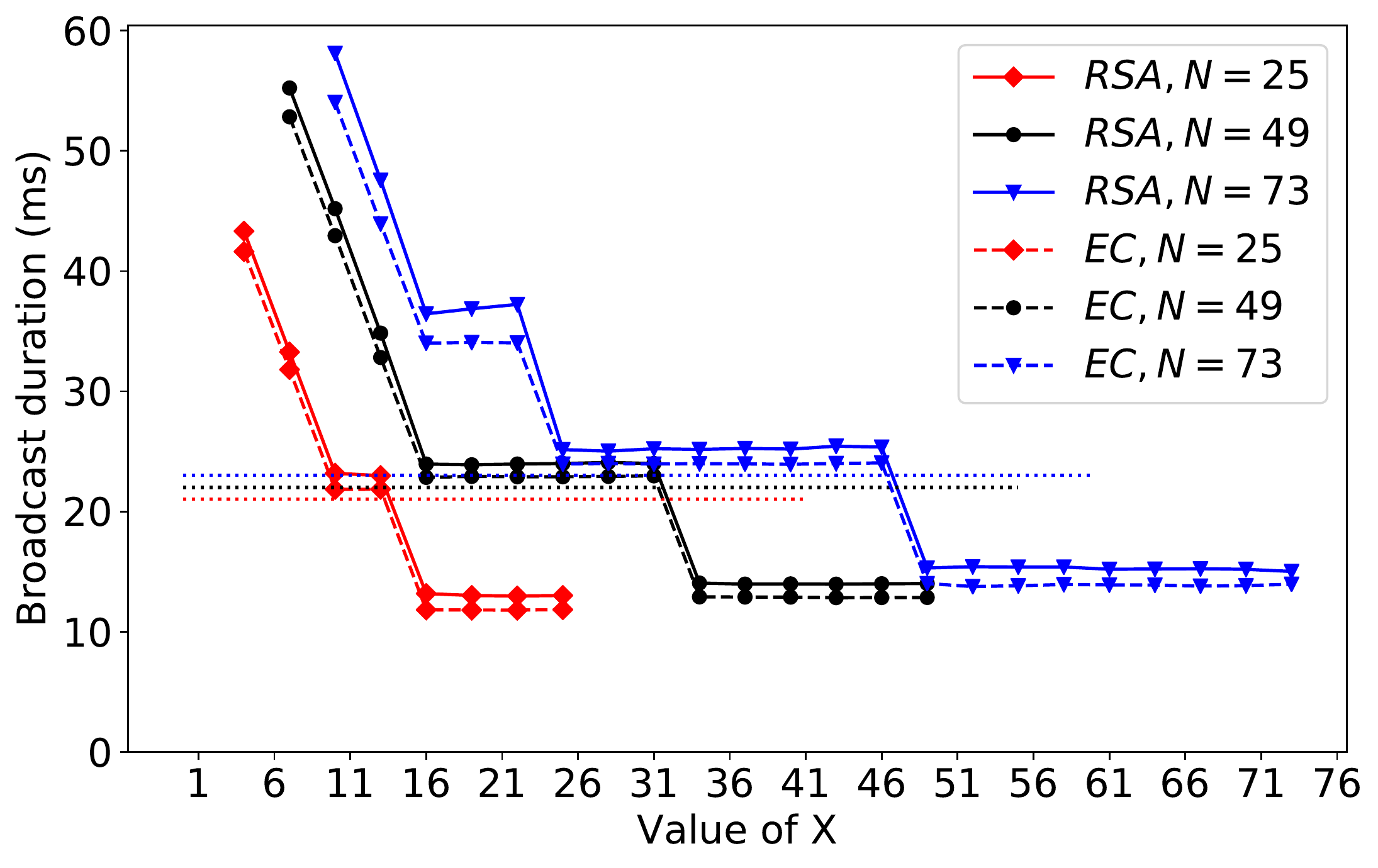}
		\vspace*{-10pt}
		\caption{Average latency with a 5ms link latency. The dotted lines indicate RT-ByzCast's values~\cite{RT-ByzCast}.}
                \label{fig:duration5ms}
		\vspace*{-15pt}
	\end{center}
\end{figure}

\medskip
\textbf{Network bandwidth consumption.}
We now measure PISTIS's bandwidth overhead per broadcast
invocation, using RSA and ECDSA signatures.
Fig.~\ref{fig:bdw1ms} and~\ref{fig:bdw5ms} present the bandwidth
consumption for 1B payloads with 1ms and 5ms link delay,
respectively. One can observe that
with
$\fanout=f+1$ and when using ECDSA signatures, PISTIS's bandwidth
consumption is 3.2 times lower than that of RT-ByzCast. %
\newtext{We also observe that when using ECDSA
  signatures there is a fanout between $2f+1$ and $3f+1$ such that
  below this fanout PISTIS's average bandwidth consumption is lower
  than RT-ByzCast's, while past that threshold, PISTIS's average
  bandwidth consumption becomes greater than RT-ByzCast's. This is
  partly due to the fact that PISTIS being event-based sometimes
  consumes more bandwidth. However, we see in those figures that
  PISTIS provides a useful trade-off between latency and bandwidth
  consumption.} Fig.~\ref{fig:bdw1K} shows as well that the bandwidth
consumption increases
 reasonably when the message payload is
increased to \newtext{1KB}. Besides bandwidth,
  \ifx\tAppA\tAppB%
  Fig.~\ref{fig:msgs} (Appx.~\ref{sec:appendixmsgs})
  \else%
  Fig.~12 in~\cite[Appx.E]{Kozhaya+Decouchant+Rahli+Verissimo:pistis:long:2019}
  \fi
 shows that PISITS also sends less message than RT-ByzCast.%

\begin{figure}[!t]%
	\begin{center}
		\includegraphics[width=.8\columnwidth]{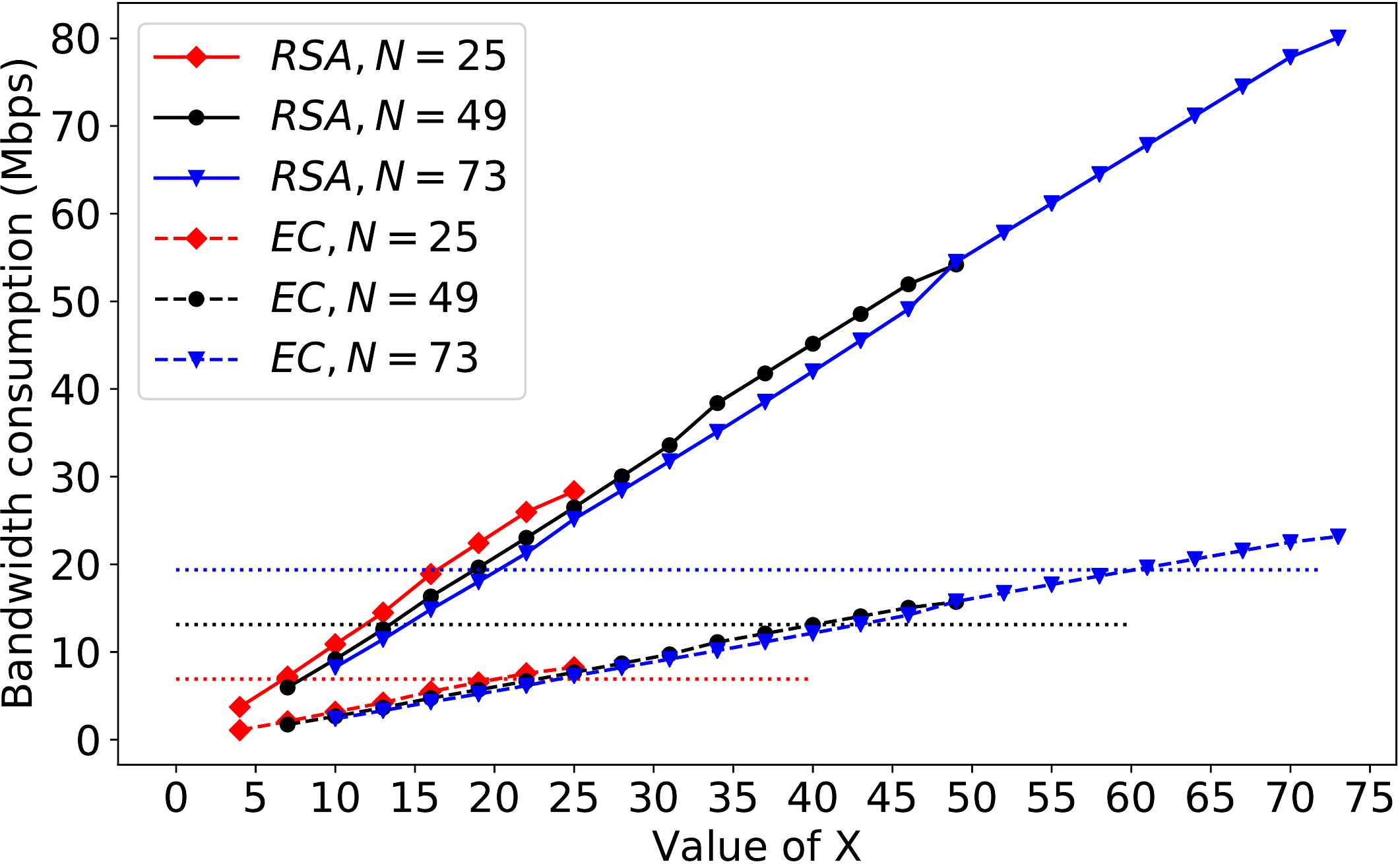}
                \vspace*{-10pt}
                \caption{Average bandwidth consumption per node and per communication link with a 1ms link latency without message losses. The dotted lines indicate RT-ByzCast's values~\cite{RT-ByzCast}.}\label{fig:bdw1ms}
                \vspace*{-15pt}
	\end{center}
\end{figure}

\begin{figure}[!t]%
	\begin{center}
		\includegraphics[width=.8\columnwidth]{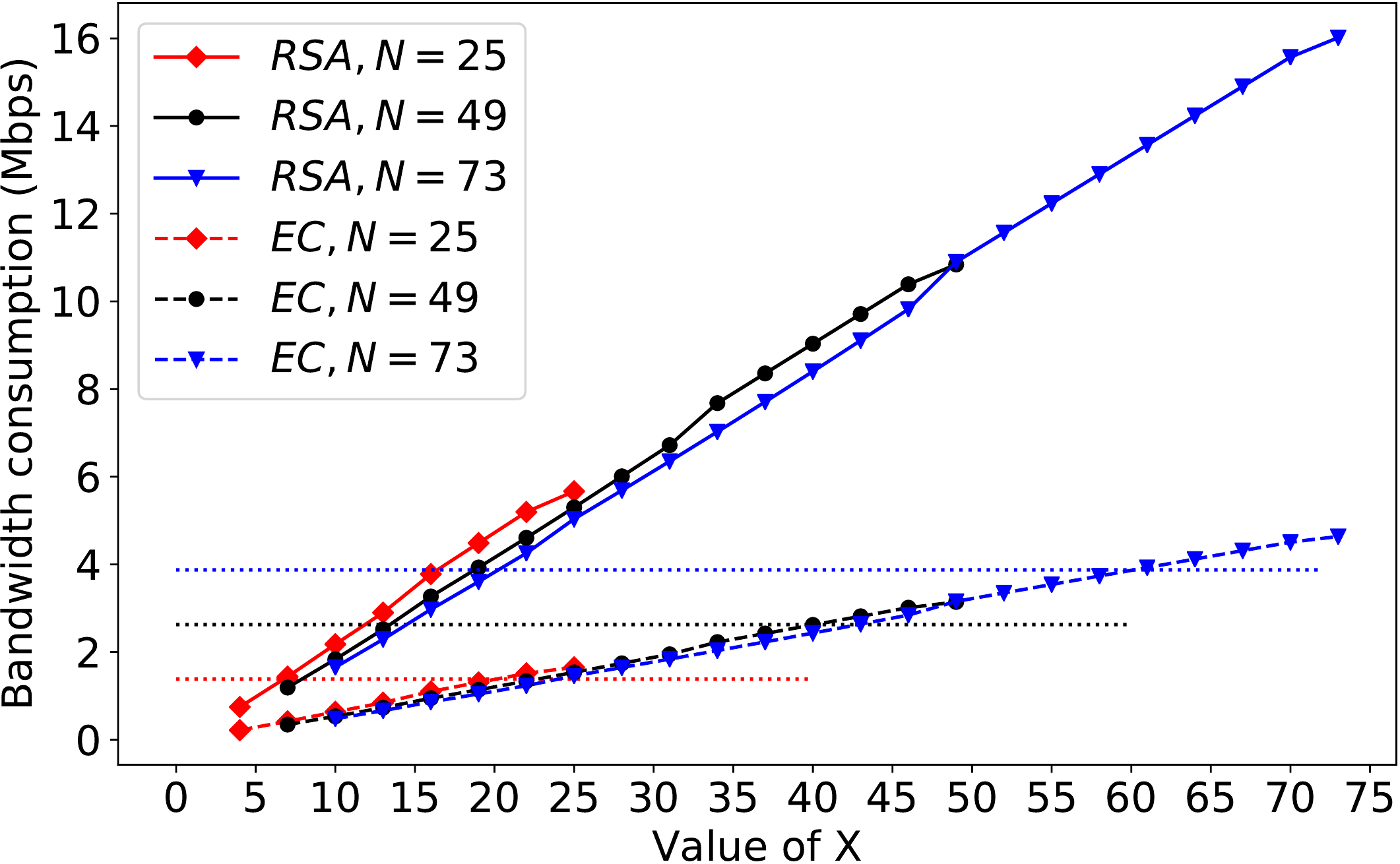}
                \vspace*{-10pt}
                \caption{Average bandwidth consumption per node and per communication link with a 5ms link latency without message losses. The dotted lines indicate RT-ByzCast's values~\cite{RT-ByzCast}.}\label{fig:bdw5ms}
                \vspace*{-15pt}
	\end{center}
\end{figure}

\begin{figure}[!t]%
	\begin{center}
		\includegraphics[width=.8\columnwidth]{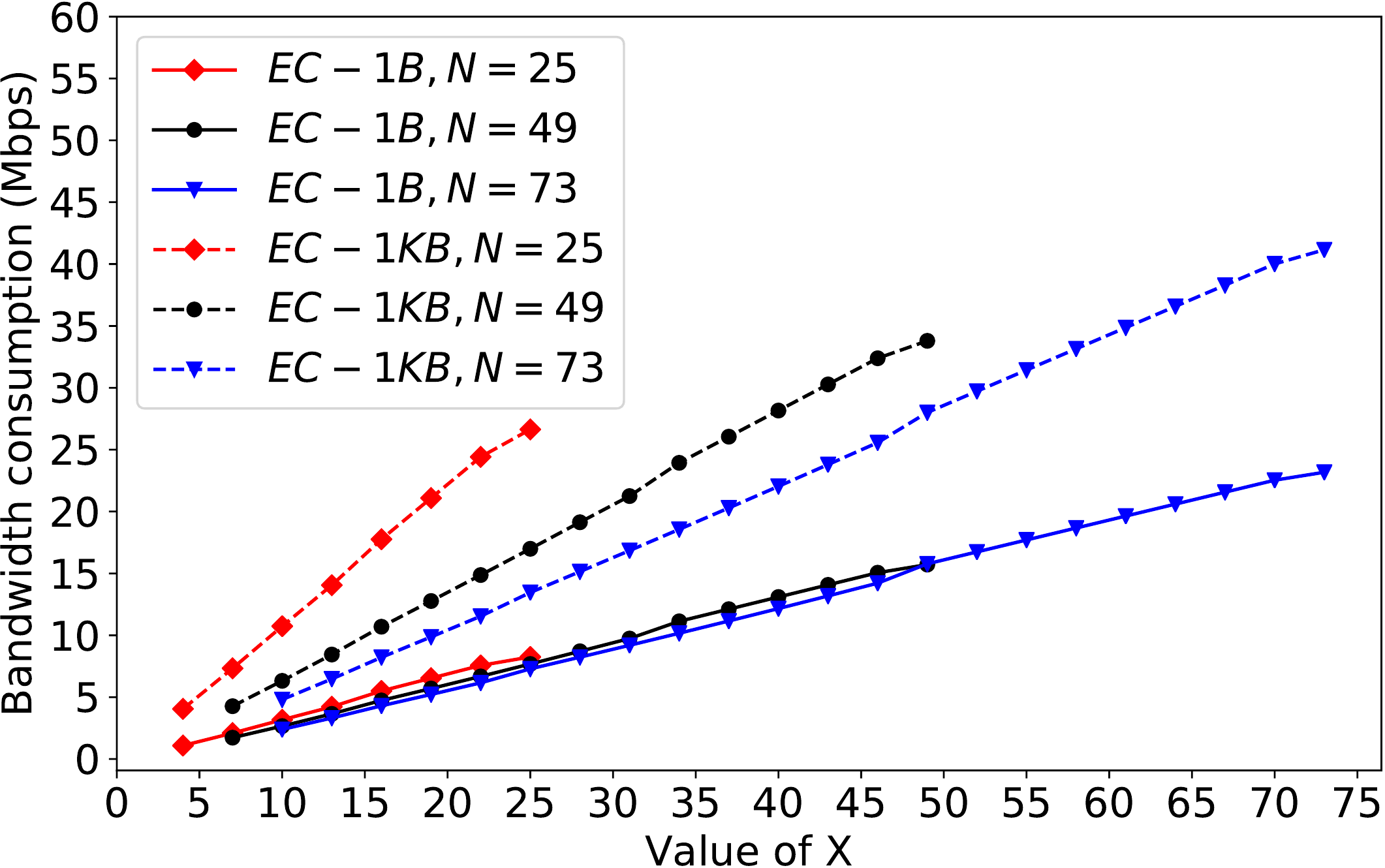}
                \vspace*{-10pt}
                \caption{Average bandwidth consumption per node and per communication link with a 1ms link latency using either 1B or 1KB messages, without message losses}\label{fig:bdw1K}
                \vspace*{-15pt}
	\end{center}
\end{figure}

\begin{table}[!h] \label{table:compare}
  \small
  \begin{center}
    \begin{tabular}{p{.46cm}|p{.6cm}|p{.6cm}|p{.6cm}|p{.55cm}||p{.6cm}|p{.55cm}|p{.6cm}|p{.5cm}}
      N & Bdw, \newline $X_{\mathit{min}}$ & Bdw, \newline $X_{\mathit{mid}}$ & Bdw, \newline $X_{\mathit{max}}$ & Bdw \cite{RT-ByzCast} & Lat, \newline $X_{\mathit{min}}$ & Lat, \newline $X_{\mathit{mid}}$ & Lat, \newline $X_{\mathit{max}}$  & Lat \cite{RT-ByzCast}\\\hline
      25   & 0.6   & 1.2  & 1.7    &  1.4   & 21.1  & 11.0 & 11.1    &  20.9  \\ 
      49   & 1.0   & 2.2  & 3.1    &  2.6   & 22.3  & 12.4 & 12.0    &  22.0  \\
      73   & 1.5   & 3.2  & 4.6    &  3.9   & 23.6  & 13.1 & 13.2    &  23.1  \\
      200  & 3.8   & 8.4  & 12.5   &  10.4  & 31.5  & 20.7 & 19.7    &  29.3  \\
      300  & 5.7   & 12.5 & 18.6   &  15.6  & 41.2  & 31.2 & 27.4    &  38.0  \\ 
      400  & 7.6   & 16.7 & 25.0   &  20.9  & 59.7  & 43.0 & 32.0    &  41.2  \\ 
      500  & 9.4   & 20.8 & 31.1   &  26.0  & 85.1  & 63.0 & 40.0    &  51.6  \\ 
      1000 & 18.7 & 41.4 & 62.2 &  52  & 296.3 &  213.1 &  98.5 &  116.2 \\
    \end{tabular}
  \end{center}
  \caption{Pistis bandwidth consumption (Mbps) and broadcast duration
    (ms) with larger systems ($f = \lfloor N/3 \rfloor$), where
    $X_{\mathit{min}}=f+1$, $X_{\mathit{mid}}=2f+1$ and $X_{\mathit{max}}=N$.}
  \label{fig:largeSystemsNumbers}
\end{table}

\medskip
\newtext{
\textbf{Scalability with the system size.}
We also evaluated how PISTIS' latency and bandwidth consumption evolve
with larger system sizes, namely up to 1000 nodes for $X\geq f+1$ and
a 5ms link latency.
Table~\ref{fig:largeSystemsNumbers} summarizes the results obtained
for $X=f+1$,  $X=2f+1$ and $X=N$.
Our results show that PISTIS outperforms RT-ByzCast and provides latencies suitable for (1) fast automatic interactions
($\leq{20}\mbox{ms}$) for systems with up to 200 nodes, (2) power systems and substation
automation
applications ($\leq{100}\mbox{ms}$) for systems with up to 1000 nodes, and (3) slow speed
auto-control functions ($\leq{500}\mbox{ms}$), continuous control
applications ($\leq{1}\mbox{s}$) and
 operator commands of SCADA
applications ($\leq{2}\mbox{s}$) for systems with 1000 nodes or more.
}

\section{Conclusion}
\label{conclusion}

In this paper, we studied how to build large-scale distributed
protocols that tolerate network faults and attacks while providing
real-time communication. We introduced a suite of proven correct
algorithms, starting from a baseline real-time Byzantine reliable
broadcast algorithm, called PISTIS, all the way up to real-time
Byzantine atomic broadcast and consensus algorithms. PISTIS is
empirically shown to be robust, scalable, and capable of meeting
timing deadlines of real CPS applications. PISTIS withstands message
loss (and delay) rates up to
50$\%$
 in systems with 49 nodes and
provides bounded delivery latencies in the order of a few
milliseconds. PISTIS improves over the state-of-the-art in scalability
and latency through its event-triggered nature, gossip-based
communications, and fast signature verifications. Our work simplifies
the construction of powerful distributed and decentralized monitoring
and control applications of various CPS domains, including
state-machine replication for fault and intrusion tolerance.

\printbibliography

\vspace*{-30pt}
\renewcommand{\baselinestretch}{0.85}
\begin{IEEEbiography}[{\includegraphics[width=1in,height=1.25in,clip,keepaspectratio]{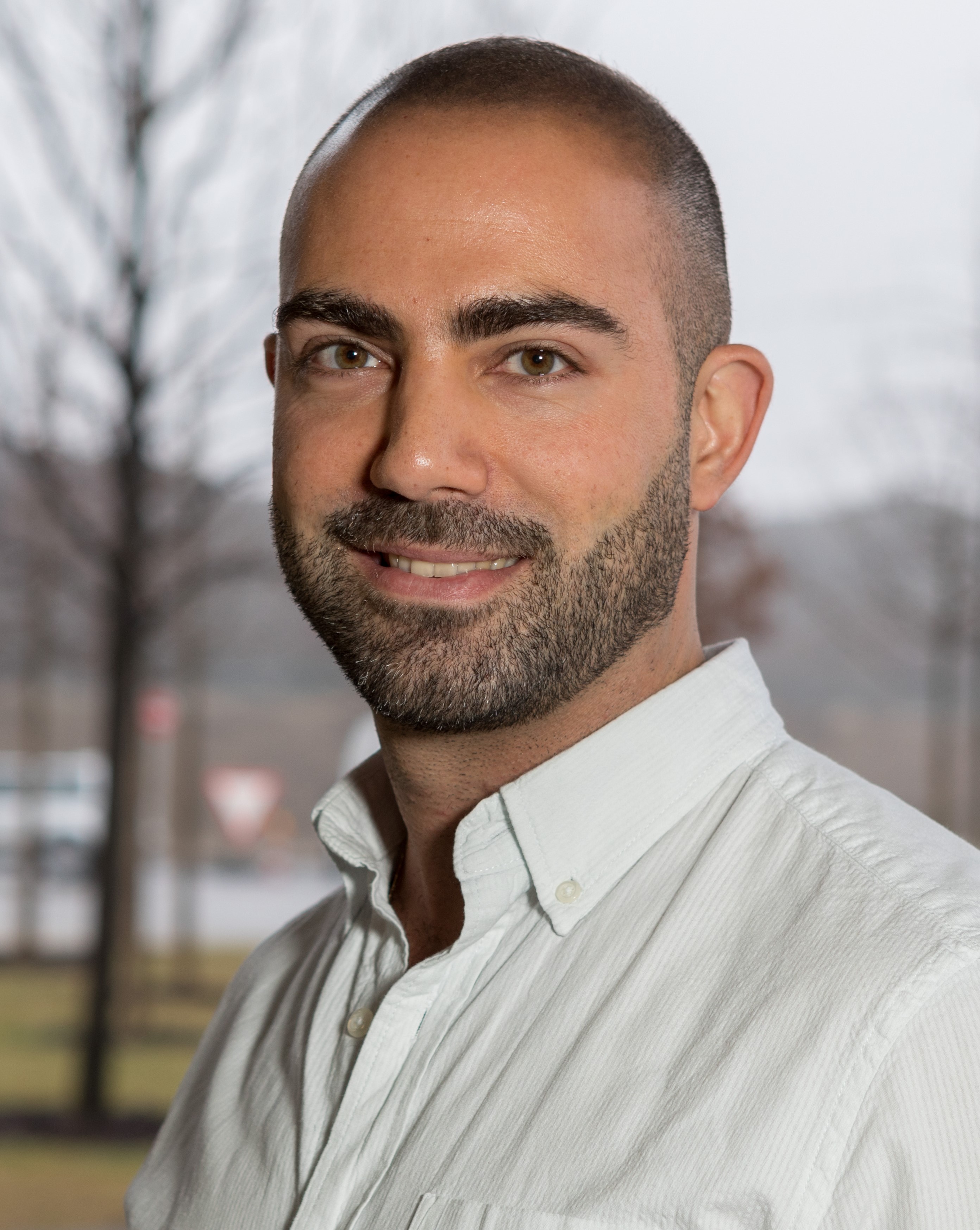}}]{David Kozhaya} %
	is a Senior Scientist at ABB Research, Switzerland.
	He received his PhD degree in Computer Science in 2016, from EPFL, Switzerland, where he was granted a fellowship from the doctoral school. His primary research interests include reliable distributed computing, real-time distributed systems, and fault- and intrusion-tolerant distributed algorithms.%
\end{IEEEbiography}

\vspace*{-30pt}
\begin{IEEEbiography}[{\includegraphics[width=1in,height=1.25in,clip,keepaspectratio]{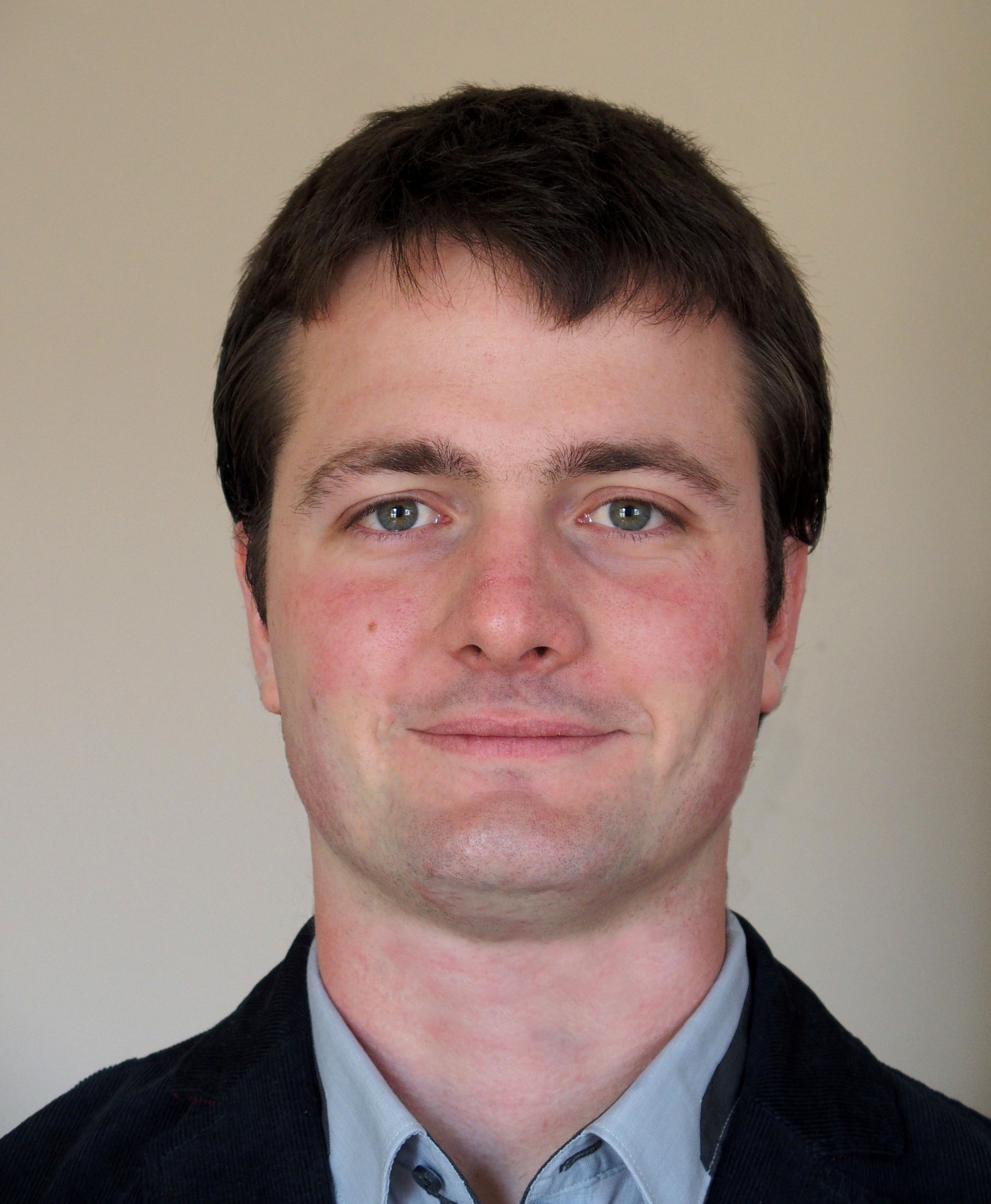}}]{J\'er\'emie Decouchant} is an Assistant Professor at TU Delft, the Netherlands. 
He received his Ph.D in Computer Science in 2015 from the Grenoble-Alpes University, France. 
His research interests include resilient distributed computing, privacy-preserving systems,  and their application to Blockchain, genomics, and machine learning.
\end{IEEEbiography}
\vspace*{-30pt}
\begin{IEEEbiography}[{\includegraphics[width=1in,height=1.25in,clip,keepaspectratio]{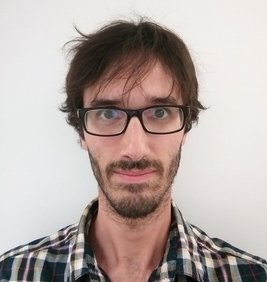}}]{Vincent Rahli} is a
Senior Lecturer at the University of Birmingham. He received his Ph.D
in Computer Science from Heriot-Watt University, UK. His research
focuses on designing, formalizing, and using type theories and on the
verification of distributed systems using proof assistants.
\end{IEEEbiography}

\vspace*{-30pt}
\begin{IEEEbiography}[{\includegraphics[width=1in,height=1.25in,clip,keepaspectratio]{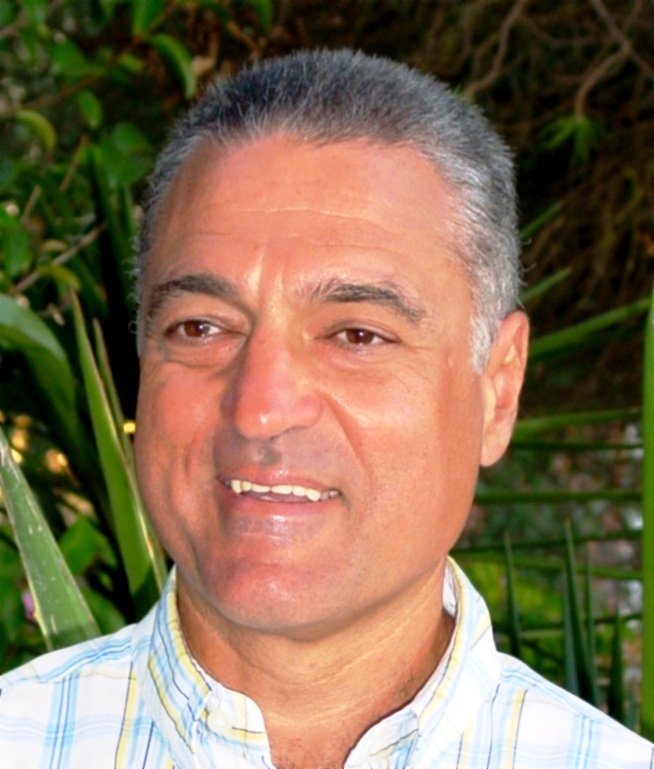}}]{Paulo Esteves-Ver\'issimo} is a professor at the KAUST University (KSA), and Director of the Resilient Computing and Cybersecurity Center (RC3 - \url{https://rc3.kaust.edu.sa/}). He was a member of the Sci\&Tech. Comm. of ECSO EU Cyber Security Organisation, Chair of IFIP WG 10.4 on Dependable Comp. and F/T, and vice-Chair of the Steer. Comm. of the DSN conference. He is Fellow of IEEE and of ACM, and associate editor of the IEEE TETC journal, author of over 200 peer-refereed publications and co-author of 5 books. He is currently interested in resilient computing, and its potential to improve classic cybersecurity techniques, in: SDN-based infrastructures; autonomous vehicles from earth to space; distributed control systems; digital health and genomics; or blockchain and cryptocurrencies. 

\end{IEEEbiography}

\ifx\tAppA\tAppB%

\newpage

\appendices

\begin{refsection}

\section{Differences Between Probabilistic Synchrony and Other Standard Models}
\label{app:partial-sync}

\paragraph{Comparison with fully asynchronous models}
Our model is more
informative than traditional fully asynchronous models. More precisely,
asynchronous models do not make any assumptions
regarding message transmission and processing delays, while we assume
that messages are delivered within a maximum transmission delay $d$
with high probability.

\paragraph{Comparison with synchronous models}
Our communication model is a probabilistic synchronous one. We recall
that in every transmission attempt a link may (with some probability)
violate reliability and timeliness by dropping the message or
delivering it within a delay~$>d$. In case of message loss (omission)
a sender that needs to re-transmit that message again faces yet
another risk of transmission failure. Due to omissions (losses in
consecutive transmission attempts) and the required follow-up
re-transmissions, the time it takes to send a message reliably from
one process to another (measured from the time of the first
transmission attempt) may be unbounded. So, despite links being
reliable and timely with high probability, our communication system is
no longer synchronous.

\paragraph{Comparison with partially synchronous models}
In comparison with partial synchrony~\cite{Dwork:1987}, which assumes
that communication becomes forever synchronous after some unknown
point in time, our probabilistic synchronous model
guarantees only finite synchronous
periods (with variable durations) that may occur randomly during the
lifetime of the system. In fact such probabilistic synchronous
communication has been shown to be weaker, in some
sense~\cite{neverSayNever:ipdps}, than partial synchrony. For example,
while the celebrated failure detectors of~\cite{Chandra:unreliableFD}
can be implemented in partially synchronous systems they are
impossible to implement in the systems with probabilistic synchronous
communication~\cite{neverSayNever:ipdps}.

\paragraph{The need for probabilistic synchrony models}
Probabilistic synchronous models (such as the one presented here
or~\cite{Verissimo+Almeida:tcos:1995,neverSayNever:ipdps}) are more
``realistic'' than synchronous models in the sense that timing
assumptions cannot always be ensured in distributed systems because,
for example, of the difficulty of guaranteeing reliable communication
between the nodes of a system.
Making the probability of timing failures (e.g., that messages might
be delivered after $d$) transparent to the model and protocols makes
them more robust.
For example, it allows designing protocols where messages might not
always arrive within a specified maximum transmission delay. Systems
that require processes to operate in a timely fashion, such as mission
critical systems, can therefore dynamically adapt to such untimely
situations to ensure that timing guarantees are fulfilled.

\paragraph{Comparison with quasi-synchronous models}
Quasi-synchronous models~\cite{Verissimo+Almeida:tcos:1995} address
the timing issues mentioned above.
In~\cite{Verissimo+Almeida:tcos:1995} synchronism is characterized by
the following properties:
P1---processing speeds are bounded and known;
P2---message delivery delays are bounded and known;
P3---local clock rate drifts are bounded and known;
P4---load patterns are bounded and known;
and P5---differences among local clocks are bounded and known.
A system is quasi-synchronous if it satisfies properties P1--P5, and
at least one of those does not hold with some known non-zero
probability.
As in a quasi-synchronous model, in our probabilistic model P2 only
holds with high probability.
Note, however, that in our probabilistic model we do not assume that
differences among local clocks are bounded and known.

\section{Correctness of PISTIS (Algorithm~\ref{main-alg})---Proof of Theorem~\ref{theorem:correctness-pistis}}
\label{appx:correctness-pistis}

\begin{lemma}[Validity]
  \label{lem:pistis-validity}
  If a correct process $p_i$ broadcasts $m$ then $p_i$ eventually
  delivers $m$.
\end{lemma}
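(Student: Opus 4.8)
The plan is to show that a correct broadcaster is, \emph{by the very definition of correctness in PISTIS}, forced to accumulate a full echo quorum on its own message, and that crossing this quorum threshold immediately fires its own $\delivOp{p_i}{\sq}{v}$ event. Recall from the correctness Remark that $p_i$ being correct w.r.t.\ this broadcast instance means $p_i$ never initiates passive mode between the $\bcastOp{p_i}{\sq}{v}$ event and the end of the instance. I would first trace the broadcast handler (lines~\ref{alg:bcastA}--\ref{alg:bcastB}): $p_i$ sets $\recho{p_i}{\sq}{v}=\{\sigma_i\}$ and calls $\tDiffuse{\mkmsgv{p_i}{\sq}{v}}{\mathbb{T}}{\echoMode}$, which in turn arms the timeout $\timeOut{\mkmsgv{p_i}{\sq}{v}}{\mathbb{T}}{\echoMode}$.

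The core step is to examine this echo timeout. When it expires, the \texttt{Expired-Timer} handler drives $p_i$ into passive mode precisely when $|\recho{p_i}{\sq}{v}|\leq 2f$ \emph{and} no lie has been discovered on $m$. I would then argue that a correct broadcaster can never have a lie discovered on its own instance $(p_i,\sq)$: since $p_i$ is correct it signs and diffuses only the single value $v$ for $\sq$, so no two validly $p_i$-signed conflicting values for $(p_i,\sq)$ ever exist, and the lie-detecting branch (lines~\ref{alg:echo3A}--\ref{alg:echo3B}) is never taken for $m$. Combining this with the fact that $p_i$ does \emph{not} become passive (as it is correct), the passive-mode guard must fail, forcing $|\recho{p_i}{\sq}{v}|>2f$ by the time the timeout expires --- i.e.\ $p_i$ collects at least $2f+1$ distinct echo signatures on $m$.

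Finally I would connect the quorum to an actual delivery. Because $p_i$ already initialised $\recho{p_i}{\sq}{v}$ in its broadcast handler, every returning echo for $m$ is processed in the ``already echoing $v$'' branch (lines~\ref{alg:echo2A}--\ref{alg:echo2B}); the instant $|\recho{p_i}{\sq}{v}|$ first exceeds $2f$ there, $p_i$ executes $\deliverMsg{p_i}{\sq}{v}{\recho{p_i}{\sq}{v}}$. Inspecting $\deliverMsgAt{p_i}{p_i}{\sq}{v}{\cdot}$ (lines~\ref{alg:line:delmsgfun}--\ref{alg:line:delnewsign}): since $\rdeliv{p_i}{\sq}{v}$ does not yet exist, the handler triggers $\delivOp{p_i}{\sq}{v}$, which is exactly the RTBRB-deliver of $m$. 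Hence $p_i$ delivers $m$, and since the quorum is reached no later than the expiry of the $\mathbb{T}$-timeout, this occurs within finite (indeed bounded) time.

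The main obstacle I anticipate is making the ``no lie is discovered on $m$'' claim airtight rather than merely intuitive: one must rule out that a Byzantine process fabricates a conflicting $\recho{p_i}{\sq}{v'}$ entry that would spuriously activate branch~\ref{alg:echo3A}. This rests on the unforgeability conventions of Remark~\ref{rem:valid-message} together with the tagged-payload Remarks on $(p_i,\sq,v,\mathsf{E})$ signatures --- only $p_i$ can produce a valid $p_i$-signature, so a correct $p_i$'s single choice of $v$ is the only value that can ever reach a quorum under its own id. The remainder is a direct reading of the control flow, leaning on PISTIS's deterministic-via-passive design: the correctness hypothesis already encapsulates the probabilistic success of collecting enough signatures, so no separate probabilistic argument is needed here.
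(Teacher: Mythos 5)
Your proof is correct and takes essentially the same route as the paper's: argue by contrapositive through the passive-mode mechanism---since a correct $p_i$ arms the echo timeout at broadcast time and never initiates passive mode, it must hold $|\recho{p_i}{\sq}{v}|>2f$ by expiry, and crossing that threshold executes $\deliverMsgSYMB$, which triggers $\delivOp{p_i}{\sq}{v}$. Your explicit handling of the ``no lie is discovered'' escape clause in the echo-timeout handler (ruled out for a correct broadcaster's own instance by signature unforgeability) is a detail the paper's outline silently skips over, and you resolve it correctly.
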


\begin{proof}[Proof outline]
  Because $p_i$ is correct, it will hear echoes of $m$ from $2f+1$
  processes (including $p_i$) by $t+\mathbb{T}$, where $t$ is the time
  $p_i$ broadcasted $m$. This is true as otherwise, i.e., if less than
  $2f+1$ echoes for $m$ are heard, $p_i$ would kill itself (hence is
  no longer correct).
  Indeed, $p_i$ triggered a timer (see line~\ref{alg:timeout} of
  Algorithm~\ref{main-alg}) when it started broadcasting $m$ (see
  line~\ref{alg:diffuse-broadcast}).
  Because $p_i$ received $2f+1$ echoes for $m$,
  it must have delivered $m$ too (see
  lines~\ref{alg:line:delmsga},~\ref{alg:line:delmsgb},
  and~\ref{alg:line:delmsgc} of Algorithm~\ref{main-alg}).
\end{proof}

\begin{lemma}[No duplication]
  No correct process delivers message $m$ more than once.
\end{lemma}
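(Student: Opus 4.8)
The plan is to pin down the single program point at which a correct process emits a delivery, then show that a per-instance guard lets that point fire at most once per message. Since every message is uniquely identified by its $(p_j,\sq,v)$ tuple (with $p_j$ the broadcaster), ``delivering $m$ more than once'' amounts to triggering $\delivOp{p_j}{\sq}{v}$ twice for the same tuple, so it suffices to bound the number of such triggers at a correct process $p_i$.

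First I would observe that a correct process triggers $\delivOpM$ only at line~\ref{deliver} of Algorithm~\ref{main-alg}, which lies inside the \deliverMsgSYMB{} function. Every other site that reacts to an incoming message and then delivers---lines~\ref{alg:line:delmsga}, \ref{alg:line:delmsgb}, \ref{alg:line:delmsgc}, and~\ref{alg:line:delmsgd}---does so by \emph{calling} \deliverMsgSYMB{} rather than by triggering $\delivOpM$ directly. Hence it is enough to count how often line~\ref{deliver} executes for a fixed tuple $(p_j,\sq,v)$.

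Next I would exploit the guard wrapping that line: the trigger sits inside the branch $\nexists\rdeliv{p_j}{\sq}{v}$, and immediately afterwards the function initializes $\rdeliv{p_j}{\sq}{v}=\{\sigma_i\}$, so this set exists from that moment on. Because $p_i$ is correct, its local steps are atomic and all functions are non-blocking, so the test-and-initialize pair runs without interleaving; once the branch has been taken for $(p_j,\sq,v)$, the condition $\nexists\rdeliv{p_j}{\sq}{v}$ can never hold again. A scan of Algorithms~\ref{alg:proof-of-life} and~\ref{main-alg} confirms that no line ever removes an entry from $\rdelivSYMB$ (only the heartbeat sets $\rhb$ and the echo sets $\rechoSYMB$ are ever pruned), so the guard stays permanently false for that tuple. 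Combining the two observations, line~\ref{deliver} executes at most once per message, which is exactly RTBRB-No duplication.

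The only point requiring care is the completeness of this case analysis over the delivery call sites: I must check that no path emits $\delivOpM$ while bypassing the guard. In particular the equivocation branch (line~\ref{alg:line:delmsgc}), reached when a lie is discovered, also routes through \deliverMsgSYMB{} and is therefore subject to the same check, so it cannot re-deliver an already-delivered tuple. Beyond this bookkeeping I do not anticipate any real difficulty, since the argument is entirely local to a single process and needs no network or quorum reasoning.
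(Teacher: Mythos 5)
Your proof is correct and follows essentially the same route as the paper's: the paper's argument is precisely that the guard at line~\ref{alg:line:delmsgfun} of Algorithm~\ref{main-alg} only delivers when the corresponding $\rdelivSYMB$ entry does not exist and creates that entry immediately after delivering, which is exactly your test-and-initialize observation. Your additional bookkeeping---checking that every delivery site routes through \deliverMsgSYMB{} and that $\rdelivSYMB$ entries are never removed---just makes explicit what the paper's one-line outline leaves implicit.
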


\begin{proof}[Proof outline]
  According to line~\ref{alg:line:delmsgfun}  of Algorithm~\ref{main-alg}
  a process only delivers a message if the corresponding
  $\rdelivSYMB$ does not exist, and creates one right after
  delivering, thereby preventing from delivering a message twice.
\end{proof}

\begin{lemma}[Integrity]
  If some correct process $p_j$ delivers a message $m$ with correct
  sender $p_i$, then $m$ was previously broadcasted by $p_i$.
\end{lemma}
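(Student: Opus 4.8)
The plan is to trace any delivery back to the unforgeable signature of the claimed sender. First I would observe that a correct process $p_j$ invokes $\delivOpM$ only from inside the \texttt{deliver-msg} function (line~\ref{alg:line:delmsgfun}), and that every path reaching \texttt{deliver-msg} in the echo and deliver handlers does so with a set $\Sigma$ of at least $2f+1$ distinct echo signatures on the payload $(p_i,\sq,v,\mathsf{E})$. By Remark~\ref{rem:valid-message}, any echo or deliver message that $p_j$ does not discard must contain a signature from the claimed sender $p_i$ on that payload; hence $\Sigma$ contains a valid $\sigma_i$ on $(p_i,\sq,v,\mathsf{E})$.

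Next I would invoke the threat model: because $p_i$ is correct it never loses its keys and its signatures are unforgeable, so the presence of a valid $\sigma_i$ on $(p_i,\sq,v,\mathsf{E})$ implies that $p_i$ itself produced this signature at some earlier point in the execution. It then remains to show that a correct $p_i$ signs a payload naming itself as the originator only when it genuinely broadcasts, i.e., in the $\bcastOpM$ event at line~\ref{alg:send-bcast-init-sign}, and not while merely relaying.

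Here lies the crux, and I expect it to be the main obstacle: a correct $p_i$ also appends $\sigma_i$ when echoing other processes' instances (line~\ref{alg:aggr-new-sign} onward), so I must rule out $p_i$ signing a self-named payload that it never originated. I would resolve this by a minimal-causality argument. Consider the earliest event at which $p_i$'s signature on $(p_i,\sq,v,\mathsf{E})$ comes into existence. If this happened while $p_i$ was echoing a received message for instance $(p_i,\sq)$, then by Remark~\ref{rem:valid-message} that incoming message was processed only because it already carried a valid $\sigma_i$ on the same payload, contradicting minimality, since such a $\sigma_i$ would have had to exist strictly beforehand. Hence the first, and therefore every, occurrence of $\sigma_i$ on $(p_i,\sq,v,\mathsf{E})$ stems from $p_i$'s own broadcast event, so $p_i$ previously broadcast $m=(p_i,\sq,v)$.

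Finally, I would note that no appeal to agreement among the $2f+1$ signers is required: integrity follows from the single signature of the correct sender. The $2f+1$ quorum size and the $\mathsf{E}$/$\mathsf{D}$ tag separation serve here only to guarantee that the relevant signatures are genuine echo signatures on the claimed instance, and cannot be deliver signatures recycled by an adversary.
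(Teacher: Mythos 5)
Your proposal is correct and follows essentially the same route as the paper's proof: trace the delivery back to $2f+1$ signed echoes, use Remark~\ref{rem:valid-message} to extract the claimed sender's signature, invoke unforgeability and $p_i$'s correctness, and then argue over $p_i$'s local event history that only a genuine broadcast can originate a self-named signed payload. Your minimal-causality argument (the earliest occurrence of $\sigma_i$ on $(p_i,\sq,v,\mathsf{E})$ cannot arise from relaying, since Remark~\ref{rem:valid-message} would require such a signature to pre-exist) is precisely the explicit form of the ``induction on the chain of local events'' that the paper's proof outline leaves implicit, so if anything your write-up is the more rigorous of the two.
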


\begin{proof}[Proof outline]
  Because $p_j$ delivered $m$, it must have received $2f+1$ signed
  echoes for $m$ (see
  lines~\ref{alg:line:delmsga},~\ref{alg:line:delmsgb},~\ref{alg:line:delmsgc},
  and~\ref{alg:line:delmsgd} of Algorithm~\ref{main-alg}).  As
  mentioned in Remark~\ref{rem:valid-message}, an echo
  message is not handled unless it is signed by the claimed sender.
  More precisely, upon receipt of a message of the form
  $\echo{\mkmsgv{p_i}{\sq}{v}}{\Sigma}$ or
  $\deliv{\mkmsgvs{p_i}{\sq}{v}{\Sigma}}{\Sigma'}$, $p_j$ only handle the
  message if $\Sigma$ contains a signature from $p_i$.  Now, because
  the sender $p_i$ is correct, it must have indeed sent an echo
  message for $\mkmsg{p_i}{v}$.  Finally, we prove by
  induction on the chain of local events happening at $p_i$ (a correct
  process) that led to this message being sent, that $p_i$ must have
  broadcasted it.
\end{proof}

\begin{lemma}[Intersecting delivery]
  \label{lem:intersecting-delivery}
  Let $p$ be a correct process that starts delivering some message $m$
  at some time $t_{d}$.  Then, there exists a collection $B$ of $2f+1$
  processes such that all correct processes
  in $B$
  only deliver $m$ for a full $\mathbb{T}$ duration starting some time
  prior to $t_{d}+\mathbb{T}$.
\end{lemma}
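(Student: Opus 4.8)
The plan is to exhibit $B$ as the set of the \emph{first} $2f+1$ processes whose echoes on $m$ together constitute a quorum, and then to show that all correct members of this set keep diffusing deliver-messages for $m$ during one common $\mathbb{T}$-wide window. First I would let $T_Q$ be the earliest real time at which $2f+1$ distinct echo signatures on $m$ exist. Since $p$ delivered $m$ at $t_d$, it had already collected such a quorum by $t_d$, so $T_Q\le t_d$. I take $B$ to be the $2f+1$ processes whose echoes complete the quorum at $T_Q$; as at most $f$ of them are Byzantine, at least $f+1$ are correct. Note that this choice of $B$ (the first-quorum echoers, rather than the particular certificate $p$ happened to assemble) is what will keep the delivery times tightly clustered.

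Next I would pin down, for each correct $q\in B$, a tight delivery window. Such a $q$ created its echo on $m$ at some time $t_q\le T_Q$, and on first echoing it armed a timer of duration $\mathbb{T}$. On the one hand, $q$ cannot deliver $m$ before $T_Q$, because delivering requires having received $2f+1$ echoes on $m$, and no such quorum exists before $T_Q$. On the other hand, $q$ must deliver $m$ by $t_q+\mathbb{T}\le T_Q+\mathbb{T}$: otherwise its timer fires while it holds at most $2f$ echo signatures and, being correct, $q$ would instead turn passive, a contradiction. Hence every correct $q\in B$ delivers $m$ at some time $s_q\in[T_Q,\,T_Q+\mathbb{T}]$.

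Finally I would assemble the intersecting window. Upon delivering, $q$ launches the deliver-diffusion with timeout $2\mathbb{T}$, so it broadcasts deliver-messages carrying the quorum throughout $[s_q,\,s_q+2\mathbb{T}]$. Intersecting these intervals over all correct $q\in B$ yields $[\max_q s_q,\ \min_q s_q+2\mathbb{T}]\supseteq[T_Q+\mathbb{T},\ T_Q+2\mathbb{T}]$, a full $\mathbb{T}$-long period during which every correct member of $B$ is diffusing $m$; moreover it begins at $\max_q s_q\le T_Q+\mathbb{T}\le t_d+\mathbb{T}$, as required. This is exactly the elementary overlap phenomenon the overview alludes to: processes that all start a $2\mathbb{T}$-long diffusion within a window of width $\mathbb{T}$ necessarily share a common $\mathbb{T}$-long sub-window.

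The delicate step, and the one I would spend the most care on, is the upper bound $s_q\le t_q+\mathbb{T}$, i.e.\ ruling out that a correct echoer stays active yet never delivers $m$. This rests on the fact that at most one value can ever accumulate a $2f+1$ quorum for a given instance: two such quorums would require at least $2f+2$ distinct correct echoers (a correct process echoes a single value per instance), exceeding the $2f+1$ correct processes available. For a correct broadcaster this already forecloses the ``discovered a lie'' escape hatch of the echo-timeout rule, so passivity is the only alternative to delivering $m$; for a Byzantine broadcaster I would additionally use the same quorum-uniqueness to argue that no correct $q\in B$ can commit to a competing value, reducing it to the previous case. I expect this equivocation bookkeeping, rather than the timing arithmetic, to be the main obstacle.
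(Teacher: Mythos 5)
Your timing skeleton---anchor at $T_Q$, the earliest time a $2f+1$ echo quorum on $m$ exists; place the delivery times of all correct members of $B$ in $[T_Q,\,T_Q+\mathbb{T}]$; intersect their $2\mathbb{T}$ diffusion windows---is sound, and for a \emph{correct} broadcaster it yields a cleaner argument than the paper's, with no induction. But the Byzantine-broadcaster case is a genuine gap, and you have mislocated the difficulty. The escape hatch in the echo-timeout handler of Algorithm~\ref{main-alg} is not that a correct echoer might commit to a \emph{competing} value (quorum uniqueness does rule that out); it is that a correct echoer whose timer expires with at most $2f$ signatures \emph{and who has discovered a lie} neither goes passive nor delivers---it simply stays active, empty-handed. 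So when the broadcaster equivocates, a correct member $q$ of your $B$ (a first-quorum echoer of $m$) may receive the conflicting value, discover the lie, and legitimately fail to deliver $m$ by $t_q+\mathbb{T}$; it delivers $m$ only once a $2f+1$ certificate for $m$ actually reaches it, which may happen well after $T_Q+\mathbb{T}$. Your inclusion $s_q\in[T_Q,\,T_Q+\mathbb{T}]$ therefore fails for $q$, the window $[T_Q+\mathbb{T},\,T_Q+2\mathbb{T}]$ need not lie inside $q$'s diffusion period, and your $B$ does not satisfy the lemma's conclusion. Quorum uniqueness cannot ``reduce this to the previous case'': it constrains \emph{what} $q$ may deliver, not \emph{when} (or whether) $q$ delivers $m$.

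The paper's proof chooses $B$ precisely so that equivocation is irrelevant. Since deliver messages carry piggybacked proof-of-connectivity traffic, the correct process $p$ that starts delivering at $t_d$ must have its deliver message received by $2f+1$ processes within $\mathbb{T}$ (otherwise $p$ itself would turn passive); and receipt of a \emph{valid} deliver message---which embeds the $2f+1$ echo-signature certificate---forces a correct recipient to deliver immediately, lie discovered or not. The price of this choice is that some members of that set may have started delivering \emph{before} $t_d$, so their $2\mathbb{T}$ windows could end too early; the paper handles this by induction on $t_d$. Note that your $T_Q$ anchor cannot rescue that construction from needing the induction: for deliver-message recipients the delivery start times are only bounded in $[T_Q,\,t_d+\mathbb{T}]$, and $t_d-T_Q$ is unbounded. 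Conversely, any repair of your echoer-based construction against equivocation appears to lead back to the paper's deliver-message-based one.
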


\begin{proof}[Proof outline]
  Let us first point out that because $p$ starts delivering at
  $t_{d}$, and because it is correct, $2f+1$ processes must have
  received this deliver message by $t_{d}+\mathbb{T}$ (otherwise $p$
  would kill itself because it wouldn't be connected---the
  proof-of-connectivity is executed in piggyback mode).  Let $A$ be
  this collection of $2f+1$ processes (note that $p\in{A}$).  For each
  correct process $q\in{A}$, $q$ must have started delivering some
  time prior to $t_{d}+\mathbb{T}$.

  Let us now prove this lemma by induction on $t_{d}$.

  Either a correct process within $A$ started delivering prior to
  $t_d$ or not. If one did, in which case $t_{d}>0$, then we conclude
  by our induction hypothesis.
  Otherwise all correct nodes in $A$ (at least $f+1$) are only
  delivering starting from $t_d$.  Because they start delivering prior
  to $t_d+\mathbb{T}$, and because they deliver for $2\mathbb{T}$, it
  must be that
  all correct processes within that collection
  only deliver $m$ for a full $\mathbb{T}$ duration starting at most
  by $t_d+\mathbb{T}$ (until at most $t_d+2\mathbb{T}$).
\end{proof}

\begin{lemma}[Timely agreement]
  \label{lem:timely-agreement}
  If a correct process $p_i$ broadcasts $m$ at real time $t$, then all
  correct processes deliver $m$ by $t+3\mathbb{T}$.
\end{lemma}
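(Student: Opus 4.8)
The plan is to obtain the bound $3\mathbb{T}$ by chaining the validity bound with the intersecting-delivery lemma to isolate a single $\mathbb{T}$-long window, late in the interval $[t,t+3\mathbb{T}]$, during which a quorum of correct nodes is simultaneously diffusing deliver messages for $m$, and then to argue that the proof-of-connectivity machinery forces every correct (hence active) node to learn $m$ within that window.

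First I would invoke Lemma~\ref{lem:pistis-validity}: since $p_i$ is correct and broadcasts $m$ at $t$, the timer argument in its proof shows that $p_i$ actually \emph{starts} delivering $m$ by some $t_d\le t+\mathbb{T}$ (otherwise it would hold fewer than $2f+1$ echoes when its echo timer fires and would turn passive, contradicting correctness). Feeding $t_d$ into Lemma~\ref{lem:intersecting-delivery} yields a collection $B$ of $2f+1$ processes, at least $f+1$ of which are correct, such that each correct member of $B$ diffuses deliver messages for $m$ for a full $\mathbb{T}$-long period beginning before $t_d+\mathbb{T}$. Intersecting these per-node windows gives a common window $D$ of length $\mathbb{T}$ that starts before $t_d+\mathbb{T}$, and hence ends no later than $t_d+2\mathbb{T}\le t+3\mathbb{T}$; throughout $D$ at least $f+1$ correct members of $B$ are repeatedly (every $d$) sending deliver messages for $m$ with their heartbeats piggybacked. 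This already places the critical activity inside $[t,t+3\mathbb{T}]$, so it remains only to transfer $m$ to an arbitrary correct $p_j$ during~$D$.

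Now fix a correct $p_j$; if it delivered $m$ before $D$ we are done by No-duplication, so assume not. Since $p_j$ is correct it never becomes passive, so for each proof-of-connectivity round it must gather $2f+1$ signatures on its own heartbeat before that round's timer expires. Consider a heartbeat of $p_j$ born inside $D$ (one exists because a fresh round is launched every $d<\mathbb{T}=|D|$, so there are $\nIter{\mathbb{T}}{d}$ of them): the $2f+1$ signers of that heartbeat and the $2f+1$ members of $B$ live in a universe of $3f+1$ nodes, so they overlap in at least $f+1$ nodes, at least one of which, call it $r$, is correct and therefore belongs to the set of $B$-members diffusing $m$ throughout $D$. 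Because a heartbeat born inside $D$ can only be signed from within $D$ onward, $r$ signs it during $D$, and since $r$ is then in its deliver phase for $m$ it attaches its deliver payload for $m$ to that message. The deliver content thus travels together with $r$'s signature along the chain of correct nodes that carries that signature back to $p_j$ (a chain that must exist, since $p_j$ received the signature in order to stay active), and each correct node on the way that receives the deliver payload itself begins delivering and re-diffusing $m$; consequently $p_j$ collects $2f+1$ echo signatures for $m$ and delivers it by the end of $D$, i.e., by $t+3\mathbb{T}$.

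The main obstacle is exactly this last step: turning ``$r$'s signature returns to $p_j$, so the piggybacked deliver for $m$ returns too'' into a \emph{deterministic} statement rather than a probabilistic one, given that individual links are lossy. The lever is that the only messages $p_j$ is guaranteed to receive are precisely those forming its heartbeat quorum (for otherwise it would go passive and not be correct), so the deliver information must be routed through those guaranteed messages. I would discharge this using the overlapping-rounds property emphasized in Sec.~\ref{sec:pistis-props}: overlapping rounds guarantee that the heartbeat $p_j$ is accumulating inside $D$ carries a \emph{valid}, non-stale sequence number throughout $D$, so the returning messages carrying $r$'s signature (and the deliver payload) pass the sequence-number filter of Algorithm~\ref{alg:proof-of-life} (lines~\ref{alg:removehbstart}--\ref{alg:updlwm}) instead of being discarded. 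Were rounds consecutive instead of overlapping, $D$ could straddle a round boundary and $p_j$ could satisfy its quorum using signatures gathered entirely before $D$ (round $s$) and entirely after $D$ (round $s+1$), treating everything arriving during $D$ as stale and thereby staying active without ever delivering $m$; excluding this scenario is the crux of the proof and is precisely what the $d<\mathbb{T}$ round cadence secures.
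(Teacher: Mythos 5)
Your proof follows essentially the same route as the paper's: it chains the validity/timer argument (giving $t_d\le t+\mathbb{T}$) with Lemma~\ref{lem:intersecting-delivery} to obtain the quorum $B$ and a common $\mathbb{T}$-long deliver window ending by $t_d+2\mathbb{T}\le t+3\mathbb{T}$, and then intersects $B$ with the $2f+1$ signers of one of $p_j$'s proof-of-connectivity heartbeats to extract a correct node $r$ whose piggybacked deliver message must reach $p_j$ within that window --- exactly the paper's argument. Your closing discussion of the signature return chain and of why overlapping rounds (rather than consecutive ones) are needed likewise mirrors the paper's own justification in Sec.~\ref{sec:pistis-props}.
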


\begin{proof}[Proof outline]
  Since $p_i$ is correct during this broadcast, then
  it must have
  received $2f+1$ echoes for $m$ and must then have started delivering
  $m$ at $t_d\in[t,t+\mathbb{T}]$.
  By Lemma~\ref{lem:intersecting-delivery}, there exists a collection
  $B$ of $2f+1$ processes such that all correct processes
  in $B$ only deliver $m$ for a full $\mathbb{T}$ duration
  starting some time prior to $t_{d}+\mathbb{T}$.
  Now, every other correct process $p_j$ must be connected to $2f+1$
  processes in any proof-of-connectivity period
  $\mathit{pc}=[t_0,t_0+\mathbb{T}]$---let $C(\mathit{pc})$ denote
  those $2f+1$ processes.
  Therefore, because there are $3f+1$ processes, there must be a
  correct process, say $r$, and a proof-of-connectivity period
  $\mathit{pc}=[t_j,t_j+\mathbb{T}]$ at $p_j$ such that: (1)~$r$ is in
  the intersection of $B$ and $C(\mathit{pc})$ (there must be at least
  one correct process in that intersection because it is of size
  $f+1$); and such that
  (2)~$p_j$ received $m$ during $\mathit{pc}$ from $r$, which sent it at
  most by $t_d+2\mathbb{T}$.
  Therefore, $p_j$ must have delivered by  $t+3\mathbb{T}$.
\end{proof}

\begin{lemma}[Agreement]
  \label{lem:agreement-part2}
  If some correct process $p_i$ delivers $m$, then all correct
  processes eventually deliver $m$.
\end{lemma}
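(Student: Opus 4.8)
The plan is to observe that this lemma is precisely the ``broadcaster-agnostic'' core of the timely agreement argument (Lemma~\ref{lem:timely-agreement}), and to reuse that reasoning verbatim after replacing its anchor. The only difference in hypotheses is that here the broadcaster of $m$ may be Byzantine: we are given that \emph{some} correct process $p_i$ delivers $m$, not that a correct process broadcast it. Inspecting the proof of Lemma~\ref{lem:timely-agreement}, the broadcaster's correctness is used \emph{only} to derive that $p_i$ starts delivering $m$ at some $t_d\in[t,t+\mathbb{T}]$; from the invocation of Lemma~\ref{lem:intersecting-delivery} onward, every step depends solely on the fact that a correct process starts delivering $m$ at $t_d$. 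So first I would start directly from the hypothesis: since $p_i$ is correct and delivers $m$, by inspection of the $\deliverMsgSYMB$ function it must have \emph{started} delivering $m$ at some real time $t_d$, and this is all the foothold the rest of the argument needs.

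Next I would apply Lemma~\ref{lem:intersecting-delivery} to $p_i$ and $t_d$ to obtain a collection $B$ of $2f+1$ processes such that every correct process in $B$ diffuses (delivers) $m$ throughout a full $\mathbb{T}$-long window $D$ that begins before $t_d+\mathbb{T}$. Then I would rerun the connectivity/pigeonhole step: any other correct process $p_j$ must gather $2f+1$ connectivity signatures in each of its proof-of-connectivity periods $\mathit{pc}$ (otherwise it would enter the passive mode and fail to be correct), so let $C(\mathit{pc})$ be those $2f+1$ processes. Since $|B|=|C(\mathit{pc})|=2f+1$ and there are only $3f+1$ processes, we have $|B\cap C(\mathit{pc})|\ge f+1$, and because at most $f$ processes are Byzantine this intersection contains a correct process $r$. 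Because proof-of-connectivity rounds overlap and $r$ diffuses the deliver message for $m$ over the full window $D$ (piggybacked on its heartbeats, and diffused for $2\mathbb{T}$), there is a period $\mathit{pc}$ at $p_j$ coinciding with $D$ in which $r$ successfully transmits $m$ to $p_j$; hence $p_j$ delivers $m$, and indeed does so by at most $t_d+2\mathbb{T}$, so all correct processes eventually deliver $m$.

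The step I expect to require the most care is confirming that this transplant of the timely-agreement reasoning is sound despite the weaker hypothesis, i.e.\ verifying that neither Lemma~\ref{lem:intersecting-delivery} nor the intersection/overlap argument secretly uses the correctness of $m$'s broadcaster. Lemma~\ref{lem:intersecting-delivery} is stated purely in terms of a correct process starting to deliver, so it applies unchanged; and the pigeonhole count rests only on $|B|+|C(\mathit{pc})|-N\ge f+1$ together with the at-most-$f$ Byzantine bound, both independent of who broadcast $m$. The subtlety worth spelling out explicitly is the overlapping-rounds point highlighted in Sec.~\ref{sec:pistis-props}: overlapping proof-of-connectivity rounds are exactly what guarantee that \emph{every} correct $p_j$ has a round meeting the window $D$, so that no correct process can stay active while failing to deliver $m$. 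Since we are only asked for ``eventually'' rather than a bound anchored at a broadcast time, dropping the $t+3\mathbb{T}$ conclusion and replacing it with ``by $t_d+2\mathbb{T}$'' causes no difficulty.
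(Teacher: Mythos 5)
Your proof is correct, and it takes a meaningfully different route from the paper's, which disposes of this lemma in a single line: ``a straightforward consequence of Lemma~\ref{lem:timely-agreement}.'' Read literally, that citation only covers the case where $m$'s broadcaster is correct, since Lemma~\ref{lem:timely-agreement} is stated under that hypothesis; RTBRB-Agreement, however, must also hold when the broadcaster is Byzantine yet some correct process delivers $m$. Your transplant addresses exactly this: you observe that the broadcaster's correctness is used in the proof of Lemma~\ref{lem:timely-agreement} only to produce a correct process that starts delivering at some time $t_d$, and you re-anchor the remainder of that argument --- Lemma~\ref{lem:intersecting-delivery}, the $|B\cap C(\mathit{pc})|\geq f+1$ pigeonhole combined with the at-most-$f$ Byzantine bound, and the overlapping proof-of-connectivity rounds --- at the delivery event supplied by the hypothesis. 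What the paper's route buys is brevity, plus the explicit $t+3\mathbb{T}$ bound in the case it actually covers; what yours buys is a proof of the lemma under its stated, weaker hypothesis, with the bound $t_d+2\mathbb{T}$ (up to a final transmission delay $d$) measured from the first correct delivery rather than from a broadcast that may not have been performed by a correct process. In short, it is the same machinery but a more careful decomposition, and it makes explicit a case that the paper's one-line proof leaves implicit.
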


\begin{proof}[Proof outline]
  This is a straightforward consequence of
  Lemma~\ref{lem:timely-agreement}.

\end{proof}

\begin{lemma}[Timeliness]
  If a correct process $p_i$ broadcasts $m$ at real time $t$, then no
  correct process delivers $m$ after $t+3\mathbb{T}$.
\end{lemma}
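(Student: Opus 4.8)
The plan is to derive Timeliness as an essentially immediate corollary of the \emph{Timely agreement} lemma (Lemma~\ref{lem:timely-agreement}), since that lemma already discharges all the nontrivial work. Recall that Lemma~\ref{lem:timely-agreement} states that whenever a correct process $p_i$ broadcasts $m$ at real time $t$, \emph{every} correct process delivers $m$ by $t+3\mathbb{T}$. This is an upper bound on the delivery time of each correct process, and Timeliness is precisely the assertion that this upper bound is never exceeded, with $\deltaRTBRB=3\mathbb{T}$.

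First I would fix a correct process $p_i$ broadcasting $m$ at time $t$, together with an arbitrary correct process $q$, and show that if $q$ delivers $m$ at all, its delivery time lies in $[t,t+3\mathbb{T}]$. By Lemma~\ref{lem:timely-agreement}, $q$ delivers $m$ no later than $t+3\mathbb{T}$. The one point that requires care is ruling out a \emph{second}, late delivery: one must argue that $q$ cannot deliver $m$ again after $t+3\mathbb{T}$ via some alternative path, for instance upon receiving a tardy $\delivSYMB$ message forwarded by a Byzantine node during the extended $2\mathbb{T}$ diffusion window. This is exactly what the No duplication lemma rules out: once $q$ has created the corresponding $\rdelivSYMB$ entry at its (unique) first delivery, the guard $\nexists\,\rdeliv{p_j}{\sq}{v}$ on line~\ref{alg:line:delmsgfun} prevents any further $\delivOpM$ for that instance. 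Hence the single delivery event of $q$ occurs by $t+3\mathbb{T}$.

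Since $q$ was an arbitrary correct process, no correct process delivers $m$ after $t+3\mathbb{T}$, which establishes the property and fixes the bound $\deltaRTBRB=3\mathbb{T}$ advertised after Theorem~\ref{theorem:correctness-pistis}. I do not expect a genuine obstacle in this particular lemma: the real difficulty was already confronted in proving Lemma~\ref{lem:intersecting-delivery} and Lemma~\ref{lem:timely-agreement}, where the overlapping proof-of-connectivity rounds and the $2\mathbb{T}$ deliver-diffusion period guarantee that every correct $p_j$ has a connectivity round intersecting the $\mathbb{T}$-long window during which the $2f+1$ processes of the collection $B$ are diffusing $m$. The only residual care here is the bookkeeping argument that the delivery guard, together with No duplication, forbids any post-deadline re-delivery; everything else is a direct appeal to the upper bound already proved.
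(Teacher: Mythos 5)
Your proposal is correct and follows essentially the same route as the paper, which also derives Timeliness as a direct consequence of the Timely agreement lemma (Lemma~\ref{lem:timely-agreement}). The only difference is that you additionally spell out why no late re-delivery can occur (via the No duplication guard on $\rdelivSYMB$), a detail the paper leaves implicit in its one-line proof.
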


\begin{proof}[Proof outline]
  This is a straightforward consequence of
  Lemma~\ref{lem:timely-agreement}.
\end{proof}

\section{Correctness of PISTIC-CS---Proof of Theorem~\ref{theorem:RTBC-implemented}}
\label{appx:proofRTBC}

Recall that since Algorithm~$\mathcal{A}$ implements interactive
consistency, then when $\mathcal{A}$ eventually terminates all correct
processes will have the same vector of proposals where the values
relative to correct processes are indeed what these correct processes
have proposed. In fact interactive consistency~\cite{Pease:1980}
guarantees the two following properties:
\begin{itemize}
\item[IC.1] The non-faulty processors compute exactly the same vector.
\item[IC.2] The element of this vector corresponding to a given
  non-faulty processor is the private value of that processor.
\end{itemize}

We now prove Thm.~\ref{theorem:RTBC-implemented}, i.e., that assuming
algorithm~$\mathcal{A}$ implements interactive consistency in a known
bounded number of communication rounds (this is used to prove
Lemma~\ref{lemma:RTBC-Timeliness}), as well as
Assumptions~\ref{assump:network-access}
and~\ref{assump:consensus-function}, then~$\mathcal{A}$ implements
RTBC (see Sec.~\ref{sec: RT-consensus}) in our system model (see
Sec.~\ref{Sysmodel}).

\begin{lemma}[RTBC-Termination]
  Every correct process eventually decides.
\end{lemma}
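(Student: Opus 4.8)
The plan is to reduce RTBC-Termination to two facts: (i)~every communication round of $\mathcal{A}$, once implemented on top of RTBRB, completes within a bounded amount of time at every correct process; and (ii)~$\mathcal{A}$ uses only a known bounded number of such rounds before its interactive-consistency vector is fixed. Together with Assumption~\ref{assump:consensus-function}, which turns the resulting vector into a decision through a purely local deterministic scan, these two facts immediately yield that every correct process decides.

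First I would establish the per-round progress guarantee. By Assumption~\ref{assump:network-access}, every message a correct process sends during a round of $\mathcal{A}$ is sent through $\bcastOpM$, and every message it consumes arrives through $\delivOpM$. Consider a round in which the correct processes invoke $\bcastOpM$ no later than some time $t$. By RTBRB-Validity and RTBRB-Agreement, each such broadcast is delivered by \emph{every} correct process, and by RTBRB-Timeliness it is delivered no later than $t+\deltaRTBRB$. Hence, after waiting $\deltaRTBRB$ on its local clock (whose drift is negligible by the clock assumption of Sec.~\ref{Sysmodel}), a correct process is guaranteed to have RTBRB-delivered every message that correct processes contributed to that round. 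Messages from Byzantine (or passive) senders either arrive within this window and are handled, or are absent; in the latter case $\mathcal{A}$, being a Byzantine-resilient interactive-consistency protocol tolerating up to $f$ arbitrary faults, already prescribes how to proceed (e.g., by substituting a default value). Either way the round advances at every correct process within $\deltaRTBRB$ plus the bounded local step time.

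Next I would iterate this over the rounds. Because $\mathcal{A}$ is \emph{bounded} (Def.~\ref{def:bounded}), there is a known $R$ such that $\mathcal{A}$ completes after at most $R$ communication rounds. Chaining the per-round bound, every correct process reaches the end of $\mathcal{A}$ within $R\cdot(\deltaRTBRB+\delta_{\mathrm{loc}})$ of the protocol's start, where $\delta_{\mathrm{loc}}$ bounds the local step time. At that point IC.1 and IC.2 hold, so every correct process holds an identical, fully determined proposal vector. Finally, by Assumption~\ref{assump:consensus-function}, the process scans this finite vector and deterministically outputs either the value occurring at least $2f+1$ times or $\bot$; this scan always produces an output, so every correct process issues an $\rtbcDecideSYMB$ and thus decides (in fact within bounded time, which is what the subsequent timeliness lemma will reuse).

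The main obstacle I anticipate is justifying step~(i) rigorously, namely that a correct process can \emph{locally detect} that a round is complete even though it may never receive messages from Byzantine or passive processes. The resolution rests entirely on RTBRB-Timeliness: the known bound $\deltaRTBRB$ lets each correct process close a round by a \emph{timeout} rather than by a message count, so it never blocks waiting on a faulty sender. A secondary point to verify is that passive nodes---which count as faulty but are not among the $f$ Byzantine nodes---do not break the bound; since RTBRB-Timeliness is stated for correct broadcasters and correct deliverers and $\mathcal{A}$ already tolerates up to $f$ arbitrary faults, this does not affect the argument.
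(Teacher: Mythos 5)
Your proposal is correct and follows essentially the same route as the paper's proof: Assumption~\ref{assump:network-access} ensures correct processes interact only through RTBRB (and hence, being correct, never go passive while doing so), the bounded interactive-consistency algorithm therefore terminates, IC.1 yields the common vector, and Assumption~\ref{assump:consensus-function} turns it into a decision. The extra per-round timeout argument and the chained bound $R\cdot(\deltaRTBRB+\delta_{\mathrm{loc}})$ that you develop are not needed for mere termination; they are precisely the reasoning the paper defers to its RTBC-Timeliness lemma (Lemma~\ref{lemma:RTBC-Timeliness}), so you have effectively proved the stronger, bounded-time statement along the way.
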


\begin{proof}[Proof outline]
  By Assumption~\ref{assump:network-access}, a correct process $p_i$
  accesses the network only through the RTBRB primitive.  Therefore,
  because $p_i$ is correct and therefore does not enter passive mode
  while executing RTBRB, it must
  terminate.  By IC.1 $p_i$ must compute a vector.  Finally, $p_i$ will
  apply the deterministic function described in
  Assumption~\ref{assump:consensus-function} to that vector to obtain
  a value $v$, which is the value $p_i$ decides upon.

\end{proof}

\begin{lemma}[RTBC-Agreement]
  No two correct processes decide differently.
\end{lemma}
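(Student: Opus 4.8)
The plan is to reduce RTBC-Agreement entirely to the determinism of the decision rule combined with the fact that all correct processes share the same interactive-consistency vector. First I would invoke property IC.1, which guarantees that every correct (non-faulty) process computes \emph{exactly} the same proposal vector $V$ once $\mathcal{A}$ terminates. Since by Assumption~\ref{assump:network-access} correct processes interact only through the RTBRB primitive, and since a correct process does not enter passive mode during the bounded execution of $\mathcal{A}$, each correct process actually reaches the stage where it holds this common vector $V$ (rather than stalling).

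Next I would apply Assumption~\ref{assump:consensus-function}, which fixes a single deterministic decision function: scan $V$ and output the value occurring at least $2f+1$ times, or $\bot$ if no such value exists. Because $V$ is identical across all correct processes and the function is deterministic, every correct process that decides outputs the same result, and hence no two correct processes decide differently.

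The only point requiring care is that this decision function is genuinely single-valued, i.e., that ``the value that appears at least $2f+1$ times'' is unambiguous. I would discharge this with a simple counting argument: the vector has $\total=3f+1$ entries, so two distinct values each occurring at least $2f+1$ times would require at least $2(2f+1)=4f+2>3f+1$ entries, a contradiction. Thus at most one value can meet the threshold, the decision rule is well defined, and the determinism argument above goes through cleanly.

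I do not expect a genuine obstacle here, as the agreement property is essentially inherited from IC.1; the substantive work is confined to (i) confirming via Assumption~\ref{assump:network-access} and the non-passivity of correct processes that they indeed obtain the common vector, and (ii) the uniqueness-of-majority counting argument that makes the decision rule deterministic.
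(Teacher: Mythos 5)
Your proof is correct and follows essentially the same route as the paper's: non-passivity of correct processes via Assumption~\ref{assump:network-access}, a common vector via IC.1, and determinism of the decision rule of Assumption~\ref{assump:consensus-function}. Your additional counting argument showing that at most one value can reach the $2f+1$ threshold in a vector of $\total=3f+1$ entries is a worthwhile refinement that the paper leaves implicit, but it does not change the structure of the argument.
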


\begin{proof}[Proof outline]
  Let $p_i$ be a correct process that decides upon a value $v_i$, and
  $p_j$ be a correct process that decides upon a value $v_j$. Again,
  by Assumption~\ref{assump:network-access}, $p_i$ and $p_j$ must not
  enter passive mode while using the RTBRB primitive. By IC.1, $p_i$
  and $p_j$ must compute the same vector $V$. Both $p_i$ and $p_j$
  apply the deterministic function described in
  Assumption~\ref{assump:consensus-function} to this vector $V$.
  Therefore, $v_i$ must be equal to $v_j$.
\end{proof}

\begin{lemma}[RTBC-Validity]
  If all correct processes propose the same value $v$, then any
  correct process that decides, decides~$v$.  Otherwise, a correct
  process may only decide a value that was proposed by some correct
  process or the special value $\bot$.
\end{lemma}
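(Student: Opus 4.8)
The plan is to split along the two cases of the statement and, in each, combine the interactive-consistency guarantees IC.1 and IC.2 with the decision rule of Assumption~\ref{assump:consensus-function}, via a simple counting argument that exploits $\total=3f+1$ and the at-most-$f$ Byzantine bound. Throughout I would use that any deciding correct process has, by the preceding termination argument, computed a vector $V$ and applied the deterministic filter of Assumption~\ref{assump:consensus-function} to it.

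First I would treat the case where all correct processes propose the same value $v$. By IC.2, the vector entry of each correct process equals its own proposal, namely $v$; and by IC.1 all correct processes compute the same vector $V$. Since at most $f$ processes are Byzantine, there are at least $2f+1$ correct processes, so $v$ occupies at least $2f+1$ of the $\total=3f+1$ entries of $V$. The key remaining point is uniqueness with respect to the $2f+1$ threshold: because $v$ already fills at least $2f+1$ entries, at most $f$ entries remain for any competing value, so no value other than $v$ can reach $2f+1$ occurrences. Hence the filter of Assumption~\ref{assump:consensus-function} selects $v$ (and in particular does not fall back to $\bot$), so any deciding correct process decides~$v$.

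Next I would treat the case where the correct proposals are not all equal, and let $p_i$ be a correct process that decides some $w$. If $w=\bot$ there is nothing to prove, so suppose $w\neq\bot$. By Assumption~\ref{assump:consensus-function}, $w$ appears at least $2f+1$ times in $p_i$'s vector $V$. At most $f$ of these occurrences can sit at entries corresponding to Byzantine processes, so at least $(2f+1)-f=f+1\geq 1$ of them sit at entries corresponding to correct processes; by IC.2 each such entry is exactly that correct process's proposal. Therefore $w$ was proposed by some correct process, which is what the second clause requires.

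I expect the only delicate step to be the uniqueness argument in the first case: reaching the threshold is not by itself enough, and one must rule out a second value also attaining $2f+1$ occurrences, which is precisely where the pigeonhole bound afforded by $\total=3f+1$ (at most $f$ residual entries after $v$) is essential. The second case is a routine quorum-intersection count, resting on the fact that any set of $2f+1$ entries must overlap the set of correct processes in at least $f+1$ positions.
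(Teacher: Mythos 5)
Your proof is correct and follows essentially the same route as the paper's: split into the two cases, apply IC.1/IC.2 together with the decision rule of Assumption~\ref{assump:consensus-function}, and count entries against the $2f+1$ threshold. The only difference is that you make explicit two counting details the paper leaves implicit (the pigeonhole argument showing no second value can reach $2f+1$ occurrences when all correct processes propose $v$, and the $f+1$ overlap with correct entries in the second case), which is a welcome but minor refinement of the same argument.
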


\begin{proof}[Proof outline]
  First, note that by Assumption~\ref{assump:network-access}, correct
  processes must not enter passive mode while using the RTBRB primitive.
  Now, if all correct processes propose the same value $v$, then by
  IC.2 the obtained interactive consistency vector computed by a
  correct process should contain $v$ a number of times equal to the
  number of correct processes, i.e., at least $2f+1$ times.  Finally,
  since all correct processes apply the deterministic function
  described in Assumption~\ref{assump:consensus-function} to their
  vectors, they must all decide on $v$.

  Let us now assume that not all correct processes propose the same
  value $v$.  If a correct process $p$ decides upon a value $v'$ then
  by Assumption~\ref{assump:consensus-function}, it must be that
  either (1)~its interactive consistency vector contains at least
  $2f+1$ times this value $v'$; or (2)~that $v'$ is the special value
  $\bot$.  In case $v'$ appears $2f+1$ times in $p$'s interactive
  consistency vector, then by IC.2, it must be that $v'$ was proposed
  by a correct process.
  This concludes the proof.
\end{proof}

\begin{lemma}[RTBC-Timeliness]
  \label{lemma:RTBC-Timeliness}
  If a correct process $p_i$ proposes a value to consensus at time
  $t$, then no correct process decides after $t+\deltaRTBC$.
\end{lemma}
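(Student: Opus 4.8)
The plan is to bound the total wall-clock duration of the interactive-consistency algorithm $\mathcal{A}$ by composing two facts: (i) $\mathcal{A}$ is \emph{bounded} (Def.~\ref{def:bounded}), hence uses a known number $R$ of communication rounds; and (ii) all of $\mathcal{A}$'s communication passes through the RTBRB primitive (Assumption~\ref{assump:network-access}), which delivers within the known bound $\deltaRTBRB$. Multiplying a per-round bound by $R$ and adding the cost of the final local decision should yield $\deltaRTBC$.

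First I would observe, exactly as in the RTBC-Termination argument, that by Assumption~\ref{assump:network-access} a correct process touches the network only via $\bcastOpM$ and $\delivOpM$; being correct, it never enters passive mode while doing so, so every message it RTBRB-broadcasts during $\mathcal{A}$ is delivered by every correct process within $\deltaRTBRB$ (RTBRB-Timeliness). This bounds a single communication round: a correct process opens the round with its broadcast, all timely correct-process messages of that round are delivered within $\deltaRTBRB$, and the process advances after at most a bounded local-processing delay $\delta_{\ell}$ (guaranteed by the synchronous-process assumption of Sec.~\ref{Sysmodel}); the at most $f$ missing Byzantine messages are treated as faulty and do not block progress.

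Next I would chain these per-round bounds. Anchoring $p_i$'s start at $t$, a straightforward induction on the round index shows that every correct process completes round $k$ by $t + k(\deltaRTBRB + \delta_{\ell})$, the inductive step using that a correct process's round-$(k{+}1)$ broadcast is triggered within $\delta_{\ell}$ of finishing round $k$ and delivered within $\deltaRTBRB$. After round $R$, every correct process holds its interactive-consistency vector (property IC.1) and, by Assumption~\ref{assump:consensus-function}, decides by applying a purely local deterministic function in a further bounded time $\delta_{d}$. Setting $\deltaRTBC = R(\deltaRTBRB + \delta_{\ell}) + \delta_{d}$ then gives that no correct process decides after $t + \deltaRTBC$.

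The hard part will be the synchronization across correct processes: the statement anchors the bound at $p_i$'s proposal time $t$, yet another correct process $p_j$ may begin the same instance slightly earlier or later. I would handle this by arguing that the round structure is time-driven --- a correct process advances a round once that round's timely messages have had $\deltaRTBRB$ to arrive --- so any \emph{bounded} skew between the instants at which correct processes enter the instance is absorbed into a single additive constant in $\deltaRTBC$; a process that has not yet contributed its own value when a round's window elapses is, consistently with the Byzantine treatment above, counted as missing. Establishing that this start skew is genuinely bounded (e.g., because all correct processes enter the instance in response to the same triggering event) is the crux; once that is secured, the composition of $R$ with $\deltaRTBRB$ and $\delta_{\ell}$ is routine.
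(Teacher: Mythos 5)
Your proposal is correct and follows essentially the same route as the paper's proof: bound a single communication round by $\deltaRTBRB$ (since by Assumption~\ref{assump:network-access} all communication goes through the RTBRB primitive and correct processes never enter passive mode while using it), multiply by the known bounded number of rounds guaranteed by Def.~\ref{def:bounded}, and add a bounded local-computation term for the deterministic decision function of Assumption~\ref{assump:consensus-function}, exactly the decomposition the paper uses to obtain its bound $t+(k\times{m}\times{3\mathbb{T}})$. The start-skew concern you flag as the crux is legitimate, but the paper's own proof outline glosses over it entirely (it silently anchors all rounds at $p_i$'s proposal time $t$), so your treatment is, if anything, slightly more careful than the original.
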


\begin{proof}[Proof outline]
  The way we implement consensus is first by reaching interactive
  consistency and applying a deterministic function after. The
  deterministic function is a computational load that requires
  scanning the consistency vector and hence has a known bounded
  duration since we assume that correct processes are
  synchronous. Therefore, it is sufficient to prove that the
  interactive consistency protocol finishes in a bounded duration (in
  the sense that correct processes compute their interactive
  consistency vectors in a bounded amount of time).

  Recall that we assume that Algorithm~$\mathcal{A}$ requires a
  bounded number of communication rounds to terminate, say~$k$.  By
  Assumption~\ref{assump:network-access} processes send and receive
  messages over the network only via the RTBRB primitive.  Hence any
  communication round has a bounded duration, that being a multiple,
  say $m$, of $\deltaRTBRB$, the duration needed by the RTBRB primitive to
  complete (which is at most $3\mathbb{T}$).
  Therefore, because by Assumption~\ref{assump:network-access},
  correct processes must not enter passive mode while using the RTBRB
  primitive, it must be that correct processes will decide before
  $t+(k\times{m}\times{3\mathbb{T}})$, which concludes our proof.
\end{proof}

\section{Correctness of PISTIC-AT---Proof of Theorem~\ref{theorem:RTBAB}}
\label{sec:pistis-to}

\subsection{Reduction to RTBAB}

To prove Thm.~\ref{theorem:RTBAB}, we only have to prove that
Algorithm~$\mathcal{A}$ satisfies the RTBAB-Timeliness property.

\begin{proof}[Proof outline]
  Let us assume that the correct process $p_i$ RTBAB-broadcasts $m$ at
  time $t$.  We have to prove that no correct process RTBAB-delivers
  $m$ after real time $t+\deltaRTBAB$, for some $\deltaRTBAB$.  We prove
  this by proving the stronger result that there exists a $\deltaRTBAB$
  such that all correct processes RTBAB-deliver $m$ by $t+\deltaRTBAB$.

  By Property~\ref{prop:rtbab-bcast-rtbrb}, $p_i$ RTBRB-broadcasts $m$
  with some sequence number $\RTBABseq_t$ by time $t+\deltaB$.  By
  RTBRB-Validity, RTBRB-Timeliness and RTBRB-Agreement, all correct
  processes RTBRB-deliver $m$ by some time $t+\deltaB+\deltaRTBRB$.
  By Property~\ref{prop:rtbrb-deliv-rtbc-prop}, all correct processes
  will RTBC-propose or RTBC-decide $m$ by
  $t+\deltaB+\deltaRTBRB+\deltaP$.

  If one correct process RTBC-decides $m$ by
  $t+\deltaB+\deltaRTBRB+\deltaP$, then by the RTBC properties,
  all correct processes will RTBC-decide by
  $t+\deltaB+\deltaRTBRB+\deltaP+\deltaRTBC$, and by
  Property~\ref{prop:rtbc-deliv-rtbab}, they will RTBAB-deliver by
  $t+\deltaB+\deltaRTBRB+\deltaP+\deltaRTBC+\deltaD$, which
  concludes our proof.  Therefore, let us now consider the case where
  they all RTBC-propose $m$ by $t+\deltaB+\deltaRTBRB+\deltaP$.

  However, it might be that they RTBC-propose $m$ in different RTBC
  instances.  We want to prove that there will be an RTBC instance
  $\RTBABinst_{m}$ where ``enough'' correct nodes RTBC-propose $m$ at that instance, by time
  $t+\Delta_{m}$ (for some fixed $\Delta_{m}$), so that it results in
  $\RTBABinst_{m}$ deciding $m$.  Then, by RTBC-Termination,
  RTBC-Agreement, RTBC-Timeliness, and
  Property~\ref{prop:rtbc-deliv-rtbab}, we can conclude that all
  correct processes RTBAB-deliver $m$ by time
  $t+\Delta_{m}+\deltaRTBC+\deltaD$.  Let us now prove that
  such an instance $\RTBABinst_{m}$ indeed exists.

  Because all correct processes RTBC-propose $m$ by
  $t+\deltaB+\deltaRTBRB+\deltaP$, there must be a greatest
  instance $\RTBABinst_{g}$ such that a correct process $p_{g}$
  RTBC-proposes $m$ at some time
  $t_k\leq{t+\deltaB+\deltaRTBRB+\deltaP}$.  Now, either (1)~$m$
  was RTBC-decided at a prior instance $\RTBABinst_{p}$ (by all
  correct processes, by the RTBC properties), or (2)~not.  In case it
  was (i.e., case~(1)), all correct processes must have RTBC-decided
  $m$ by time $t+\deltaB+\deltaRTBRB+\deltaP+\deltaRTBC$ by the
  RTBC properties and because $\RTBABinst_{p}$ must have been dealt
  with by $p_{g}$ before $\RTBABinst_{g}$ by
  Property~\ref{prop:rtbc-unique}.  Now, by
  Property~\ref{prop:rtbc-deliv-rtbab}, it must be that all correct
  processes must have RTBAB-delivered $m$ by time
  $t+\deltaB+\deltaRTBRB+\deltaP+\deltaRTBC$.

  Let us now focus on case~(2), i.e., $m$ was not RTBC-decided at a
  prior instance.  By Property~\ref{prop:rtbc-prop-val-or-bot}, correct
  processes must be RTBC-proposing either $m$ or $\bot$ at instance
  $\RTBABinst_{g}$.  Let us prove that they cannot propose $\bot$, in
  which case we conclude using RTBC-Validity and
  Property~\ref{prop:rtbc-deliv-rtbab}, and $\deltaRTBAB$ is again
  $t+\deltaB+\deltaRTBRB+\deltaP+\deltaRTBC$.
  We prove that correct processes cannot propose $\bot$ at instance
  $\RTBABinst_{g}$ by contradiction.  Let us assume that some correct
  process $p_{j}$ votes for $\bot$ at instance $\RTBABinst_{g}$
  (therefore, $p_{j}$ cannot be $p_{g}$).
  By definition of $\RTBABinst_{g}$, it must be that $p_j$ votes for
  $m$ at a prior instance $\RTBABinst_{p}$.  Because it is an instance
  prior to $\RTBABinst_{p}$, as mentioned above, $m$ was not
  RTBC-decided at that instance.  Therefore, by
  Property~\ref{prop:rtbc-prop-val-or-bot}, and RTBC-Validity, it must be
  that this instance ended up in $\bot$ being decided.
  Finally, we obtain a contradiction from the fact that $p_j$ must
  also RTBC-propose $m$ at instance $\RTBABinst_{g}$, which we prove
  by induction on the list of instances between $\RTBABinst_{p}$ and
  $\RTBABinst_{g}$ and using Property~\ref{prop:rtbc-repropose}.

\end{proof}

\subsection{PISTIS-AT: a Class of Algorithms Implementing RTBAB}
\label{sec:pistis-to-algo}

Algorithm~\ref{alg:RTBAB} provides an example of a PISTIS-AT
algorithm, which implements the RTBAB primitive presented in
Sec.~\ref{sec:atomic-bcast}.
We assume here that a process broadcasts a message by invoking
$\rtbabBcastM$, and delivers a message
invoking $\rtbabDelivM$.
In addition, $\rtbabInit{\mbox{rtbab}}$ instantiates a new instance of
RTBAB with id $\mbox{rtbab}$.
To guarantee total order, each process maintains a monotonically
increasing sequence number $\META{seq}$, which is incremented
every time $\rtbabBcastM$ is called.

\begin{lemma}
  \label{lem:pistis-to-rounds}
  Given an RTBAB instance $\RTBABinst$, such that $p_i$ is the leader
  of $\RTBABinst$, all correct processes will either RTBC-propose a
  value received from $p_i$ or $\bot$ (in case they have not received
  any new message from $p_i$ since the last one they processed).
  Moreover, given two correct processes that RTBC-propose such values
  at instance $\RTBABinst$, it must be that either those values are
  equal (to the $k^{th}$ new value broadcasted by~$p_i$, for some~$k$)
  or one of them is $\bot$ (in case the corresponding process has not
  received $p_i$'s $k^{th}$ broadcasted new value yet, and has already
  processed all previous broadcasted value from~$p_i$).
\end{lemma}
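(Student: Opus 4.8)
The plan is to treat the two parts of the lemma separately: the first, structural, part follows by direct inspection of Algorithm~\ref{alg:RTBAB}, while the second, consistency, part I would prove by induction on the instance number $\RTBABinst$, relying on the agreement guarantees of the underlying RTBRB and RTBC primitives. For the first part, observe that the only place a correct process RTBC-proposes at instance $\RTBABinst$ is the handler guarded by $\exists p_j:\RTBABunordered[p_j]\neq[]\wedge\RTBABbusy=\FALSE$, where it proposes $\RTBABunordered[\leader{\RTBABinst}]\textbf{.head}()$ when $\RTBABunordered[\leader{\RTBABinst}]\neq[]$ and $\bot$ otherwise. Since $\leader{\RTBABinst}=p_i$ by hypothesis, and since entries enter $\RTBABunordered[p_i]$ only inside the $\delivOp{p_i}{\cdot}{\cdot}$ handler (line~\ref{alg:append-new-message}), every proposed non-$\bot$ value was RTBRB-delivered with sender $p_i$. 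This gives the first sentence of the lemma.

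For the second part the key is to show that all correct processes build $\RTBABunordered[p_i]$ from one common sequence of values and consume it in lock-step. First I would argue that appends to $\RTBABunordered[p_i]$ happen in strictly increasing RTBRB sequence-number order: the handler at lines~\ref{alg:rtbrbDelA}--\ref{alg:rtbrbDelB} acts only on $\META{num}=\RTBABnext[p_i]$, increments $\RTBABnext[p_i]$ by one, and postpones any out-of-order delivery through $\wait{\deltaWait}$ and a re-trigger (line~\ref{alg:wait-deliv}). By RTBRB-Agreement and RTBRB-No duplication the value attached to each pair $(p_i,\META{num})$ is identical at every correct process, so there is a single canonical sequence $w_1,w_2,\dots$ of distinct, not-previously-delivered values of $p_i$ (the $\RTBABdelivered$ test guarding line~\ref{alg:append-new-message} removing repeats), and the values ever appended by any correct process form a prefix of this sequence whose length depends only on how many of $p_i$'s messages that process has received.

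Next I would handle removals, which occur only at $p_i$'s own turns, i.e.\ at instances $\RTBABinst'\equiv i\pmod n$ where $\RTBABunordered[\leader{\RTBABinst}]\textbf{.remove}(\META{decided})$ is executed. Arguing by induction on $\RTBABinst$, assume the lemma holds at all of $p_i$'s earlier turns (the base case, $p_i$'s first turn, is trivial since no value has been removed). At each earlier turn every correct process proposed the same common head $w$ or $\bot$; hence by RTBC-Validity the decided value is $w$ or $\bot$, and by RTBC-Agreement this decision is shared. A decision of $w$ deletes exactly that head, advancing every list by one position, while a decision of $\bot$ is a no-op, so all correct processes have advanced by the same amount, say to $w_k$ with $k$ equal to one plus the number of non-$\bot$ decisions at $p_i$'s prior turns. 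The $\RTBABdelivered$ guard is precisely what keeps the lists aligned when a process receives $w$ only after $w$ has been decided: such a value already lies in $\RTBABdelivered$ and is therefore never appended.

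Combining the two steps, at instance $\RTBABinst$ every correct process with non-empty $\RTBABunordered[p_i]$ has head $w_k$, while a correct process with empty $\RTBABunordered[p_i]$ has simply not yet RTBRB-delivered the message carrying $w_k$ after having processed all earlier ones; this is exactly the claimed dichotomy. I expect the main obstacle to be the removal step, since the induction genuinely couples this lemma with the application of RTBC-Validity---knowing the decision equals the common head is what justifies that removals stay synchronized---and one must simultaneously argue that a correct process proposes $\bot$ only when it is strictly ``behind.'' The most delicate point is the interaction between the timing of an RTBRB delivery and the update of $\RTBABdelivered$ when the same value is broadcast under two sequence numbers; I would resolve it through the $\RTBABdelivered$-based deduplication, and also note that instances led by other processes never touch $\RTBABunordered[p_i]$ and so leave the argument undisturbed.
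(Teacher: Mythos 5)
Your proof is correct and follows essentially the same route as the paper's: induction over the successive instances led by $p_i$, using RTBC-Validity and RTBC-Agreement to keep decisions (and hence removals from $\RTBABunordered[p_i]$) synchronized across correct processes, and the $\RTBABdelivered$ guard to handle processes that receive a value only after it has been decided. You additionally make explicit the structural first part and the canonical-sequence/prefix argument (via RTBRB-Agreement and RTBRB-No duplication) that the paper's outline leaves implicit, but the core inductive argument is the same.
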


\begin{proof}[Proof outline]
  This can be proved by induction on causal time.

  The first time those correct processes RTBC-propose a value at an
  instance such that $p_i$ is the leader, it must be that either this
  value is the first value RTBAB-broadcasted by $p_i$, or $\bot$.

  The inductive case goes as follows: we assume that our property is
  true at a given instance $\RTBABinst$ such that $p_i$ is the leader,
  and where correct processes RTBC-propose either $v$ (the
  $(k-1)^{th}$ new value proposed by $p_i$) or $\bot$, and we prove
  that the property is still true at the next such instance
  $\RTBABinst'$.  By RTBC-Validity, it must be that correct processes
  either RTBC-decide $v$ or $\bot$, and by RTBC-Agreement, they must
  not decide differently.
  Therefore, if they decide $v$ at instance $\RTBABinst$, then $v$
  will be added to the $\RTBABdelivered$ set, and therefore never
  added to $\RTBABunordered$ again; and in addition, it will be
  removed from $\RTBABunordered$.  At the next instance $\RTBABinst'$,
  these processes will vote either for the $k^{th}$ new value proposed by
  $p_i$ or for $\bot$ if they have not received that $k^{th}$ new value.
  In particular, if one of those correct processes RTBC-proposed
  $\bot$ because it had not received $v$ yet, then at instance
  $\RTBABinst'$ it will either propose the $k^{th}$ new value proposed by
  $p_i$ (since $v$ is skipped because already delivered), or $\bot$ in
  case it has not received this $k^{th}$ new value yet.
  Otherwise if they decide $\bot$, then the correct processes that
  voted for $v$ will still vote for $v$ at $\RTBABinst'$, and those
  that voted for $\bot$ will either keep on voting for $\bot$ if they
  still have not received $v$, or finally receive $v$ and start voting
  for $v$.  Note that by RTBAB-Agreement, all correct processes must
  eventually receive $v$.
\end{proof}

In order to obtain time bounds that do not depend on
Algorithm~\ref{alg:RTBAB}'s variable, we make the following
assumption:
\begin{assumption}
  \label{assump:bcast-one-at-a-time}
  Correct processes wait for
  $\deltaRTBRB+\deltaWait+({n}\times(\deltaRTBC+\deltaWait))$ between
  two different broadcasts.
\end{assumption}
As we will see below, this is the time it takes to guarantee that all
correct processes RTBAB-deliver an RTBRB-broadcasted value.

\begin{lemma}
  Algorithm~\ref{alg:RTBAB} satisfies Property~\ref{prop:rtbab-bcast-rtbrb}.
\end{lemma}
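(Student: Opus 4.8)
The plan is to prove this lemma by direct inspection of the $\rtbabBcast{p_i}{m}$ event handler in Algorithm~\ref{alg:RTBAB}, rather than by any substantive argument. First I would observe that when a correct process $p_i$ RTBAB-broadcasts $m$ at time $t$, this corresponds precisely to the triggering of the event at line~\ref{pistis-to-event}, and that the very first action inside this handler (line~\ref{pistis-to-rtbrb-bcast}) is to trigger $\bcastOp{p_i}{\RTBABseq}{m}$, i.e., to RTBRB-broadcast $m$ using the current sequence number. The crucial structural fact to record is that no blocking \texttt{wait} statement, and no other intervening event, lies between the RTBAB-broadcast event and this RTBRB-broadcast trigger: the increment of $\RTBABseq$ occurs only on the line afterward, so the sequence number is already well-defined when the trigger fires.

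Next I would invoke the synchrony assumption on correct processes from Sec.~\ref{system model}, which guarantees that the delay for performing any local step is bounded by a fixed, known constant. Since the path from the RTBAB-broadcast event to the RTBRB-broadcast trigger consists of a single non-blocking local step, it completes within a bounded amount of time. I would then simply take $\deltaB$ to be this bounded local-step duration, and conclude that $p_i$ RTBRB-broadcasts $m$ by time $t+\deltaB$, which is exactly Property~\ref{prop:rtbab-bcast-rtbrb}.

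The main (and essentially only) obstacle here is one of careful bookkeeping rather than mathematical difficulty: one must confirm that the event handler is non-blocking and that nothing can delay the RTBRB-broadcast beyond a single local step. Because the RTBRB-broadcast is literally the first statement of the handler, this is immediate, and no appeal to the RTBRB guarantees or to the round-based structure is needed at this stage. I therefore expect the entire argument to amount to a one-paragraph observation, with $\deltaB$ instantiated to the synchronous local-step bound.
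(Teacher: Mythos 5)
Your proposal is correct and matches the paper's own proof, which likewise argues by direct inspection that the handler at line~\ref{pistis-to-event} immediately triggers the RTBRB-broadcast at line~\ref{pistis-to-rtbrb-bcast}. You merely spell out the bookkeeping the paper leaves implicit (the handler being non-blocking and $\deltaB$ instantiated via the synchronous local-step bound), which is a harmless elaboration of the same one-line argument.
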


\begin{proof}[Proof outline]
  Property~\ref{prop:rtbab-bcast-rtbrb} holds because Algorithm~\ref{alg:RTBAB}
  RTBTB-broadcasts messages on each call to $\rtbabBcastSYMB$ (see
  lines~\ref{pistis-to-event} and~\ref{pistis-to-rtbrb-bcast} of
  Algorithm~\ref{alg:RTBAB}).
\end{proof}

\begin{lemma}
  Algorithm~\ref{alg:RTBAB} satisfies Property~\ref{prop:rtbc-deliv-rtbab}.
\end{lemma}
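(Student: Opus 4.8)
The plan is to follow Algorithm~\ref{alg:RTBAB}'s handler for the RTBC-decision (line~\ref{alg:rtbc-decide}) and bound the delay from the moment a correct process $p_i$ RTBC-decides a non-$\bot$ message $m$ at time $t$ until it triggers the matching $\rtbabDeliv{\leader{\RTBABinst}}{m}$. The decision fires the event $\rtbcDecide{p_i}{\RTBABinst'}{m}$ for some instance number $\RTBABinst'$, and the easy case is $\RTBABinst=\RTBABinst'$: then the handler either immediately triggers the RTBAB-deliver of $m$ (when $m\notin\RTBABdelivered$), so delivery happens within a single bounded local step, or it finds $m\in\RTBABdelivered$, which means $m$ was already RTBAB-delivered at an earlier time and the bound holds a fortiori. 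First I would record that in either subcase the delay is at most a fixed local-processing term.

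The substantive case is $\RTBABinst'\neq\RTBABinst$, where the handler sleeps for $\deltaWait$ and re-triggers the same decision (line~\ref{alg:wait-decide}); the goal is to show this loop terminates after boundedly many iterations. I would first argue $\RTBABinst\le\RTBABinst'$: since $p_i$'s counter passes instance $\RTBABinst'$ only by processing $\RTBABinst'$'s (unique) decision, the counter cannot already exceed $\RTBABinst'$ when that decision arrives (and if it somehow had, $m$ would already have been delivered, so the bound again holds). The problem thus reduces to showing $p_i$'s counter climbs from $\RTBABinst$ to $\RTBABinst'$ quickly. By Property~\ref{prop:rtbc-unique}, $p_i$ proposes in every instance, sequentially and one at a time; hence each intermediate instance $j$ with $\RTBABinst\le j<\RTBABinst'$ is proposed by $p_i$ and, by \emph{RTBC-Termination} and \emph{RTBC-Timeliness}, decides within $\deltaRTBC$, after which $p_i$ advances its counter. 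Each re-trigger costs only $\deltaWait$, so the catch-up takes at most $(\RTBABinst'-\RTBABinst)\cdot(\deltaRTBC+\deltaWait)$ plus local processing, and once the counter equals $\RTBABinst'$ the re-triggered decision matches and $m$ is RTBAB-delivered (or is already in $\RTBABdelivered$).

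The main obstacle is to bound the gap $\RTBABinst'-\RTBABinst$ by a run-independent constant, which is where the leader structure and Assumption~\ref{assump:bcast-one-at-a-time} come in. I would use the rule $\leader{\RTBABinst}=\RTBABinst\bmod n$ together with Lemma~\ref{lem:pistis-to-rounds}, so that any single pending message is decided within at most $n$ consecutive instances, and invoke Assumption~\ref{assump:bcast-one-at-a-time}, whose waiting term is precisely $n\cdot(\deltaRTBC+\deltaWait)$ (added to $\deltaRTBRB+\deltaWait$): it prevents a given broadcaster's backlog from letting the shared instance sequence race ahead of a correct process by more than $\Theta(n)$ instances. This yields $\RTBABinst'-\RTBABinst=O(n)$, so I would set $\deltaD$ to be $n\cdot(\deltaRTBC+\deltaWait)$ plus the bounded local-step term, establishing Property~\ref{prop:rtbc-deliv-rtbab}. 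I expect the delicate point to be ruling out that a decision for a far-future instance reaches $p_i$ while its counter is still low; making this rigorous requires combining the one-at-a-time proposal discipline of Property~\ref{prop:rtbc-unique} with \emph{RTBC-Timeliness} to show that no correct process can be more than $\Theta(n)$ instances ahead of another.
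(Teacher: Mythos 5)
Your proposal is correct at the outline level and follows the same skeleton as the paper's proof: the same three-way case split (decision for the current instance with $m\notin\RTBABdelivered$, giving immediate delivery; $m\in\RTBABdelivered$, meaning $m$ was delivered at an earlier time; and the mismatch case handled by the $\deltaWait$ retry of line~\ref{alg:wait-decide}), and the same mechanism for terminating the retry loop (instances are processed in monotonic order and each completes in bounded time by RTBC-Timeliness). Where you diverge is in the mismatch case: the paper stops at that observation, whereas you additionally try to bound the instance gap $\RTBABinst'-\RTBABinst$ by a run-independent $O(n)$ term, importing Assumption~\ref{assump:bcast-one-at-a-time}, Lemma~\ref{lem:pistis-to-rounds}, and the rotating-leader rule to do so. Two remarks on that extra step. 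First, the paper states and proves this lemma \emph{without} Assumption~\ref{assump:bcast-one-at-a-time} (the paper reserves that assumption for the lemma on Property~\ref{prop:rtbrb-deliv-rtbc-prop} and for RTBAB-Timeliness), so your version establishes only a conditional variant of the claimed unconditional statement. Second, the ``delicate point'' you flag --- a decision for a far-future instance reaching $p_i$ while its counter is still low --- can be dispatched without any of this machinery: in Algorithm~\ref{alg:RTBAB} a process calls $\rtbcInit{\RTBABinst}$ and RTBC-proposes only for its \emph{current} instance, and by Property~\ref{prop:rtbc-unique} it proposes sequentially and never in parallel, so any decision upcall it receives concerns an instance it has already initialized, which never exceeds its current counter; decisions for strictly smaller instances have already been processed (and $m$ already delivered). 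Hence the gap you labor to bound cannot actually open up, which both explains why the paper can afford its terse argument and shows that you could drop the dependence on Assumption~\ref{assump:bcast-one-at-a-time} altogether, quoting a simpler $\deltaD$ (essentially bounded local processing, or at worst one instance-completion time $\deltaRTBC+\deltaWait$) instead of $n\cdot(\deltaRTBC+\deltaWait)$ plus local steps.
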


\begin{proof}[Proof outline]
  If a value $v$ (different from $\bot)$ is RTBC-decided at time $t$, and
  the RTBC instance is the current instance, and $v$ is not in
  $\RTBABdelivered$, then it is RTBAB-delivered.  If $v$ is in
  $\RTBABdelivered$, then it must be that it was added to that set in
  the past, in which case it was delivered at that time.

  Now, if the RTBC instance is not the current instance, Algorithm~\ref{alg:RTBAB}
  retries handling the messages after a while.  The number of times a
  process will retry handling deliver messages is bounded because
  instances are handled in a monotonic order and are bounded in time
  according to RTBC-Timeliness.
\end{proof}

\begin{lemma}
  Under Assumption~\ref{assump:bcast-one-at-a-time}, Algorithm~\ref{alg:RTBAB}
  satisfies Property~\ref{prop:rtbrb-deliv-rtbc-prop}.
\end{lemma}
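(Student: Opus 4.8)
The plan is to follow $m$ through the data structures of Algorithm~\ref{alg:RTBAB}, from its RTBRB-delivery at time $t$ to the moment it is RTBC-proposed, charging each intervening delay to one of the additive terms of Assumption~\ref{assump:bcast-one-at-a-time}. Write $p_j$ for $m$'s (correct) broadcaster and $num$ for the sequence number carried by its RTBRB-delivery. The first step is to show $m$ is appended to $\RTBABunordered[p_j]$ without delay: Assumption~\ref{assump:bcast-one-at-a-time} forces $p_j$ to let more than $\deltaRTBRB$ elapse between consecutive broadcasts, and together with RTBRB-Timeliness and RTBRB-Agreement this temporally separates $p_j$'s broadcasts, so every correct process RTBRB-delivers $p_j$'s messages in strictly increasing sequence-number order. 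Hence when the handler of lines~\ref{alg:rtbrbDelA}--\ref{alg:rtbrbDelB} fires for $m$, the guard $num=\RTBABnext[p_j]$ holds, the retry branch on line~\ref{alg:wait-deliv} is skipped, and $m$ (being fresh, hence not in $\RTBABdelivered$) is appended on line~\ref{alg:append-new-message}.

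Next I would argue that $m$ then sits at the head of $\RTBABunordered[p_j]$. The wait interval of Assumption~\ref{assump:bcast-one-at-a-time} is tailored precisely so that $p_j$'s previous message has already been RTBC-decided and removed from $\RTBABunordered[p_j]$ before $m$ is broadcast: its term $\deltaRTBRB$ covers the RTBRB-delivery of that previous message, $\deltaWait$ its append, and $n\times(\deltaRTBC+\deltaWait)$ one full rotation of RTBC instances (enough for an instance led by $p_j$ to occur and decide). The same assumption also prevents any later message from $p_j$ from entering the queue before $m$ is handled, so $m$ is the unique entry of $\RTBABunordered[p_j]$.

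Finally I would bound the time until $m$ is proposed. As long as $\RTBABunordered[p_j]\neq[]$, the event of line~\ref{alg:exists-unordered} re-fires whenever $\RTBABbusy=\FALSE$; each launched instance terminates within $\deltaRTBC$ by RTBC-Timeliness (plus at most one $\deltaWait$ charged to the stale-instance retry in the handler of line~\ref{alg:rtbc-decide}), after which $\RTBABbusy$ is reset and $\RTBABinst$ is incremented. Since $\leader{\RTBABinst}=\RTBABinst\bmod n$ cycles with period $n$, within at most $n$ instances the process reaches one whose leader is $p_j$; by the previous step $m$ is at the head of $\RTBABunordered[p_j]$ there, so the process RTBC-proposes $m$. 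This yields $\deltaP=\deltaWait+n\times(\deltaRTBC+\deltaWait)$.

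The main obstacle is the second step: rigorously excluding that any stale (older, still-undelivered) or fresh (newer) entry of $p_j$ precedes $m$ in $\RTBABunordered[p_j]$ when $p_j$ next becomes leader---which is exactly what Assumption~\ref{assump:bcast-one-at-a-time} is engineered to rule out, and the crux is matching its three additive terms to the three delays above. Here I would invoke Lemma~\ref{lem:pistis-to-rounds} to note that $m$ can only be decided at an instance led by $p_j$, so the first such instance after $m$ enters the queue is precisely where this process either RTBC-proposes $m$ or, in the degenerate case, finds it already RTBC-decided---covering both disjuncts of Property~\ref{prop:rtbrb-deliv-rtbc-prop}.
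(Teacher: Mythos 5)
Your proposal is correct and follows essentially the same route as the paper's proof: both hinge on in-order treatment of RTBRB-deliveries, on Assumption~\ref{assump:bcast-one-at-a-time} guaranteeing that earlier entries of $\RTBABunordered[p_j]$ have already been RTBC-decided and removed (so $m$ sits at the head when $p_j$ next leads), on the rotating-coordinator bound $n\times(\deltaRTBC+\deltaWait)$, and on handling the already-RTBC-decided case separately (the paper's Case~2, your ``degenerate case''). The differences are only organizational---the paper first derives variable-dependent bounds for three cases (including out-of-order deliveries handled via $\deltaWait$ retries) and then invokes the assumption to obtain constant bounds, whereas you invoke the assumption up front to collapse those cases---plus one cosmetic slip: your parenthetical ``being fresh, hence not in $\RTBABdelivered$'' is not literally guaranteed (a process can RTBC-decide $m$ via others' proposals before RTBRB-delivering it), but your final paragraph covers exactly that situation.
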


\begin{proof}[Proof outline]
  First of all, let us point out that RTBRB-deliver messages are
  treated in monotonic order.  Let us now consider three cases.  In
  the following, we first provide variable-dependent bounds, and we
  then explain how to get independent bounds using
  Assumption~\ref{assump:bcast-one-at-a-time}.

  Case~(1): Whenever a process $p_i$ receives a RTBRB-deliver message
  $m$ at time $t$ with sequence number $\META{num}$, broadcasted by
  $p_j$, which is the next one to receive (i.e.,
  $\META{num}=\RTBABnext[p_j]$), and if $m$ is not already in
  $\RTBABdelivered$, then $p_i$ will append $m$ to its
  $\RTBABunordered[p_j]$ list.  We now have to prove that $m$ will
  then be RTBC-proposed or RTBC-decided by some time $t+\deltaP$,
  for some bounded $\deltaP$.  Because $m$ is now in $p_i$'s
  $\RTBABunordered[p_j]$ list, the event
  line~\ref{alg:exists-unordered} will be triggered at least until $m$
  is removed from the list.  Because Algorithm~\ref{alg:RTBAB} uses the rotating
  coordinator paradigm, then a value broadcasted by some process $p_k$
  is voted upon using an RTBC instance only every $n$ (the total
  number of processes) instances (i.e., whenever $p_k$ is the leader).
  However, there might be other values before $m$ in the
  $\RTBABunordered[p_j]$ lists maintained by the processes.  The
  processes have to RTBC-decide these previous values to start
  RTBC-proposing $m$ if $m$ has not been RTBAB-delivered in the
  meantime (otherwise we can conclude because RTBAB-delivered messages
  are RTBC-decided upon).  Because of the rotating coordinator scheme,
  and by the RTBC properties and Lemma~\ref{lem:pistis-to-rounds}, we
  get the guarantee that $m$ will be RTBC-proposed by
  $t+(n\times(\deltaRTBC+\deltaWait)\times(\META{num}+1))$, where
  $\deltaRTBC+\deltaWait$ is the time it takes to complete an RTBC
  instance, and $n\times(\deltaRTBC+\deltaWait)$ is the time it takes
  to rotate through the leaders ($\deltaWait$ is the time processes
  wait for before re-trying to handle a message---see
  line~\ref{alg:wait-deliv} and line~\ref{alg:wait-decide}).  Now,
  thanks to Assumption~\ref{assump:bcast-one-at-a-time}, we can derive
  that all previous values stored in $\RTBABunordered[p_j]$ have
  already been decided upon when correct processes deliver $m$.
  Therefore, we get that $m$ will be RTBC-proposed by
  $t+(n\times(\deltaRTBC+\deltaWait))$.

  Case~(2): If $m$ is already in $\RTBABdelivered$, then $p_i$ must have
  already RTBC-decided $m$ according to
  lines~\ref{alg:rtbc-decide}--\ref{alg:add-to-delivered}.

  Case~(3): If $m$ is not the next value that $p_i$ is supposed to
  receive, it will re-try RTBRB-delivering $m$ after $\deltaWait$
  until it has received all the previous values.  The RTBRB properties
  guarantee that if some correct process $p_j$ broadcasts a value $v$
  at time $t$, then correct processes will deliver $v$ by
  $t+\deltaRTBRB+\deltaWait$.  Therefore, it must be that correct
  processes will have stored $m$ (and all previous values) in their
  $\RTBABunordered[p_j]$ list by
  $t+(\deltaRTBRB+\deltaWait)\times(\META{num}+1)$.  Finally, following the
  same argument as above, we get that $m$ will be RTBC-proposed by
  $t+((\deltaRTBRB+\deltaWait)\times(\META{num}+1))+(n\times(\deltaRTBC+\deltaWait)\times(\META{num}+1))$.
  As mentioned above,
  thanks to Assumption~\ref{assump:bcast-one-at-a-time}, we can derive
  that all previous values stored in $\RTBABunordered[p_j]$ have
  already been RTBRB-delivered and RTBC-decided upon when correct
  processes deliver $m$.
  Therefore, we get that $m$ will be RTBC-proposed by
  $t+((\deltaRTBRB+\deltaWait))+(n\times(\deltaRTBC+\deltaWait))$.
\end{proof}

\begin{lemma}
  Algorithm~\ref{alg:RTBAB} satisfies Property~\ref{prop:rtbc-prop-val-or-bot}.
\end{lemma}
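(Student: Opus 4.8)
The plan is to derive this property as a direct corollary of Lemma~\ref{lem:pistis-to-rounds}, which already captures the essential agreement among correct processes on what they RTBC-propose at a fixed instance. First I would observe that in Algorithm~\ref{alg:RTBAB} the leader of any instance $\RTBABinst$ is the fixed process $\leader{\RTBABinst}$ (recall $\leader{\RTBABinst}=\RTBABinst\bmod n$), determined solely by the instance number. By inspecting the event at line~\ref{alg:exists-unordered}, every correct process that RTBC-proposes at instance $\RTBABinst$ proposes either the head of its own $\RTBABunordered[\leader{\RTBABinst}]$ list (when that list is non-empty) or the distinguished value $\bot$ (when the list is empty). Thus each correct proposal at $\RTBABinst$ is either $\bot$ or a value that originates from the single leader $\leader{\RTBABinst}$.

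Next I would invoke Lemma~\ref{lem:pistis-to-rounds} applied to instance $\RTBABinst$ with leader $p_k=\leader{\RTBABinst}$: any two correct processes that RTBC-propose at $\RTBABinst$ either propose the same value---namely the $k^{th}$ new value broadcast by $p_k$ for some $k$---or at least one of them proposes $\bot$. In particular, the lemma rules out two correct processes proposing two \emph{distinct} non-$\bot$ values at the same instance. By transitivity this means all non-$\bot$ proposals made by correct processes at $\RTBABinst$ coincide.

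To conclude, I would exhibit the witness value $v$ required by Property~\ref{prop:rtbc-prop-val-or-bot}. If some correct process RTBC-proposes a non-$\bot$ value at $\RTBABinst$, let $v$ be that value; by the previous step this choice is unambiguous, since every non-$\bot$ correct proposal equals it. Otherwise, if every correct process proposes $\bot$, let $v$ be arbitrary. In either case every correct process RTBC-proposes either $v$ or $\bot$ at $\RTBABinst$, which is exactly the statement of the property.

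The main obstacle is not in this final argument---which is essentially the single-value reformulation of the pairwise-agreement statement of Lemma~\ref{lem:pistis-to-rounds}---but rather resides in that earlier lemma, whose causal-time induction must maintain that the $\RTBABdelivered$ set together with the removal of decided values from $\RTBABunordered$ keeps the per-leader lists of correct processes synchronized up to their current heads. Since Lemma~\ref{lem:pistis-to-rounds} is already established, the proof here reduces to the two observations above: that proposals are drawn from the fixed leader's queue head or are $\bot$, and that the lemma forces all non-$\bot$ such proposals to agree.
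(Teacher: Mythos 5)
Your proposal is correct and takes the same route as the paper, which simply observes that the property is a straightforward consequence of Lemma~\ref{lem:pistis-to-rounds}; you merely spell out the details (proposals come from the fixed leader's queue head or are $\bot$, pairwise agreement from the lemma yields a single witness value $v$) that the paper leaves implicit.
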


\begin{proof}[Proof outline]
  This is a straightforward consequence of
  Lemma~\ref{lem:pistis-to-rounds}.
\end{proof}

\begin{lemma}
  Algorithm~\ref{alg:RTBAB} satisfies Property~\ref{prop:rtbc-repropose}.
\end{lemma}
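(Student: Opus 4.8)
The plan is to follow the life-cycle of the value $v$ inside the leader's $\RTBABunordered$ list and show that a proposal which is not decided upon simply survives, unchanged at the head of that list, until its leader is scheduled again. First I would fix notation: assume a correct process $p_i$ RTBC-proposes a value $v\neq\bot$ at instance $\RTBABinst$, and let $p_\ell=\leader{\RTBABinst}$ be the leader of that instance (recall $\leader{j}=j\bmod n$). Inspecting the event at line~\ref{alg:exists-unordered} of Algorithm~\ref{alg:RTBAB}, the only way $p_i$ can propose a non-$\bot$ value is $v=\RTBABunordered[p_\ell].\mathit{head}()$; in particular $v$ is the $p_\ell$-tagged pair $\pair{p_\ell}{m}$ and currently sits at the head of $p_i$'s $\RTBABunordered[p_\ell]$ list.

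Next I would pin down what ``$\RTBABinst$ does not decide $v$'' entails. By Lemma~\ref{lem:pistis-to-rounds}, every correct process RTBC-proposes either $v$ or $\bot$ at $\RTBABinst$, so by RTBC-Validity the decided value is $v$ or $\bot$; since by hypothesis it is not $v$, it must be $\bot$ (RTBC-Termination guarantees a decision is reached). Reading off the decide event (lines~\ref{alg:rtbc-decide}--\ref{alg:add-to-delivered}), deciding $\bot$ adds nothing to $\RTBABdelivered$ (the guard $\META{decided}\neq\bot$ fails) and removes nothing useful from $\RTBABunordered[p_\ell]$ (a $\mathit{remove}(\bot)$ is a no-op), after which $\RTBABinst$ is incremented and $\RTBABbusy$ is reset. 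Hence $v$ is still at the head of $\RTBABunordered[p_\ell]$ and still absent from $\RTBABdelivered$.

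The crux is an invariance argument across the intermediate instances $\RTBABinst+1,\dots,\RTBABinst+n-1$. Here I would observe that $\RTBABunordered[p_\ell]$ is shrunk only inside the decide event, and only for the leader of the deciding instance; since the next instance after $\RTBABinst$ whose leader is $p_\ell$ is exactly $\RTBABinst+n$, the list $\RTBABunordered[p_\ell]$ can only grow during the intermediate instances, so $v$ remains its head. For the same reason $v$, being $p_\ell$-tagged, can be decided (and thus added to $\RTBABdelivered$) only at an instance led by $p_\ell$, so it is not delivered in between either. Combining this with RTBC-Termination and the fact that $\RTBABunordered[p_\ell]$ stays non-empty (it contains $v$), the guard at line~\ref{alg:exists-unordered} keeps firing and $p_i$ advances through each intermediate instance and reaches $\RTBABinst+n$; there $p_\ell$ is leader again and the head is still $v$, so $p_i$ RTBC-proposes $v$, establishing the first part with $k=n$.

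For the ``moreover'' clause I would argue that no process proposes the $p_\ell$-tagged value $v$ at any of the instances $\RTBABinst+1,\dots,\RTBABinst+n-1$, precisely because a correct process always proposes the head of the current leader's list, and none of those instances has $p_\ell$ as leader; hence $\RTBABinst+n$ is the smallest instance in $[\RTBABinst+1,\RTBABinst+k]$ at which $v$ is proposed by some process, as required. I expect the main obstacle to be the bookkeeping of this invariance step: exhaustively ruling out every way $v$ could leave the head of $\RTBABunordered[p_\ell]$ (removal on a foreign-leader decision, premature delivery, or overtaking by a later element) before $p_\ell$ is rescheduled, rather than any conceptually deep step.
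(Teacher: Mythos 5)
Your proof is correct and follows essentially the same route as the paper's: invoke Lemma~\ref{lem:pistis-to-rounds} and the RTBC properties to conclude the instance decides $\bot$, observe that $v$ therefore stays at the head of the leader's $\RTBABunordered$ list until that leader is scheduled again after a full rotation (giving $k=n$), and settle the ``moreover'' clause by noting that intermediate instances have different leaders so $v$ cannot be proposed there. Your version merely spells out the bookkeeping (the no-op $\mathit{remove}(\bot)$, the invariance of the list head across foreign-leader instances) that the paper's outline leaves implicit.
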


\begin{proof}[Proof outline]
  Let $p_i$ be a correct process that proposes a value $v$, with
  broadcaster $p_j$, at a given time $t$, using a given RTBC instance
  $\RTBABinst$, and such that this instance does not decide $v$.  By
  Lemma~\ref{lem:pistis-to-rounds}, all correct processes propose $v$
  or $\bot$ at that instance.  By the RTBC properties, because
  $\RTBABinst$ does not decide $v$, it must decide $\bot$.  Therefore,
  $p_i$ will increment its RTBC instance number but will keep $m$ at
  the head of its $\RTBABunordered[p_j]$ list.  After a full rotation
  through the leaders, it will RTBC-propose $v$ again at the later
  instance $\RTBABinst+n$, where $0<n$.

  Moreover, no correct process will propose $v$ between $\RTBABinst$
  and $\RTBABinst+n$ because $p_j$ ($v$'s broadcaster) is the leader
  of $\RTBABinst$ and $\RTBABinst+n$ but not of the instances in
  between, and $v$ can only be in the $\RTBABunordered[p_j]$ lists.
\end{proof}

\begin{lemma}
  Algorithm~\ref{alg:RTBAB} satisfies Property~\ref{prop:rtbc-unique}.
\end{lemma}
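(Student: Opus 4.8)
The plan is to verify the four conjuncts of Property~\ref{prop:rtbc-unique} separately, each of which follows from reasoning about how the flag $\RTBABbusy$ and the counter $\RTBABinst$ evolve together in Algorithm~\ref{alg:RTBAB}. The central fact I would establish first is a state invariant: a correct process triggers $\rtbcProposeSYMB$ only inside the event at line~\ref{alg:exists-unordered}, whose guard requires $\RTBABbusy=\FALSE$; that event immediately sets $\RTBABbusy=\TRUE$ and proposes for the current instance $\RTBABinst$, whereas $\RTBABbusy$ is reset to $\FALSE$ only inside the decide event at line~\ref{alg:rtbc-decide}, at the very same time that $\RTBABinst$ is incremented by exactly one. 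The test $\RTBABinst'=\RTBABinst$ in the decide event further ties that reset to the matching instance (decisions for other instances are re-queued via $\wait{\deltaWait}$). Hence between any two consecutive proposals of a correct process there is exactly one decision, and $\RTBABinst$ strictly increases across them.

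From this invariant, three of the four conjuncts are immediate. For \emph{exactly one value per instance}: once the propose event fires for an instance it raises $\RTBABbusy$, so its guard blocks any further firing until the decide event for that same instance lowers the flag and advances the counter; thus no instance number is ever used for two proposals. For \emph{not in parallel}: the interval between proposing at instance $\RTBABinst$ at time $t$ and that instance deciding is exactly the interval during which $\RTBABbusy=\TRUE$, so no second proposal can occur before any $t'$ that precedes the decision. For \emph{increasing order}: since $\RTBABinst$ is only ever incremented, and a later proposal is separated from an earlier one by at least one decision (hence one increment), the later proposal necessarily uses a strictly larger instance number.

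The remaining conjunct---that correct processes propose in all instances, i.e.\ with no gaps in the instance numbers---is the step I expect to require the most care. The counter $\RTBABinst$ starts at $0$ and is advanced only by the $+1$ increment in the decide event, so I would argue by induction on the instance number that every instance strictly below the current one has been proposed for exactly once, with none skipped. The subtle point is that this requires each started instance to actually decide, so that the counter can advance and the next instance can begin; here I would invoke RTBC-Termination together with Assumption~\ref{assump:network-access}, since a correct process accesses the network only through the RTBRB primitive and therefore never goes passive while running a consensus instance, guaranteeing that every instance it starts eventually decides and frees $\RTBABbusy$. When the leader's queue is empty but some queue is non-empty, the process proposes $\bot$ (line~\ref{alg:exists-unordered}), so no instance is skipped merely for lack of a value, and the process thus proposes in each successive instance in turn, which completes the argument.
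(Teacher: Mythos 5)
Your proposal is correct and follows essentially the same route as the paper's own proof outline: the $\RTBABbusy$ flag guarantees exactly one proposal per instance and no parallelism, while the unit increment of $\RTBABinst$ upon each decision yields gap-free, monotonically increasing instance numbers. Your explicit appeal to RTBC-Termination (via Assumption~\ref{assump:network-access}) to ensure every started instance completes, and to the $\bot$-proposal when the leader's queue is empty, merely makes precise what the paper's terser outline leaves implicit.
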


\begin{proof}[Proof outline]
  By design, correct processes RTBC-propose exactly one value per RTBC
  instance because they only start proposing a value in a new instance
  if $\RTBABbusy$ is $\FALSE$; in which case they set $\RTBABbusy$ to
  $\TRUE$; wait for this instance to complete; and finally increment
  the RTBC instance number and set back $\RTBABbusy$ to $\FALSE$.

  Correct processes propose values in all RTBC instances and
  monotonically because they increment the the RTBC instance number by
  one every time an RTBC instance complete.

  Finally, correct processes do not run RTBC instances in parallel
  thanks to the $\RTBABbusy$ flag.
\end{proof}

\subsection{Direct Proof of Algorithm~\ref{alg:RTBAB}'s Correctness}
\label{sec:proof-RTBAB}
\label{sec:pistis-to-proofs}

\begin{lemma}[RTBAB-Validity]
  If a correct $p_i$ process broadcasts $m$, then $p_i$ eventually
  delivers $m$.
\end{lemma}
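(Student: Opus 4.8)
The plan is to trace $m$ through the two layers the algorithm is built upon---the underlying RTBRB primitive and the sequence of RTBC instances driven by the rotating-coordinator loop---and to rely on the round-based properties already established for Algorithm~\ref{alg:RTBAB}. First I would observe that when the correct process $p_i$ invokes $\rtbabBcast{p_i}{m}$, it immediately triggers $\bcastOp{p_i}{\RTBABseq}{m}$ (lines~\ref{pistis-to-event}--\ref{pistis-to-rtbrb-bcast}). By RTBRB-Validity and RTBRB-Agreement, every correct process---in particular $p_i$ itself---eventually RTBRB-delivers $m$.

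Next I would follow $m$ into the $\delivOp{p_i}{\META{num}}{m}$ handler (lines~\ref{alg:rtbrbDelA}--\ref{alg:rtbrbDelB}). Since RTBRB-delivered messages from the same broadcaster are consumed in increasing sequence-number order---an out-of-order message is simply re-tried after $\deltaWait$ (line~\ref{alg:wait-deliv})---process $p_i$ eventually handles $m$ with $\META{num}=\RTBABnext[p_i]$. At that point either $m$ is already in $\RTBABdelivered$, in which case $p_i$ has already RTBAB-delivered $m$ (membership in that set is created exactly at line~\ref{alg:add-to-delivered} alongside the delivery) and we are done, or $m$ is appended to $\RTBABunordered[p_i]$ (line~\ref{alg:append-new-message}).

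The crux is then to show that a value resting in $\RTBABunordered[p_i]$ is eventually RTBAB-delivered, and here I would reuse the ordering argument from the proof of Theorem~\ref{theorem:RTBAB} together with Lemma~\ref{lem:pistis-to-rounds}. Because $p_i$ is the rotating leader infinitely often (once every $n$ instances), all correct processes RTBC-propose, at each such instance, either the next new value broadcast by $p_i$ or $\bot$. By Property~\ref{prop:rtbc-repropose}, a value that is not decided is faithfully re-proposed at the next instance in which $p_i$ leads, so the entries preceding $m$ in $\RTBABunordered[p_i]$ cannot block it indefinitely; once every value preceding $m$ has been RTBC-decided and removed, all correct processes RTBC-propose $m$, and by RTBC-Validity together with RTBC-Agreement they all RTBC-decide $m$. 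The decide handler (lines~\ref{alg:rtbc-decide}--\ref{alg:add-to-delivered}) then RTBAB-delivers $m$ at $p_i$, since $m\neq\bot$ and $m\notin\RTBABdelivered$.

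The main obstacle I anticipate is the head-of-queue liveness argument: ruling out that some earlier, never-decided value, or a persistent run of $\bot$-decisions, starves $m$ forever. I would close this gap by invoking RTBC-Termination (every instance eventually decides) in combination with Lemma~\ref{lem:pistis-to-rounds} and Property~\ref{prop:rtbc-repropose}, which together guarantee that $m$'s position in $\RTBABunordered[p_i]$ is non-increasing and strictly decreases each time a predecessor is decided, so $m$ reaches the head and is proposed by all correct processes after finitely many leader rotations.
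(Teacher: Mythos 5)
Your overall route is the same as the paper's: push $m$ through RTBRB into $\RTBABunordered[p_i]$, then use the rotating coordinator together with Lemma~\ref{lem:pistis-to-rounds} and the RTBC properties to argue that $m$ is eventually RTBC-decided and hence RTBAB-delivered. The problem is in the very step you flag as the main obstacle. You claim that RTBC-Termination, Lemma~\ref{lem:pistis-to-rounds} and Property~\ref{prop:rtbc-repropose} together guarantee that $m$ reaches the head of the queue and is proposed by all correct processes after finitely many leader rotations. They do not. RTBC-Termination only guarantees that each instance decides \emph{something}, and by RTBC-Validity that something may be $\bot$ whenever at least one correct process proposes $\bot$; re-proposal (Property~\ref{prop:rtbc-repropose}) then merely puts the same value back on the table. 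Nothing in the three properties you cite rules out the scenario in which, at every instance led by $p_i$, some correct process has not yet received the value currently at the head of the queue, proposes $\bot$, and $\bot$ is decided---forever. Your position-decrease induction is conditioned on ``each time a predecessor is decided,'' so it is circular: it presupposes exactly the liveness fact that is in question.

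The missing ingredient---and the one the paper's proof uses to close this case---is RTBRB-Agreement, which you invoke in your first paragraph but never deploy where it matters: since $p_i$ RTBRB-delivers $m$ and each of its predecessors, \emph{every} correct process eventually RTBRB-delivers them as well, so from some finite time onward all correct processes hold the same head value of $\RTBABunordered[p_i]$. At the next instance led by $p_i$ after that time, all correct processes propose that value, and the first clause of RTBC-Validity (with RTBC-Termination and RTBC-Agreement) forces it, not $\bot$, to be decided. Iterating over the finitely many predecessors, $m$ reaches the head and is decided, hence RTBAB-delivered by $p_i$. With that step made explicit, your argument coincides with the paper's proof.
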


\begin{proof}[Proof outline]
  By RTBRB-Validity, $p_i$ eventually delivers $m$ with sequence
  number $\META{num}$.  If $p_i$ has already delivered $m$, i.e.,
  $m\in\RTBABdelivered$, then we are done.  Otherwise, because $p_i$
  broadcasts messages monotonically (and without gaps), it will append
  $m$ to its list of unordered messages
  (line~\ref{alg:append-new-message} of Algorithm~\ref{alg:RTBAB}).
  Therefore, line~\ref{alg:exists-unordered} will be triggered until
  $m$ is removed from the list, as long as $p_i$ eventually resets
  $\RTBABbusy$ to $\FALSE$ once it has set it to $\TRUE$, which is
  true by RTBC-termination.  When finally $p_i$ is the leader of its
  current instance, say $\RTBABinst_1$, and that $m$ is at the head of
  $p_i$'s unordered list, $p_i$ will RTBC-propose $m$.  By
  RTBC-Validity, either all the correct processes RTBC-propose $m$, in
  which case $p_i$ delivers $m$; or some correct processes
  RTBC-propose values different from $m$.  As mentioned above, such
  proposed values must then be $\bot$, in which case $p_i$ might
  RTBC-decide $m$ or $\bot$.  Again as mentioned above, if $p_i$ does
  not deliver $m$, it will again either decide $m$ or $\bot$ at the
  next instance where it is the leader.  Because by RTBAB-Agreement,
  all correct processes eventually receive $m$, it must be that
  eventually, $p_i$ RTBC-decides $m$ for an instance where it is the
  leader, and in turn RTBAB-deliver $m$.
\end{proof}

\begin{lemma}[RTBAB-No duplication]
  No message is delivered more than once.
\end{lemma}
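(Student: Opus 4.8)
The plan is to reduce the statement to a simple monotonicity invariant on the set $\RTBABdelivered$ maintained by each correct process running Algorithm~\ref{alg:RTBAB}. First I would observe that the \emph{only} statement that triggers $\rtbabDelivM$ is the one guarded by the membership test on line~\ref{alg:check-if-delivered}, inside the $\rtbcDecideSYMB$ handler (line~\ref{alg:rtbc-decide}). Hence every RTBAB-delivery of a message $m$ at a correct process $p_i$ is immediately preceded by the guard $m\notin\RTBABdelivered \wedge m\neq\bot$ evaluating to true, and is accompanied by the insertion on line~\ref{alg:add-to-delivered}, which adds $m$ to $\RTBABdelivered$ before (or together with) the delivery.

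Second, I would establish the invariant that $\RTBABdelivered$ only ever grows at a correct process: the only statements touching it are its initialization to $\emptyset$ in the $\rtbabInit{\mbox{rtbab}}$ handler and the union on line~\ref{alg:add-to-delivered}; no line removes elements from it. Consequently, once $m\in\RTBABdelivered$ holds at $p_i$, it holds forever thereafter. (Recall that messages are uniquely identified, so the set membership $m\in\RTBABdelivered$ is well defined and cannot be confused across broadcast instances.)

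Third, I would combine the two facts by contradiction. Suppose $p_i$ RTBAB-delivers $m$ twice. At the first delivery, line~\ref{alg:add-to-delivered} places $m$ in $\RTBABdelivered$. At the strictly later second delivery, the guard on line~\ref{alg:check-if-delivered} must again evaluate $m\notin\RTBABdelivered$ to true; but by the monotonicity invariant $m\in\RTBABdelivered$ holds at that time---a contradiction. Hence no correct process delivers $m$ more than once.

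The main obstacle is not the core argument, which is essentially a guard-plus-monotonicity observation, but rather the bookkeeping of confirming that \emph{every} code path reaching $\rtbabDelivM$ passes through both the guard and the insertion, and that the retry/recovery logic cannot bypass them. In particular, I would verify that a re-triggered $\rtbcDecideSYMB$ event (the $\wait{\deltaWait}$ re-issue on line~\ref{alg:wait-decide}, used when $\RTBABinst'\neq\RTBABinst$) still funnels through the same guarded block once it is finally processed with $\RTBABinst'=\RTBABinst$, so that a decision re-delivered after a wait cannot escape the check. Confirming this single funnel point is what makes the invariant argument airtight.
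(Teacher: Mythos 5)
Your proof is correct and takes essentially the same approach as the paper's: the paper's (much terser) proof outline rests on exactly the two facts you identify, namely the membership guard on line~\ref{alg:check-if-delivered} and the insertion into $\RTBABdelivered$ on line~\ref{alg:add-to-delivered}. Your additional bookkeeping (the monotonicity invariant on $\RTBABdelivered$ and the check that the $\wait{\deltaWait}$ retry path on line~\ref{alg:wait-decide} still funnels through the guarded block) simply makes rigorous what the paper dismisses as trivial.
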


\begin{proof}[Proof outline]
  This property straightforwardly follows trivially from the fact that
  delivered values are added to the $\RTBABdelivered$ set
  line~\ref{alg:add-to-delivered}, and from the fact that a process
  always checks whether it has delivered a message $m$ before
  delivering $m$ (see line~\ref{alg:check-if-delivered}).
\end{proof}

\begin{lemma}[RTBAB-Integrity]
  If some correct process delivers a message $m$ with initial sender $p_i$ and
  process $p_i$ is correct, then $m$ was previously broadcast by
  $p_i$.
\end{lemma}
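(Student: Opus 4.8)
The plan is to trace the delivery of $m$ downward through the layers of Algorithm~\ref{alg:RTBAB}, from the RTBAB layer to the RTBRB layer, and then invoke the integrity guarantees of the underlying primitives in reverse. First I would observe that a correct process RTBAB-delivers a message only inside the $\rtbcDecide{p_i}{\RTBABinst'}{\META{decided}}$ event, by triggering $\rtbabDeliv{\leader{\RTBABinst}}{\META{decided}}$. Hence, if a correct process RTBAB-delivers $m$ with initial sender $p_i$, it must be that $m=\META{decided}\neq\bot$ was RTBC-decided at some instance $\RTBABinst$ whose leader is $p_i$, i.e., $p_i=\leader{\RTBABinst}$.

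Next I would climb back through RTBC. Since $m\neq\bot$ was RTBC-decided at $\RTBABinst$, RTBC-Validity guarantees that $m$ was RTBC-proposed by some correct process at that instance. Then, by Lemma~\ref{lem:pistis-to-rounds}, the only non-$\bot$ values that correct processes RTBC-propose at an instance led by $p_i$ are values they received from $p_i$; concretely, such a value is taken from the $\RTBABunordered[p_i]$ list, which (by line~\ref{alg:append-new-message}) can only be populated through a $\delivOpM$ event with sender $p_i$. Thus some correct process must have RTBRB-delivered $m$ with sender $p_i$.

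Finally, because $p_i$ is correct, RTBRB-Integrity applies and yields that $m$ was previously RTBRB-broadcast by $p_i$. To close the loop at the RTBAB layer, I would use the structure of the algorithm: a correct process only RTBRB-broadcasts a message inside the $\rtbabBcast{p_i}{m}$ event (line~\ref{pistis-to-rtbrb-bcast}), so $m$ must have been RTBAB-broadcast by $p_i$, which is exactly the conclusion required. The main obstacle I expect is step two: one must carefully match the ``initial sender'' label to $\leader{\RTBABinst}$ and verify, via the sequence-number bookkeeping in the $\delivOpM$ event, that $\RTBABunordered[p_i]$ is populated \emph{only} by RTBRB-deliveries whose sender is $p_i$. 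Once that bookkeeping is pinned down, the remainder is a routine chaining of RTBC-Validity, Lemma~\ref{lem:pistis-to-rounds}, and RTBRB-Integrity.
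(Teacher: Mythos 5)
Your proof is correct, and its skeleton matches the paper's: RTBAB-delivery happens only in the RTBC-decide handler, so $m\neq\bot$ was RTBC-decided at an instance led by $p_i$; from there one descends to an RTBRB-delivery with sender $p_i$ and concludes with RTBRB-Integrity. The two arguments differ in the middle bridge, from RTBC-decision to RTBRB-delivery. The paper localizes everything at $p_i$ itself: by RTBC-Agreement and RTBC-Termination the correct process $p_i$ also RTBC-decides $m$, and the paper then asserts that $m$ must therefore have been in $p_i$'s own $\RTBABunordered$ list, so that $p_i$ itself RTBRB-delivered $m$. You instead localize at \emph{some} correct proposer: by RTBC-Validity a non-$\bot$ decided value was RTBC-proposed by a correct process, and by Lemma~\ref{lem:pistis-to-rounds} any non-$\bot$ proposal at an instance led by $p_i$ is drawn from a $\RTBABunordered[p_i]$ list, which (line~\ref{alg:append-new-message}) is populated only by RTBRB-deliveries with sender label $p_i$. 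Your variant is arguably the more careful one: the paper's claim that the decider $p_i$ must have held $m$ in its own list does not follow from deciding alone (a process can decide a value it never proposed, e.g.\ if it proposed $\bot$ while others proposed $m$), whereas your chain of RTBC-Validity plus Lemma~\ref{lem:pistis-to-rounds} needs no such step. You also make explicit two points the paper leaves implicit: that the ``initial sender'' label equals $\leader{\RTBABinst}$ because delivery is triggered as $\rtbabDeliv{\leader{\RTBABinst}}{\META{decided}}$, and that RTBRB-Integrity only yields an RTBRB-broadcast, which must then be tied back to an RTBAB-broadcast via line~\ref{pistis-to-rtbrb-bcast}. What the paper's route buys is brevity; what yours buys is that every step is discharged by a stated property or a line of Algorithm~\ref{alg:RTBAB}.
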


\begin{proof}[Proof outline]
  First of all, the RTBAB-delivered value $m$ (which must be different
  from $\bot$) with sender $p_i$ (i.e., such that $p_i$ is the leader
  of the current instance) must have been RTBC-decided upon.  By
  RTBC-Agreement and RTBC-Termination, it must be that the correct
  sender $p_i$ has also RTBC-decided upon $m$.  It must be that $m$
  was it $p_i$'s own $\RTBABunordered$ list.  Therefore, it must be
  that $p_i$ RTBRB-delivered $m$.  Finally, by RTBRB-Integrity, it
  must be that $p_i$ previously broadcasted $m$.
\end{proof}

\begin{lemma}[RTBAB-Agreement]
  If some message $m$ is delivered by any correct process, then every
  correct process eventually delivers $m$.
\end{lemma}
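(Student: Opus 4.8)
The plan is to reduce RTBAB-delivery back to RTBC-decision and then exploit the agreement and termination guarantees of the underlying consensus. The key observation is that a correct process RTBAB-delivers a value $m$ (the $\rtbabDeliv{\leader{\RTBABinst}}{m}$ trigger in Algorithm~\ref{alg:RTBAB}) only as a consequence of RTBC-deciding $m$ at its current instance, with $m\neq\bot$ and $m\notin\RTBABdelivered$ at that moment. So if some correct process $p$ RTBAB-delivers $m$, there is a specific RTBC instance $\RTBABinst$ at which $p$ RTBC-decided $m$ (with leader $\leader{\RTBABinst}$). The goal is then to show that every correct process also RTBC-decides $m$ at that same instance, and that deciding $m$ forces each of them to have RTBAB-delivered $m$.

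First I would establish that every correct process reaches and decides at instance $\RTBABinst$. By Property~\ref{prop:rtbc-unique} (already shown to hold for Algorithm~\ref{alg:RTBAB}), correct processes propose exactly one value in every RTBC instance, monotonically and without gaps; combined with RTBC-Termination, each instance completes and the instance counter advances, so every correct process eventually participates in and decides at instance $\RTBABinst$. Next, by RTBC-Agreement, no two correct processes decide differently at $\RTBABinst$; since $p$ decided $m$ there, every correct process that decides at $\RTBABinst$ decides $m$ as well.

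The final step handles the delivered-set guard at line~\ref{alg:check-if-delivered}. When a correct process $q$ RTBC-decides $m$ at instance $\RTBABinst$, it either finds $m\notin\RTBABdelivered$---in which case it adds $m$ to $\RTBABdelivered$ (line~\ref{alg:add-to-delivered}) and triggers $\rtbabDeliv{\leader{\RTBABinst}}{m}$---or it finds $m\in\RTBABdelivered$, which, since entries are added to $\RTBABdelivered$ only at the instant of an RTBAB-delivery, can only be because $q$ had already RTBAB-delivered $m$ at some earlier point. Either way $q$ RTBAB-delivers $m$, which concludes the proof.

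I expect the main obstacle to be this second case of the guard: arguing that $m$ being already in $\RTBABdelivered$ at $q$ genuinely implies a prior RTBAB-delivery of the very same $m$ (rather than of an unrelated value sharing a leader or instance), and that this is consistent with $p$'s delivery. This is settled by observing that membership in $\RTBABdelivered$ is, by line~\ref{alg:add-to-delivered}, equivalent to past delivery, and that the monotonic, gapless handling of instances guaranteed by Property~\ref{prop:rtbc-unique} prevents any correct process from skipping the instance that decides $m$ without having decided---and hence delivered---$m$ itself.
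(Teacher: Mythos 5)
Your proof is correct and takes essentially the same route as the paper's: both reduce an RTBAB-delivery of $m$ to an RTBC-decision of $m$ at some instance $\RTBABinst$, then use RTBC-Agreement and RTBC-Termination to conclude that every correct process decides $m$ at that same instance, and finally argue that deciding $m$ entails RTBAB-delivering it. The only divergences are minor: where the paper invokes RTBRB-Agreement to justify that every correct process keeps advancing through instances and eventually reaches $\RTBABinst$, you instead cite Property~\ref{prop:rtbc-unique} (gapless, monotonic participation) together with RTBC-Termination, and you make explicit the $m\in\RTBABdelivered$ branch of the guard at line~\ref{alg:check-if-delivered}, which the paper's outline leaves implicit.
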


\begin{proof}[Proof outline]
  Let $p_i$ be the process that RTBAB-delivered $m$ at instance
  $\RTBABinst$, such that $p_l$ is the leader of that instance.  This
  delivered value must be different from $\bot$, and must have been
  RTBC-decided upon.  By RTBC-Agreement and RTBC-Termination, it must
  be that all correct processes eventually RTBC-decide $m$ as well.
  Let $p_j$ be one such correct process.  We have to prove that $p_j$
  RTBAB-delivers $m$ also at instance $\RTBABinst$.  By
  RTBRB-Agreement, it must be that $p_j$ eventually receives the same
  broadcasts as $p_i$, among other things, those for which $p_l$ is
  the leader.  From RTBC-Agreement and RTBC-Termination, it must be
  that all correct processes eventually decide the same values for
  each RTBC instance.  Therefore, $p_j$ will eventually reach instance
  $\RTBABinst$, and will therefore also RTBAB-deliver $m$.
\end{proof}

\begin{lemma}[variable-dependent RTBAB-Timeliness]
  \label{lem:pre-RTBAB-Timeliness}
  There exists a known $\deltaRTBAB$ such that if a correct process
  $p_i$ broadcasts $m$ at time $t$, no correct process delivers $m$
  after real~time~$t+\deltaRTBAB$, where $\deltaRTBAB$ depends on
  $\RTBABseq_{t}$, the current sequence number at the time $m$ is
  broadcasted.
\end{lemma}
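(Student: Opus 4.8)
The plan is to chain together the round-based properties already established for Algorithm~\ref{alg:RTBAB}, but this time \emph{without} invoking Assumption~\ref{assump:bcast-one-at-a-time}, so that the dependence on $\RTBABseq_{t}$ is retained. First I would observe that since $p_i$ is correct, the call to $\rtbabBcast{p_i}{m}$ immediately triggers $\bcastOp{p_i}{\META{num}}{m}$ (line~\ref{pistis-to-rtbrb-bcast}), where $\META{num}=\RTBABseq_{t}$ is the sequence number current at time $t$; thus $m$ is RTBRB-broadcast at time $t$ itself, and the $\deltaB$ of Property~\ref{prop:rtbab-bcast-rtbrb} is null here.

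Next I would reuse the variable-dependent bound from the proof that Algorithm~\ref{alg:RTBAB} satisfies Property~\ref{prop:rtbrb-deliv-rtbc-prop} (its Case~(3)). The key point is the in-order acceptance of RTBRB-deliveries in the event of lines~\ref{alg:rtbrbDelA}--\ref{alg:rtbrbDelB}: a correct process appends $m$ to $\RTBABunordered[p_i]$ only once $\META{num}=\RTBABnext[p_i]$, i.e.\ only after it has processed all of $p_i$'s earlier broadcasts with sequence numbers $0,\dots,\META{num}-1$. Because each of those is RTBRB-delivered within $\deltaRTBRB+\deltaWait$ (RTBRB-Validity, RTBRB-Timeliness and RTBRB-Agreement together with the retry at line~\ref{alg:wait-deliv}), every correct process stores $m$ in $\RTBABunordered[p_i]$ by $t+(\deltaRTBRB+\deltaWait)\times(\META{num}+1)$, and, following the same rotating-coordinator argument, RTBC-proposes $m$ by $t+(\META{num}+1)\times\bigl((\deltaRTBRB+\deltaWait)+n\times(\deltaRTBC+\deltaWait)\bigr)$, where $n$ is the number of processes.

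Then I would invoke the reduction argument used at the top of this appendix: once all correct processes RTBC-propose $m$ (possibly at different instances), Lemma~\ref{lem:pistis-to-rounds} together with Property~\ref{prop:rtbc-prop-val-or-bot}, Property~\ref{prop:rtbc-repropose} and RTBC-Validity guarantee that there is an instance $\RTBABinst_{m}$ at which enough correct processes propose $m$ for it to be RTBC-decided rather than ending in $\bot$. By RTBC-Termination, RTBC-Agreement and RTBC-Timeliness all correct processes then RTBC-decide $m$ within a further $\deltaRTBC$, and by Property~\ref{prop:rtbc-deliv-rtbab} they RTBAB-deliver $m$ within an additional $\deltaD$. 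Summing the three stages yields a $\deltaRTBAB$ of the form $(\META{num}+1)\times\bigl((\deltaRTBRB+\deltaWait)+n\times(\deltaRTBC+\deltaWait)\bigr)+\deltaRTBC+\deltaD$, which is known and finite but depends on $\RTBABseq_{t}=\META{num}$, exactly as claimed.

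The main obstacle is the second stage. The delicate part is showing that the $(\META{num}+1)$ factor really does bound the time for $m$ to percolate to the front of its $\RTBABunordered[p_i]$ queue and be RTBC-proposed: one must argue both that $p_i$'s own earlier messages are drained in FIFO order within $\deltaRTBRB+\deltaWait$ each, and that the rotating-leader schedule cannot starve $m$ indefinitely, since every full rotation of $n$ RTBC instances (each lasting at most $\deltaRTBC+\deltaWait$) advances $p_i$'s queue by at least one decided value, by Lemma~\ref{lem:pistis-to-rounds}. Interleaving broadcasts from other senders must also be accounted for, but they occupy disjoint $\RTBABunordered$ lists and are consumed on their own leaders' turns, so they contribute nothing to $m$'s bound beyond the single $n\times(\deltaRTBC+\deltaWait)$ rotation term already counted.
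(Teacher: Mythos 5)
Your proof is correct and follows essentially the same route as the paper's: in-order acceptance of RTBRB-deliveries with $\deltaWait$ retries yields the $(\RTBABseq_t+1)\times(\deltaRTBRB+\deltaWait)$ term, and the rotating-coordinator argument yields the $(\RTBABseq_t+1)\times n\times(\deltaRTBC+\deltaWait)$ term, exactly as in the paper's proof. The only difference is that you conservatively append a trailing $\deltaRTBC+\deltaD$ for the final decide-and-deliver step, which the paper instead folds into its $(\RTBABseq_t+1)$ rotation count; both bounds are valid, known, and depend on $\RTBABseq_t$ as the lemma requires.
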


\begin{proof}[Proof outline]
  Timeliness follows from RTBRB-Timeliness and RTBC-Timeliness, as
  well as of the fact that Algorithm~\ref{alg:RTBAB} rotates through
  the processes (processes might have to wait a full rotation before
  they get a chance to decide on a messages that was
  RTBAB-broadcasted).
  Let $\deltaRTBC$ be the time it takes for all correct processes to
  decide on a value using RTBC (see RTBC-Timeliness).
  Let $\deltaRTBRB$ be the time it takes for all correct processes to
  deliver a message using RTBRB (which exists by RTBRB-Timeliness).
  Assume that $p_i$ assigns the sequence number $\RTBABseq_t$ with
  the message $m$.
  As mentioned above, we assume that $p_i$ RTBAB-broadcasts $m$ at
  time $t$.
  Because correct processes might still be RTBRB-delivering messages
  when they gets the RTBRB-deliver message for $m$, they might not be
  able to RTBRB-deliver $m$ right away (it might be that
  $\RTBABseq_t>\RTBABnext[p_i]$).  However, we are guaranteed that all
  correct processes will have delivered $m$ by time
  $T_1=t+((\deltaRTBRB+\deltaWait)\times(\RTBABseq_t+1))$ (where
  $\deltaWait$ is the time processes wait for before re-trying to
  handle a message---see line~\ref{alg:wait-deliv} and
  line~\ref{alg:wait-decide}).
  Note that at that time, processes might be RTBAB-delivering other
  messages broadcasted by other processes than $p_i$.  Also, there
  might already be some messages from $p_i$ to RTBAB-deliver before
  $m$ (all those with sequence numbers less than $\RTBABseq_t$).  In
  case $p_i$ is currently not the leader, it might have to
  wait a full rotation through the processes to get a chance to be the
  leader again.  Given the fact that all correct processes have $m$ in
  their $\RTBABunordered$ list by time $T_1$, a full rotation will take at most
  ${n}\times(\deltaRTBC+\deltaWait)$.
  Because processes might have to process $\RTBABseq_t$ messages from $p_i$
  before they get a chance to process $m$, it follows that $m$ will be
  RTBAB-delivered by
  $t+((\deltaRTBRB+\deltaWait)\times(\RTBABseq_t+1))+({n}\times(\deltaRTBC+\deltaWait)\times(\RTBABseq_t+1))$.
\end{proof}

As mentioned in Def.~\ref{def:rtbab}, the RTBAB timeliness bound is
different from the RTBRB one. $\deltaRTBAB$ is the RTBAB bound, while
$\deltaRTBRB$ is the RTBRB bound.

\begin{lemma}[RTBAB-Timeliness]
  Under Assumption~\ref{assump:bcast-one-at-a-time}, there exists a
  known $\deltaRTBAB$ such that if a correct process $p_i$ broadcasts
  $m$ at time $t$, no correct process delivers $m$ after
  real~time~$t+\deltaRTBAB$.
\end{lemma}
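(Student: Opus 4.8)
The plan is to derive the variable-independent bound directly from the variable-dependent bound already established in Lemma~\ref{lem:pre-RTBAB-Timeliness}, using Assumption~\ref{assump:bcast-one-at-a-time} to collapse the $(\RTBABseq_t+1)$ factor. Recall that the variable-dependent proof shows $m$ is RTBAB-delivered by $t+(\deltaRTBRB+\deltaWait)(\RTBABseq_t+1)+n(\deltaRTBC+\deltaWait)(\RTBABseq_t+1)$, where the $(\RTBABseq_t+1)$ factor arises solely from the potential backlog of $p_i$'s earlier messages (sequence numbers $0,\dots,\RTBABseq_t-1$) that must be RTBRB-delivered in order and RTBC-decided before $m$ can be processed.

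The key step is to observe that Assumption~\ref{assump:bcast-one-at-a-time} forces $p_i$ to wait $\deltaRTBRB+\deltaWait+n(\deltaRTBC+\deltaWait)$ between successive broadcasts, which---as noted just after the assumption and used in the proof supporting Property~\ref{prop:rtbrb-deliv-rtbc-prop}---is exactly the time it takes for all correct processes to fully RTBAB-deliver a single RTBRB-broadcasted value. I would therefore argue that at the instant $m$ is broadcast, every earlier message from $p_i$ has already been RTBAB-delivered by all correct processes, so that at each correct process $\RTBABunordered[p_i]$ contains no earlier $p_i$-message and $\RTBABnext[p_i]=\RTBABseq_t$. Two consequences follow. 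First, when $m$ is RTBRB-delivered its sequence number matches $\RTBABnext[p_i]$ everywhere, so no re-try through line~\ref{alg:wait-deliv} is needed and $m$ is appended immediately; this replaces the $(\RTBABseq_t+1)$ multiplier of the $(\deltaRTBRB+\deltaWait)$ term by $1$. Second, since all prior $p_i$-messages have been removed, $m$ sits at the head of $\RTBABunordered[p_i]$ as soon as it is appended, so it is RTBC-proposed at $p_i$'s very next turn as leader, reached within one full rotation; this replaces the $(\RTBABseq_t+1)$ multiplier of the rotation term by $1$.

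Combining these, the bound collapses to the variable-independent value $\deltaRTBAB=(\deltaRTBRB+\deltaWait)+n(\deltaRTBC+\deltaWait)$. Once $m$ is proposed at an instance led by $p_i$ with $m$ at the head, RTBC-Termination, RTBC-Agreement, RTBC-Timeliness together with Property~\ref{prop:rtbc-deliv-rtbab} guarantee $m$ is RTBC-decided and RTBAB-delivered within this bound, so no correct process delivers $m$ after $t+\deltaRTBAB$, as required.

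I expect the main obstacle to be justifying that the rotation through leaders still completes within $n(\deltaRTBC+\deltaWait)$ despite concurrent broadcasts by other processes populating their own $\RTBABunordered[p_k]$ lists. The clean way around this is to note that the leader schedule is deterministic ($\leader{\RTBABinst}=\RTBABinst \bmod n$) and, by Property~\ref{prop:rtbc-prop-val-or-bot} and RTBC-Timeliness, each RTBC instance terminates in bounded time whether a genuine value or $\bot$ is decided; hence one full rotation costs at most $n(\deltaRTBC+\deltaWait)$ irrespective of what other processes have queued. A secondary point to verify carefully is that the assumption's waiting period is applied per broadcaster and is long enough to truly empty $p_i$'s backlog at all correct processes before the next broadcast---which is precisely the content of Lemma~\ref{lem:pre-RTBAB-Timeliness}'s bound specialized to a single message.
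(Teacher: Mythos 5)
Your proposal is correct and follows essentially the same route as the paper: the paper's own proof is a one-sentence appeal to Assumption~\ref{assump:bcast-one-at-a-time} combined with ``a proof similar to'' Lemma~\ref{lem:pre-RTBAB-Timeliness}, yielding exactly your bound $\deltaRTBAB=\deltaRTBRB+\deltaWait+n(\deltaRTBC+\deltaWait)$. Your write-up simply makes explicit what the paper leaves implicit---that the inter-broadcast waiting period empties the backlog so both $(\RTBABseq_t+1)$ multipliers collapse to $1$, with the base case supplied by the variable-dependent lemma at sequence number $0$.
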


\begin{proof}[Proof outline]
  Using Assumption~\ref{assump:bcast-one-at-a-time} and a proof
  similar to the one of Lemma~\ref{lem:pre-RTBAB-Timeliness}, we
  derive that messages RTBAB-broadcasted at time $t$ are
  RTBAB-delivered by
  $t+(\deltaRTBRB+\deltaWait+({n}\times(\deltaRTBC+\deltaWait)))$.
\end{proof}

As mentioned in Def.~\ref{def:rtbab}, in addition to the RTBRB
properties, RTBAB also include a total order property.

\begin{lemma}[RTBAB-Total order]
  Let $m_1$ and $m_2$ be any two messages and suppose that $p_i$ and
  $p_j$ are any two correct processes that deliver $m_1$ and $m_2$. If
  $p_i$ delivers $m_1$ before $m_2$, then $p_j$ delivers $m_1$
  before~$m_2$.
\end{lemma}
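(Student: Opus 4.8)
The plan is to exploit the rotating-coordinator structure of Algorithm~\ref{alg:RTBAB}, in which every RTBAB-delivered message is pinned to a unique RTBC instance, together with the fact that all correct processes agree, instance by instance, on what is decided. First I would observe that a correct process RTBAB-delivers a message $m$ (necessarily different from $\bot$) only when it RTBC-decides $m$ at its current instance and $m\notin\RTBABdelivered$ (see lines~\ref{alg:rtbc-decide}--\ref{alg:add-to-delivered}, guarded by line~\ref{alg:check-if-delivered}). Since $m$ is inserted into $\RTBABdelivered$ upon delivery and the guard prevents any later re-delivery, each delivered message is associated with exactly one RTBC instance, namely the instance at which it is first RTBC-decided with a non-$\bot$ value.

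Next I would argue that a correct process delivers its messages in strictly increasing order of these instance numbers. By Property~\ref{prop:rtbc-unique}, correct processes run RTBC instances one at a time, never in parallel, and increment $\RTBABinst$ by one each time an instance completes; hence deliveries occurring later in time necessarily take place at strictly larger instance numbers. Writing $\RTBABinst_1$ and $\RTBABinst_2$ for the instances at which $p_i$ delivers $m_1$ and $m_2$ respectively, the hypothesis that $p_i$ delivers $m_1$ before $m_2$ yields $\RTBABinst_1<\RTBABinst_2$.

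The key cross-process step is to show that $p_j$ associates the same instances with $m_1$ and $m_2$. By RTBC-Agreement and RTBC-Termination, all correct processes decide the same value at each RTBC instance; therefore $p_j$ also RTBC-decides $m_1$ at instance $\RTBABinst_1$ and $m_2$ at instance $\RTBABinst_2$, and (since the lemma assumes $p_j$ delivers both) it RTBAB-delivers them at those same instances. Because $p_j$, like every correct process, delivers in increasing instance order and $\RTBABinst_1<\RTBABinst_2$, it follows that $p_j$ delivers $m_1$ before $m_2$, as required.

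The hard part will be justifying that the delivery instance of a message is well-defined and agreed upon by all correct processes, in particular that a single value is never decided-and-delivered at two different instances. This is exactly what the $\RTBABdelivered$ bookkeeping on line~\ref{alg:check-if-delivered} guarantees: once delivered, a value is filtered out of all subsequent RTBC proposals (it is removed from $\RTBABunordered$ and never re-appended), so the ``first non-$\bot$ decision'' instance is unique, and by RTBC-Agreement it is the same instance for every correct process. Establishing this uniqueness carefully, appealing to Lemma~\ref{lem:pistis-to-rounds} for the shape of the proposals at each instance, is the crux of the argument; the ordering conclusion then follows immediately.
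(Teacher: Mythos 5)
Your proposal is correct and follows essentially the same route as the paper's proof: associate each delivered message with the RTBC instance at which it is decided, use RTBC-Agreement and RTBC-Termination to conclude that $p_j$ decides $m_1$ and $m_2$ at the same instances $\RTBABinst_1<\RTBABinst_2$ as $p_i$, and use the monotonic, one-at-a-time processing of instances to translate instance order into delivery order. Your explicit justification of the uniqueness of the delivery instance via the $\RTBABdelivered$ bookkeeping is a detail the paper leaves implicit (it defers to ``a similar argument as in the proof of RTBAB-Agreement''), but it is the same underlying argument.
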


\begin{proof}[Proof outline]
  Because $p_i$ RTBAB-delivers $m_1$ before $m_2$, it must have
  RTBC-decided $m_1$ at an instance $\RTBABinst_1$ and $m_2$ at an
  instance $\RTBABinst_2$ such that $\RTBABinst_1<\RTBABinst_2$.  By
  RTBC-Agreement and RTBC-Termination, $p_j$ must also have
  RTBC-decided $m_1$ at $\RTBABinst_1$ and $m_2$ at $\RTBABinst_2$.
  Using a similar argument as in the proof of RTBAB-Agreement, we
  derive that $p_j$ must then also have RTBAB-delivered $m_1$ at
  instance $\RTBABinst_1$ and $m_2$ at $\RTBABinst_2$.
\end{proof}

\section{Evaluation Using Number of Messages Sent}
\label{sec:appendixmsgs}

\begin{figure}[h]
  \begin{center}
    \includegraphics[width=0.40\textwidth]{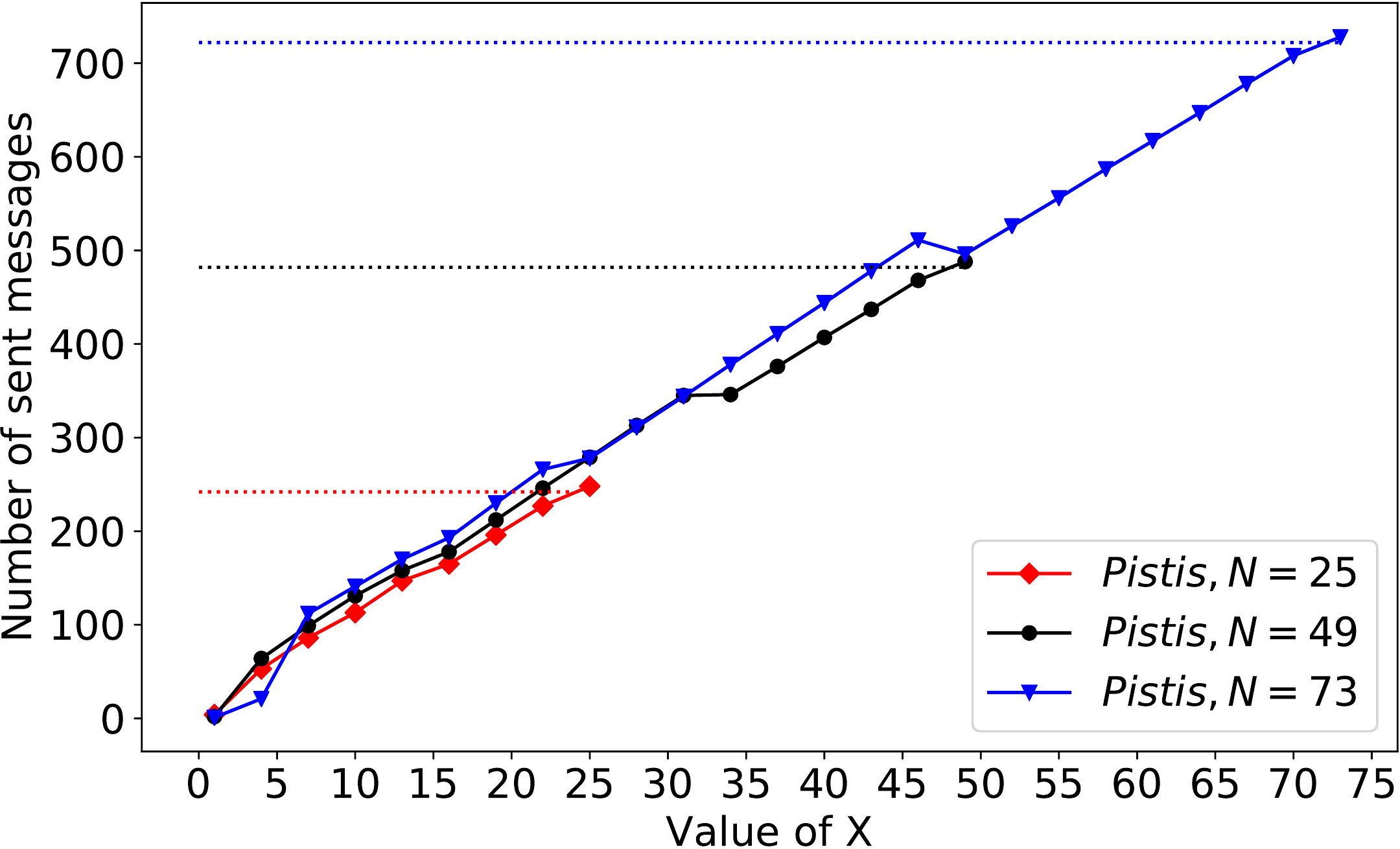}
  \end{center}
  \vspace*{-10pt}
  \caption{Average number of messages transmitted per node with a 1ms link latency, with system sizes equal to 25, 49 and 73 for Pistis and RT-ByzCast.}
  \label{fig:msgs}
\end{figure}

To complement the bandwidth consumption evaluation that was previously
reported, Fig.~\ref{fig:msgs} presents the number of messages
transmitted using either Pistis or RT-ByzCast. We considered systems
containing 25, 49 and 73 nodes (i.e., 3f+1 for f equals to 8, 16 and
24). We used a 1ms network latency and 1B messages. RT-ByzCast's
values are reported with dashed horizontal lines. One can see that
Pistis sends less messages when the value of $X$ decreases. In
addition, PISTIS always sends less messages than RT-ByzCast. In
particular, PISTIS and RT-ByzCast approximately send the same number
of messages when $X=3f+1$. These results are consistent with the
bandwidth consumption results reported in
Sec.~\ref{sec:PISTIS-overhead}, and which therefore indicate that the
main reason behind Pistis' lower bandwidth consumption is a smaller
number of messages exchanged.

\printbibliography[heading=subbibliography]
\end{refsection}

\else%
\fi

\end{document}